\newtheorem{theorem}{Theorem}
\newtheorem{lemma}{Lemma}
\newtheorem{proposition}{Proposition}
\newcommand{\bvec}[1]{\mbox{\boldmath $#1$}}
\title{Discrete-time analysis for the integrable discrete Toda equations and the discrete Lotka-Volterra system}
\author[1]{Masato Shinjo \thanks{mshinjo@amp.i.kyoto-u.ac.jp}}
\author[1]{Yoshimasa Nakamura}
\author[2]{Masashi Iwasaki}
\author[3]{Koichi Kondo}
\affil[1]{
              Graduate School of Informatics, Kyoto University, 
              Yoshida-Hommachi, Sakyo-ku, Kyoto 606-8501, Japan}
\affil[2]{
              Faculty of Life and Environmental Sciences, 
              Kyoto Prefectural University,
              1-5, Nakaragi-cho, Shimogamo, Sakyo-ku, 
              Kyoto 606-8522, Japan}
\affil[3]{
              Graduate School of Science and Engineering,
              Doshisha University,
              1-3, Tatara miyakodani, Kyotanabe, 
              Kyoto 610-0394, Japan}              
\date{}
\begin{document}
\maketitle

%----------- abstract -------------
\begin{abstract}%
The discrete autonomous/non-autonomous Toda equations and the discrete Lotka-Volterra system are important  integrable discrete systems in fields such as mathematical physics, mathematical biology and statistical physics.
They also have applications to numerical linear algebra. 
In this paper, we first simultaneously obtain their general solutions.
Then, we show the asymptotic behavior of the solutions for any initial values as the discrete-time variables go to infinity. 
Our two main techniques for understanding the distinct integrable systems are to introduce two types of discrete-time variables and to examine properties of a restricted infinite sequence, its associated determinants and polynomials.\\
\par
\noindent{\bf Keywords}: Discrete Toda equation, Discrete Lotka-Volterra system, Infinite sequence, Determinant solution,  Asymptotic behavior\\
\par
\noindent{\bf Mathematics Subject Classification (2010)}: 15A15, 15A18, 37K10, 37K40, 39A12, 40A05, 45M05, 65F15
\end{abstract}

%--------------------------------------------------------------------------------
%---------------------------- Sec.1 Introduction --------------------------
%--------------------------------------------------------------------------------
\section{Introduction}\label{intro}
Many integrable systems first appeared in mathematical physics and mathematical  biology for real-life problems. 
Time discretizations of such integrable systems can be useful for understanding physical and biological phenomena numerically. 
Discrete-time evolutions in integrable discrete systems also contribute to scientific computing as key components of numerical algorithms.
\par
The Toda equation is one of the most famous integrable systems, and describes the motion governed by a nonlinear spring \cite{Toda1981},
but studies have branched into, for example, nonlinear electric circuits \cite{Hirota1973}, explicit soliton solutions \cite{Nakamura1997}, and the connection with simple Lie algebra \cite{Bogoyavlensky1976}. 
A time-discretization of the Toda equation is
\begin{align}\label{adToda}
\left\{
\begin{array}{l}
\displaystyle q_{k}^{(s+1)}+e_{k-1}^{(s+1)}=q_{k}^{(s)}+e_{k}^{(s)},\quad k=1,2,\dots,m,\\[2pt]
\displaystyle q_{k}^{(s+1)}e_{k}^{(s+1)}=q_{k+1}^{(s)}e_{k}^{(s)},\quad k=1,2,\dots,m-1,\\[2pt]
\displaystyle e_{0}^{(s)}:=0,\quad e_{m}^{(s)}:=0,\\[2pt]
s=0,1,\dots,
\end{array}
\right.
\end{align}
where the subscripts $k$ and superscripts $s$ with parentheses are discrete-space and discrete-time variables, respectively.
Interestingly, the Toda equation and its time discretization both have relationships to well-known algorithms for computing eigenvalues of tridiagonal matrices. 
The time evolution in the Toda equation corresponds to the $1$-step of the $QR$ algorithm for tridiagonal matrix exponentials \cite{Symes1982}. 
The discrete Toda (dToda) equation \eqref{adToda} is equivalent to the recursion formula that generates the similarity $LR$ transformations of tridiagonal matrices in the quotient-difference (qd) algorithm \cite{Rutishauser1990}. 
The qd algorithm is also used to compute singular values of bidiagonal matrices.
\par
In the theory of orthogonal polynomials, the compatibility condition of a transformation for discrete-time evolution and its inverse yields the  following dToda equation with arbitrary constants $\mu^{(t)}$ and $\mu^{(t+1)}$ \cite{Hirota1997}:
\begin{align}\label{nadToda}
\left\{
\begin{array}{l}
\displaystyle Q_{k}^{(t+1)}+E_{k-1}^{(t+1)}+\mu^{(t+1)}=Q_{k}^{(t)}+E_{k}^{(t)}+\mu^{(t)}, 
\quad k=1,2,\dots,m,\\[2pt]
\displaystyle Q_{k}^{(t+1)}E_{k}^{(t+1)}=Q_{k+1}^{(t)}E_{k}^{(t)},\quad k=1,2,\dots,m-1,\\[2pt]
\displaystyle E_{0}^{(t)}:=0,\quad E_{m}^{(t)}:=0,\\[2pt]
 t=0,1,\dots.
\end{array}
\right.
\end{align}
We consider $t$ appearing in the superscripts in \eqref{nadToda} to be a different discrete-time variable from $s$ in the dToda equation \eqref{adToda}. 
Since $\mu^{(t)}$ and $\mu^{(t+1)}$ depend on the independent variable $t$, we refer to \eqref{nadToda} as the non-autonomous dToda equation, and refer to \eqref{adToda} as the autonomous dToda equation. 
If we set $s=t$ and $\mu^{(t)}=\mu^{(t+1)}=0$, then the non-autonomous dToda equation \eqref{nadToda} is equivalent to the autonomous dToda equation \eqref{adToda}.
The non-autonomous dToda equation \eqref{nadToda} also gives the $LR$ transformations with implicit shifts of tridiagonal matrices. 
In other words, \eqref{nadToda} can be considered a recursion formula of the shifted qd algorithm that may converge faster than the original algorithm. 
The constants $\mu^{(t)}$ and $\mu^{(t+1)}$ then correspond to the shifts that bring about the convergence acceleration.
\par
The integrable Lotka-Volterra (LV) system is a simple biological model describing a predator-prey interaction of several species. 
Aside from the discretization of the Toda equation, the discrete LV (dLV) system has discretization parameters $\delta^{(t)}$ and $\delta^{(t+1)}$ as follows:
\begin{align}\label{dLV}
\left\{
\begin{array}{l}
\displaystyle 
u_k^{(t+1)}(1+\delta^{(t+1)}u_{k-1}^{(t+1)})
=u_k^{(t)}(1+\delta^{(t)}u_{k+1}^{(t)}),\quad
k=1,2,\dots,2m-1,\\[2pt]
\displaystyle 
u_0^{(t)}:=0,\quad u_{2m}^{(t)}:=0,\quad\\[2pt]
 t=0,1,\dots.
\end{array}
\right.
\end{align}
Here, $u_k^{(t)}$ indicates density of the $k$th species at the discrete-time $t$.
The dLV system \eqref{dLV} also generates the $LR$ and shifted $LR$ transformations of 
positive-definite tridiagonal matrices for computing singular values of bidiagonal matrices \cite{Iwasaki2002,Iwasaki2004,Tsujimoto2001}.
\par
Several observations have been made regarding the asymptotic behavior of solutions 
to the autonomous dToda equation \eqref{adToda} and the dLV system \eqref{dLV}. 
The latter, however, has only been analyzed with positive initial values. 
These asymptotic behaviors have been individually shown using distinct frameworks in separate books \cite{Henrici,Rutishauser1990} and separate papers \cite{Iwasaki2002,Iwasaki2004,Tsujimoto2001}. 
Moreover, to the best of our knowledge, asymptotic analysis for the non-autonomous dToda equation \eqref{nadToda} has not been reported yet. 
In this paper, based on only an infinite sequence with respect to two types of discrete-time variables $s$ and $t$,
we report on the asymptotic behavior of all three systems.
\par
The remainder of this paper is organized as follows.
In Section~\ref{sec:2}, we first introduce an infinite sequence with respect to two types of discrete-time variables $s$ and $t$.
The infinite sequence is also associated with matrix eigenvalues. 
We then show properties of the Hankel determinants associated with the infinite sequence. 
In Sections~\ref{sec:3} and~\ref{sec:31}, we examine relationships between the Hankel determinants, 
their Hankel polynomials, and Hadamard polynomials with respect to $s$ and $t$. 
Referring to \cite{Maeda2013}, 
we simultaneously derive the autonomous dToda equation \eqref{adToda} and the non-autonomous dToda equation \eqref{nadToda}.
In Section~\ref{sec:4}, we observe eigenvalue problems of square matrices involving the dToda variables, 
and then express the determinant solutions with sufficient degrees of freedom to the dToda equations using the Hankel determinants. 
In Section~\ref{sec:5}, we present asymptotic analysis of the dToda equations by considering asymptotic expansion of the Hankel determinants as $s\rightarrow\infty$ or $t\rightarrow\infty$. 
In Section~\ref{sec:6}, by relating the non-autonomous dToda equation \eqref{nadToda} to the dLV system \eqref{dLV}, 
we describe the determinant solution to the dLV system \eqref{dLV} and its asymptotic behavior as $t\rightarrow\infty$. 
Finally, we conclude in Section~\ref{sec:7}.

%----------------------------------------------------------------------------
%-------------------------------- Sec.2 dToda --------------------------
%----------------------------------------------------------------------------
\section{Hankel determinants with two discrete-time variables}\label{sec:2}
In this section, we first introduce an infinite sequence $\{f_s^{(t)}\}_{s,t=0}^{\infty}$ with respect to two types of discrete-time variables $s$ and $t$, and then associate it with the Hankel determinants.
Next, we derive identities concerning the Hankel determinants 
under certain restrictions on the infinite sequence $\{f_s^{(t)}\}_{s,t=0}^{\infty}$.
\par
For arbitrary complex $\lambda_1,\lambda_2,\dots,\lambda_m$,
let us introduce an $m$-degree polynomial with respect to complex $z$:
\begin{align}\label{pz}
p(z)=(z-\lambda_1)(z-\lambda_2)\cdots(z-\lambda_m).
\end{align}
The polynomial \eqref{pz} can be regarded as the characteristic polynomial of $m$-by-$m$ matrices 
with eigenvalues $\lambda_1,\lambda_2,\dots,\lambda_m$.
Clearly, the polynomial $p(z)$ can be expanded as
\begin{align}\label{pza}
p(z)=z^m+a_1z^{m-1}+\dots+a_{m-1}z+a_{m},
\end{align}
where $a_1,a_2,\dots,a_n$ are complex constants.
\par
At each $t=0,1,\dots$, let $\{f_{s}^{(t)}\}_{s=0}^{\infty}$ be an infinite sequence 
whose entries satisfy the linear equation with coefficients $a_1,a_2,\dots,a_m$ appearing in \eqref{pza},
\begin{align}\label{fnmt}
f_{s+m}^{(t)}+a_1f_{s+m-1}^{(t)}+\dots+a_mf_{s}^{(t)}=0,\quad s=0,1,\dots,
\end{align}
where $f_0^{(t)},f_{1}^{(t)},\dots,f_{m-1}^{(t)}$ are arbitrary.
Moreover,
let the infinite sequence $\{f_s^{(t)}\}_{s,t=0}^{\infty}$ satisfy
\begin{align}\label{key}
f_{s}^{(t+1)}=f_{s+1}^{(t)}-\mu^{(t)}f_{s}^{(t)}, \quad s,t=0,1,\dots,
\end{align}
where $\{\mu^{(t)}\}_{t=0}^{\infty}$ is an arbitrary constant sequence. \par
We next consider the determinants of symmetric square matrices of order $k$:
\begin{align}\label{Hankel}
H_k^{(s,t)}:=
\left|\begin{array}{cccc}
f_{s}^{(t)}	&f_{s+1}^{(t)}	&\cdots&	f_{s+k-1}^{(t)}\\
f_{s+1}^{(t)}	&f_{s+2}^{(t)}	&\cdots&	f_{s+k}^{(t)}\\
\vdots    &\vdots      	&\vdots&\vdots\\
f_{s+k-1}^{(t)}	&f_{s+k}^{(t)}	&\cdots&	f_{s+2k-2}^{(t)}
\end{array}\right|,
\quad s,t=0,1,\dots,
\end{align}
where $H_{-1}^{(s,t)}:=0$ and $H_{0}^{(s,t)}:=1$.
The determinants $H_k^{(s,t)}$ are known as the Hankel determinants and 
can be used for expressing solutions to several integrable systems.\par
We now derive a proposition concerning the case $k=m+1$ in the Hankel determinants $H_k^{(s,t)}$. 
%
%----------------------------------------- Proposition 1 ---------------------------------------
%
\begin{proposition}\label{prop1}
It holds that
\begin{align}
H_{m+1}^{(s,t)}=0,\quad s,t=0,1,\dots.\label{taubc}
\end{align}
\end{proposition}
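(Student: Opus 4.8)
The plan is to show that the $(m+1)\times(m+1)$ Hankel matrix whose determinant defines $H_{m+1}^{(s,t)}$ has linearly dependent columns (or rows), forcing the determinant to vanish. The key observation is that the linear recurrence \eqref{fnmt} is precisely a linear relation among shifted entries of the sequence $\{f_s^{(t)}\}_{s=0}^\infty$ with fixed coefficients $a_1,\dots,a_m$, independent of $s$ (and of $t$).

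Concretely, I would write the Hankel matrix for $H_{m+1}^{(s,t)}$ with columns $C_0,C_1,\dots,C_m$, where the $j$-th column is
\begin{align*}
C_j=\bigl(f_{s+j}^{(t)},\,f_{s+j+1}^{(t)},\,\dots,\,f_{s+j+m}^{(t)}\bigr)^{\!\top},\quad j=0,1,\dots,m.
\end{align*}
Each entry of $C_m+a_1C_{m-1}+\dots+a_mC_0$ has the form $f_{s+i+m}^{(t)}+a_1f_{s+i+m-1}^{(t)}+\dots+a_mf_{s+i}^{(t)}$ for $i=0,1,\dots,m$; by the recurrence \eqref{fnmt} applied at index $s+i$ (which is a legitimate nonnegative index), every such entry is zero. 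Hence $C_m+a_1C_{m-1}+\dots+a_mC_0=\bvec{0}$, which is a nontrivial linear dependence among the columns (the coefficient of $C_m$ is $1\neq0$). Therefore the matrix is singular and its determinant $H_{m+1}^{(s,t)}$ is zero, for every $s,t=0,1,\dots$.

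I do not anticipate a genuine obstacle here; this is essentially a one-line column operation. The only point requiring a small amount of care is bookkeeping of indices: one must check that the recurrence \eqref{fnmt} is being invoked only at nonnegative indices $s+i$ with $i\ge 0$, which it is, since the matrix entries involved are $f_{s}^{(t)},\dots,f_{s+2m}^{(t)}$ and \eqref{fnmt} holds for all $s=0,1,\dots$. Note also that the relation \eqref{key} linking different values of $t$ plays no role in this particular proposition — the vanishing holds $t$ by $t$ purely from the order-$m$ recurrence, which is why the statement is uniform in $t$.
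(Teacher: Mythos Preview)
Your argument is correct and is essentially identical to the paper's own proof: the paper performs the column operation of adding $a_m$ times the first column, $a_{m-1}$ times the second, \dots, $a_1$ times the $m$th column to the $(m+1)$th column and then invokes \eqref{fnmt} to see the resulting last column is zero, which is exactly your linear-dependence relation $C_m+a_1C_{m-1}+\dots+a_mC_0=\bvec{0}$. Your side remarks about index validity and the irrelevance of \eqref{key} are accurate.
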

%----------------------------------------------------------------------------------------------------
\begin{proof}
By multiplying the $1$st, $2$nd, $\dots$, $m$th columns in the Hankel determinant $H_{m+1}^{(s,t)}$ 
by $a_m$, $a_{m-1}$, $\dots$, $a_1$, respectively, and by adding these results to the $(m+1)$th column,
we obtain
\begin{align}\label{taum+1st}
H_{m+1}^{(s,t)}=
\left|
\begin{array}{ccccc}
\bvec{f}_{m+1}^{(s,t)}&\bvec{f}_{m+1}^{(s+1,t)}&\cdots&\bvec{f}_{m+1}^{(s+m-1,t)}&
\displaystyle\sum_{\ell=0}^{m}a_{\ell}\bvec{f}_{m+1}^{(s+m-\ell,t)}\\
\end{array}
\right|,
\end{align}
where $a_0:=1$ and  $\bvec{f}_k^{(s,t)}:=(f_{s}^{(t)}, f_{s+1}^{(t)},\dots,f_{s+k-1}^{(t)})^{\top}$.
It is obvious from \eqref{fnmt} that
\begin{align}\label{fbc} 
\sum_{\ell=0}^m a_\ell\bvec{f}_{m+1}^{(s+m-\ell,t)}=0.
\end{align}
Thus, combining \eqref{taum+1st} with \eqref{fbc} gives \eqref{taubc}.
\end{proof}
\par
The following proposition gives evolutions with respect to $s$ and $t$ in the Hankel determinants $H_k^{(s,t)}$ associated 
with the restricted infinite sequence $\{f_s^{(t)}\}_{s=0}^{\infty}$.
%
%------------------------------------------ Proposition 2 -----------------------------------------------
%
\begin{proposition}\label{prop2}
For $k=0,1,\dots,m$, it holds that
\begin{align}
&H_k^{(s+2,t)}H_k^{(s,t)}=H_{k}^{(s+1,t)}H_{k}^{(s+1,t)}+H_{k+1}^{(s,t)}H_{k-1}^{(s+2,t)},
\quad s,t=0,1,\dots,\label{tauJacobin}\\
&H_{k}^{(s,t+2)}H_{k}^{(s,t)}=H_{k}^{(s,t+1)}H_{k}^{(s,t+1)}+H_{k+1}^{(s,t)}H_{k-1}^{(s,t+2)},
\quad s,t=0,1,\dots.\label{tauJacobit}
\end{align}
\end{proposition}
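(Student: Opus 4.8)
The plan is to recognize both identities in Proposition~\ref{prop2} as instances of the classical Jacobi (Desnanot--Jacobi, or Dodgson condensation) determinant identity applied to a Hankel matrix. The key observation is that the two claims have exactly the same shape: in \eqref{tauJacobin} the shift is in the discrete-space index $s$, while in \eqref{tauJacobit} the shift is in the discrete-time index $t$; but since the sequence $\{f_s^{(t+1)}\}_{s=0}^\infty$ is obtained from $\{f_s^{(t)}\}_{s=0}^\infty$ by the linear relation \eqref{key}, and since \eqref{key} is exactly what makes the $t$-shifted Hankel determinants behave like $s$-shifted ones, it will suffice to prove \eqref{tauJacobin} and then transfer it to \eqref{tauJacobit}.

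For \eqref{tauJacobin}, first I would fix $t$ and consider the $(k+1)\times(k+1)$ Hankel matrix
\[
M=\bigl(f_{s+i+j}^{(t)}\bigr)_{0\le i,j\le k},
\]
whose determinant is $H_{k+1}^{(s,t)}$. The Desnanot--Jacobi identity relates the determinant of $M$, the determinant of the central $(k-1)\times(k-1)$ minor obtained by deleting the first and last rows and columns, and the four $k\times k$ minors obtained by deleting one of the first/last rows together with one of the first/last columns:
\[
\det M \cdot \det M^{1,k+1}_{1,k+1}
= \det M^{1}_{1}\cdot \det M^{k+1}_{k+1} - \det M^{1}_{k+1}\cdot \det M^{k+1}_{1}.
\]
The point is that each of these five minors is again a Hankel determinant in the sequence $\{f_s^{(t)}\}$, just with a shifted starting index: deleting the first row and first column of $M$ gives $H_k^{(s+2,t)}$; deleting the last row and last column gives $H_k^{(s,t)}$; deleting the first row and last column, or the last row and first column, each gives $H_k^{(s+1,t)}$ (these two minors are transposes of one another, hence equal, which is why the right-hand side of \eqref{tauJacobin} has a square rather than a product of two different terms); and the central minor is $H_{k-1}^{(s+2,t)}$. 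Substituting these identifications, and rearranging signs, yields \eqref{tauJacobin} for $k=1,\dots,m$, while the cases $k=0$ (both sides reduce to $H_0 H_0 = H_0 H_0 + H_1 H_{-1}$, i.e.\ $1=1+0$ using $H_{-1}=0$) is checked directly from the conventions $H_{-1}^{(s,t)}:=0$, $H_0^{(s,t)}:=1$.

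To obtain \eqref{tauJacobit}, I would repeat the same Desnanot--Jacobi argument but now applied to the Hankel matrix built from the $(t+1)$-th row of the sequence, i.e.\ to $\bigl(f_{s+i+j}^{(t+1)}\bigr)_{0\le i,j\le k}$, reading the shift in the first superscript rather than in the subscript; the minor identifications go through verbatim with $(s,t)\mapsto(s,t)$, $(s+1,t)\mapsto(s,t+1)$, $(s+2,t)\mapsto(s,t+2)$. The only thing that needs justification is that the relation \eqref{key} really does let us treat $t\mapsto t+1$ as a ``shift'' compatible with Hankel structure: but since \eqref{key} expresses $f_s^{(t+1)}$ linearly in terms of $f_{s}^{(t)}$ and $f_{s+1}^{(t)}$, column operations on the Hankel matrix convert the $t+1$ determinant into an $s$-shifted combination, or, more cleanly, one simply notes that $\{f_s^{(t+1)}\}_{s}$ is itself a sequence to which the pure Desnanot--Jacobi argument of the previous paragraph applies without reference to its origin. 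I do not expect a genuine obstacle here; the only care needed is bookkeeping of indices and signs in the Desnanot--Jacobi formula and confirming the two off-diagonal $k\times k$ minors coincide so that the cross term appears as a square. If one prefers to avoid invoking Desnanot--Jacobi as a black box, the alternative is to prove it in this Hankel special case via the Plücker relations or by the standard adjugate computation, but citing the classical identity is cleanest and sufficient here.
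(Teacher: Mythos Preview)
Your argument for \eqref{tauJacobin} is correct and coincides with the paper's: both apply the Desnanot--Jacobi (Jacobi) identity to the $(k+1)\times(k+1)$ Hankel matrix $H_{k+1}^{(s,t)}$ and read off the four corner minors and the central minor as the indicated Hankel determinants.

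Your treatment of \eqref{tauJacobit}, however, has a genuine gap. Applying Desnanot--Jacobi to the matrix $\bigl(f_{s+i+j}^{(t+1)}\bigr)_{0\le i,j\le k}$ does \emph{not} produce \eqref{tauJacobit}; it simply reproduces \eqref{tauJacobin} with $t$ replaced by $t+1$, because that matrix is still a Hankel matrix in the $s$-index, and deleting its first and last rows/columns shifts $s$, not $t$. Your fallback remark that ``$\{f_s^{(t+1)}\}_s$ is itself a sequence to which the pure Desnanot--Jacobi argument of the previous paragraph applies'' confirms exactly this: you recover an $s$-identity at time $t+1$, not the desired $t$-identity. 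The substitution rule $(s+1,t)\mapsto(s,t+1)$, $(s+2,t)\mapsto(s,t+2)$ that you propose is not justified by anything you have written.

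What is actually needed---and what the paper does---is to first use \eqref{key} via row operations on $H_{k+1}^{(s,t)}$ to replace rows $2,\dots,k+1$ by entries at time $t+1$, and then use \eqref{key} again via column operations to push columns $2,\dots,k+1$ to time $t+2$, arriving at a bordered matrix whose $(1,1)$ entry is $f_s^{(t)}$, whose first row and column (beyond that entry) are at time $t+1$, and whose interior $k\times k$ block is the Hankel matrix for $H_k^{(s,t+2)}$. Only after this two-step transformation does the Jacobi identity, applied to \emph{this} matrix, yield cofactors that are identifiable (after further row/column manipulations using \eqref{key}) as $H_k^{(s,t)}$, $H_k^{(s,t+1)}$, $H_k^{(s,t+2)}$, and $H_{k-1}^{(s,t+2)}$. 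Your mention of ``column operations'' gestures toward this, but the actual execution---especially the identification of the four cofactors, each of which requires undoing the row/column operations---is the substance of the proof and cannot be waived as bookkeeping.
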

%----------------------------------------------------------------------------------------------------------
\begin{proof}
For any determinant $D$, it is shown in \cite{Hirota2003} that
\begin{align}\label{Jacobi}
D\left[\begin{array}{c} i_1\\j_1\end{array}\right]
D\left[\begin{array}{c} i_2\\j_2\end{array}\right]
=
D\left[\begin{array}{c} i_1\\j_2\end{array}\right]
D\left[\begin{array}{c} i_2\\j_1\end{array}\right]
+
DD\left[\begin{array}{cc}i_1&i_2\\j_1&j_2\end{array}\right],
\end{align}
where $D\left[\begin{array}{cccc}i_1&i_2&\dots&i_n\\j_1&j_2&\dots&j_n\end{array}\right]$
denotes the cofactor obtained from $D$ by deleting the 
$i_1$th, $i_2$th, $\dots$, $i_n$th rows and $j_1$th, $j_2$th, $\dots$, $j_n$th columns.
Equation \eqref{Jacobi} is well known as the Jacobi identity for determinants.
Then, by letting $i_1=j_1=1$, $i_2=j_2=k+1$ and $D=H_{k+1}^{(s,t)}$ in \eqref{Jacobi},
we obtain
\begin{align}\label{temp}
H_{k+1}^{(s,t)}\left[\begin{array}{c} 1\\1\end{array}\right]
H_{k+1}^{(s,t)}\left[\begin{array}{c} k+1\\ k+1\end{array}\right]%\nonumber\\
%&\quad
=
H_{k+1}^{(s,t)}\left[\begin{array}{c} 1\\ k+1\end{array}\right]
H_{k+1}^{(s,t)}\left[\begin{array}{c} k+1\\ 1\end{array}\right]
+
H_{k+1}^{(s,t)}H_{k+1}^{(n,t)}\left[\begin{array}{cc}1&k+1\\1&k+1\end{array}\right].
\end{align}
It follows that
\begin{align*}
&\left|
\begin{array}{cccc}
\bvec{f}_k^{(s+2,t)}&\bvec{f}_{k}^{(s+3,t)}&\dots&\bvec{f}_{k}^{(s+k+1,t)}
\end{array}
\right|
\left|
\begin{array}{cccc}
\bvec{f}_k^{(s,t)}&\bvec{f}_{k}^{(s+1,t)}&\dots&\bvec{f}_{k}^{(s+k-1,t)}
\end{array}
\right|\\
&\quad=
\left|
\begin{array}{cccc}
\bvec{f}_k^{(s+1,t)}&\bvec{f}_{k}^{(s+2,t)}&\dots&\bvec{f}_{k}^{(s+k,t)}
\end{array}
\right|\left|
\begin{array}{cccc}
\bvec{f}_k^{(s+1,t)}&\bvec{f}_{k}^{(s+2,t)}&\dots&\bvec{f}_{k}^{(s+k,t)}
\end{array}
\right|\\
&\qquad
+\left|
\begin{array}{cccc}
\bvec{f}_{k+1}^{(s,t)}&\bvec{f}_{k+1}^{(s+1,t)}&\dots&\bvec{f}_{k+1}^{(s+k-1,t)}
\end{array}
\right|\left|
\begin{array}{cccc}
\bvec{f}_{k-1}^{(s+2,t)}&\bvec{f}_{k-1}^{(s+3,t)}&\dots&\bvec{f}_{k-1}^{(s+k,t)}
\end{array}
\right|,
\end{align*}
which immediately leads to \eqref{tauJacobin}.\par
By multiplying the $k$th, $(k-1)$th, \dots, $1$st rows by $-\mu^{(t)}$, and then adding these to the $(k+1)$th, $k$th, \dots, $2$nd rows in the entries of the Hankel determinants $H_{k+1}^{(s,t)}$ in \eqref{Hankel}, respectively, and then by considering \eqref{key},
we obtain 
\begin{align}\label{temp16}
H_{k+1}^{(s,t)}=\left|
\begin{array}{cccc}
f_{s}^{(t)}&f_{s+1}^{(t)}&\cdots&f_{s+k}^{(t)}\\[2pt]
f_{s}^{(t+1)}&f_{s+1}^{(t+1)}&\cdots&f_{s+k}^{(t+1)}\\
\vdots&\vdots&\vdots&\vdots\\
f_{s+k-1}^{(t+1)}&f_{s+k}^{(t+1)}&\cdots&f_{s+2k-1}^{(t+1)}
\end{array}
\right|.
\end{align}
Performing a similar procedure the columns in the right-hand side of \eqref{temp16},
we can express the Hankel determinants $H_{k+1}^{(s,t)}$ as
\begin{align}\label{temp1}
H_{k+1}^{(s,t)}=\left|
\begin{array}{ccccc}
f_{s}^{(t)}&f_{s}^{(t+1)}&f_{s+1}^{(t+1)}&\cdots&f_{s+k-1}^{(t+1)}\\[2pt]
f_{s}^{(t+1)}&f_{s}^{(t+2)}&f_{s+1}^{(t+2)}&\cdots&f_{s+k-1}^{(t+2)}\\[2pt]
f_{s+1}^{(t+1)}&f_{s+1}^{(t+2)}&f_{s+2}^{(t+2)}&\cdots&f_{s+k}^{(t+2)}\\
\vdots&\vdots&\vdots&\vdots&\vdots\\
f_{s+k-1}^{(t+1)}&f_{s+k-1}^{(t+2)}&f_{s+k}^{(t+2)}&\cdots&f_{s+2k-2}^{(t+2)}
\end{array}
\right|.
\end{align}
It is obvious from \eqref{temp1} that 
$H_{k+1}^{(s,t)}\left[\begin{array}{c}1\\1\end{array}\right]=H_{k}^{(s,t+2)}$ for $k=1,2,\dots,m$.
By considering the other cofactors obtained from $H_{k+1}^{(s,t)}$ in \eqref{temp1}, 
we can see that
\begin{align}
&H_{k+1}^{(s,t)}\left[\begin{array}{c} k+1\\ k+1\end{array}\right]=
\left|\begin{array}{ccccc}
f_{s}^{(t)}&f_{s}^{(t+1)}&f_{s+1}^{(t+1)}&\cdots&f_{s+k-2}^{(t+1)}\\[2pt]
\bvec{f}_{k-1}^{(s,t+1)}&\bvec{f}_{k-1}^{(s,t+2)}&\bvec{f}_{k-1}^{(s+1,t+2)}&\cdots&\bvec{f}_{k-1}^{(s+k-2,t+2)}
\end{array}
\right|,\label{temp2-17}\\
&H_{k+1}^{(s,t)}\left[\begin{array}{c} 1\\ k+1\end{array}\right]=
\left|\begin{array}{ccccc}
\bvec{f}_{k}^{(s,t+1)}&\bvec{f}_{k}^{(s,t+2)}&\bvec{f}_{k}^{(s+1,t+2)}&\cdots&\bvec{f}_{k}^{(s+k-2,t+2)}
\end{array}
\right|,\\
&H_{k+1}^{(s,t)}\left[\begin{array}{c} k+1\\ 1\end{array}\right]=
\left|\begin{array}{cccc}
f_{s}^{(t+1)}&f_{s+1}^{(t+1)}&\cdots&f_{s+k-1}^{(t+1)}\\[2pt]
\bvec{f}_{k-1}^{(s,t+2)}&\bvec{f}_{k-1}^{(s+1,t+2)}&\cdots&\bvec{f}_{k-1}^{(s+k-1,t+2)}
\end{array}
\right|,\\
&H_{k+1}^{(n,t)}\left[\begin{array}{cc}1&k+1\\1&k+1\end{array}\right]
=
\left|\begin{array}{cccc}
\bvec{f}_{k-1}^{(s,t+2)}&\bvec{f}_{k-1}^{(s+1,t+2)}&\cdots&\bvec{f}_{k-1}^{(s+k-2,t+2)}
\end{array}
\right|.
\end{align} 
By multiplying the $1$st row in the right-hand side of \eqref{temp2-17} by $\mu^{(t)}$, then adding it to the $2$nd row and by using \eqref{key},
we obtain
\begin{align}
H_{k+1}^{(s,t)}\left[\begin{array}{c} k+1\\ k+1\end{array}\right]
=\left|
\begin{array}{ccccc}
f_{s}^{(t)}&f_{s}^{(t+1)}&f_{s+1}^{(t+1)}&\cdots&f_{s+k-2}^{(t+1)}\\[2pt]
f_{s+1}^{(t)}&f_{s+1}^{(t+1)}&f_{s+2}^{(t+1)}&\cdots&f_{s+k-1}^{(t+1)}\\[2pt]
\bvec{f}_{k-2}^{(s+1,t+1)}&\bvec{f}_{k-2}^{(s+1,t+2)}&\bvec{f}_{k-2}^{(s+2,t+2)}&\cdots&\bvec{f}_{k-2}^{(s+k-1,t+2)}
\end{array}
\right|.\label{temp2-21}
\end{align}
Similarly, for the $2$nd, $3$rd, $\dots$, $k$th rows in \eqref{temp2-21},
we derive
\begin{align}
H_{k+1}^{(s,t)}\left[\begin{array}{c} k+1\\ k+1\end{array}\right]
=\left|
\begin{array}{ccccc}
f_{s}^{(t)}&f_{s}^{(t+1)}&f_{s+1}^{(t+1)}&\cdots&f_{s+k-2}^{(t+1)}\\[2pt]
f_{s+1}^{(t)}&f_{s+1}^{(t+1)}&f_{s+2}^{(t+1)}&\cdots&f_{s+k-1}^{(t+1)}\\[2pt]
\vdots&\vdots&\vdots&\ddots&\vdots\\
f_{s+k-1}^{(t)}&f_{s+k-1}^{(t+1)}&f_{s+k}^{(t+1)}&\cdots&f_{s+2k-3}^{(t+1)}
\end{array}
\right|.\label{temp2-22}
\end{align}
Performing a similar procedure for the $1$st, $2$nd, $\dots$, $(k-1)$th columns in \eqref{temp2-22},
it follows that 
$H_{k+1}^{(s,t)}\left[\begin{array}{c} k+1\\ k+1\end{array}\right]=H_{k}^{(s,t)}$.
Moreover, by rewriting the other cofactors as
$H_{k+1}^{(s,t)}\left[\begin{array}{c} 1\\ k+1\end{array}\right]=
H_{k+1}^{(s,t)}\left[\begin{array}{c} k+1\\ 1\end{array}\right]=H_{k}^{(s,t+1)}$
and 
$H_{k+1}^{(n,t)}\left[\begin{array}{cc}1&k+1\\1&k+1\end{array}\right]=H_{k-1}^{(s,t+2)}$,
we obtain \eqref{tauJacobit}.
\end{proof}

%-------------------------------------------------------------------------------------------
%------------ sec3.Hadamard polynomials and autonomous discrete Toda -----------  
%------------------------------------------------------------------------------------------
\section{Hadamard polynomials and autonomous discrete Toda}\label{sec:3}
In this section, we consider the Hankel polynomials associated with the Hankel determinants $H_k^{(s,t)}$ 
and define the Hadamard polynomials by rates of the Hankel determinants $H_k^{(s,t)}$ 
and their corresponding Hankel polynomials.
We then present relationships between the Hankel determinants $H_k^{(s,t)}$, 
the Hankel polynomials, and the Hadamard polynomials with respect to $s$. 
From these relationships, we derive the autonomous dToda equation \eqref{adToda}.
\par 
For $k=1,2,\dots,m$, as the Hankel polynomials associated with the Hankel determinants $H_k^{(s,t)}$,
let us introduce $k$-degree polynomials with respect to $z$ associated with the infinite sequence $\{f_{s}^{(t)}\}_{s,t=0}^{\infty}$:
\begin{align}
H_{k}^{(s,t)}(z):=\left|
\begin{array}{ccccc}
f_{s}^{(t)}&f_{s+1}^{(t)}&\cdots&f_{s+k-1}^{(t)}&1\\
f_{s+1}^{(t)}&f_{s+2}^{(t)}&\cdots&f_{s+k}^{(t)}&z\\
\vdots&\vdots&\vdots&\vdots&\vdots\\
f_{s+k}^{(t)}&f_{s+k+1}^{(t)}&\cdots&f_{s+2k-1}^{(t)}&z^k\
\end{array}
\right|,\quad s,t=0,1,\dots,\label{Tkz}
\end{align}
where $H_{-1}^{(s,t)}(z):=0$ and $H_{0}^{(s,t)}(z):=1$.\par
The following lemma gives the relationships of the Hankel determinants $H_k^{(s,t)}$ and the Hankel polynomials $H_k^{(s,t)}(z)$ at each $t$.
%
%--------------------------------------- Lemma 1 -----------------------------------------------
%
\begin{lemma}\label{lem1}
For $k=0,1,\dots,m$ and $s,t=0,1,\dots$, the Hankel determinants $H_k^{(s,t)}$ and the Hankel polynomials $H_{k}^{(s,t)}(z)$ 
associated with the infinite sequence $\{f_{s}^{(t)}\}_{s,t=0}^{\infty}$ satisfy
\begin{align}
&zH_k^{(s,t)}H_{k-1}^{(s+1,t)}(z)
=H_k^{(s+1,t)}H_{k-1}^{(s,t)}(z)+H_{k-1}^{(s+1,t)}H_k^{(s,t)}(z),
\label{TkzJacobi}\\
&H_{k}^{(s+1,t)}H_{k}^{(s,t)}(z)
=H_{k+1}^{(s,t)}H_{k-1}^{(s+1,t)}(z)+H_{k}^{(s,t)}H_{k}^{(s+1,t)}(z).\label{TkzPlucker}
\end{align}
\end{lemma}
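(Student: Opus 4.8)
The plan is to derive both identities as instances of determinant identities for an auxiliary bordered Hankel determinant, exactly in the spirit of the proof of Proposition~\ref{prop2}. The key observation is that the Hankel polynomial $H_k^{(s,t)}(z)$ in \eqref{Tkz} is itself a $(k{+}1)\times(k{+}1)$ determinant whose last column is $(1,z,\dots,z^k)^{\top}$ and whose remaining $k$ columns are the Hankel vectors $\bvec{f}_{k+1}^{(s,t)},\dots$, read appropriately; likewise $H_k^{(s,t)}$, $H_{k+1}^{(s,t)}$, $H_{k-1}^{(s+1,t)}$ are all minors of one common $(k{+}1)\times(k{+}1)$ array. So the first step is to write down a single auxiliary determinant $D$ of order $k{+}1$ (with a polynomial column adjoined, or equivalently a Hankel array shifted by one in $s$) from which all six objects appearing in \eqref{TkzJacobi}--\eqref{TkzPlucker} arise as $D$ itself or as first-/last-row-and-column cofactors.

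For \eqref{TkzJacobi}: apply the Jacobi identity \eqref{Jacobi} to a determinant $D$ of order $k{+}1$ built from the rows $f_s^{(t)},\dots,f_{s+k}^{(t)}$ with an extra polynomial column, choosing the deleted indices $i_1=1$, $i_2=k{+}1$ (rows) and $j_1=1$, $j_2=k{+}1$ (columns) so that the four cofactors become, up to identification, $H_{k-1}^{(s+1,t)}(z)$, $H_k^{(s,t)}(z)$, $H_k^{(s+1,t)}$, $H_{k-1}^{(s+1,t)}$, while $D$ and the doubly-deleted cofactor $D\bigl[\begin{smallmatrix}1&k+1\\1&k+1\end{smallmatrix}\bigr]$ contribute the remaining factor and the $z$. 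The factor $z$ on the left-hand side will come from the polynomial column: deleting the top row removes the entry $1$ and shifts $(z,\dots,z^k)^{\top}$ to $z\cdot(1,\dots,z^{k-1})^{\top}$, which is where the extra power of $z$ is generated. One then has to check that the signs produced by \eqref{Jacobi} are consistent with the stated form, which is a routine bookkeeping of $(-1)$ factors from expanding along the first/last row and column.

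For \eqref{TkzPlucker}: this is a Plücker-type (three-term) relation rather than the Jacobi identity. The natural route is to exhibit the three products $H_k^{(s+1,t)}H_k^{(s,t)}(z)$, $H_{k+1}^{(s,t)}H_{k-1}^{(s+1,t)}(z)$, $H_k^{(s,t)}H_k^{(s+1,t)}(z)$ as the three terms of a Plücker relation among maximal minors of a $(k{+}1)\times(2k{+}2)$ matrix whose columns are suitable Hankel vectors together with the polynomial vector, or equivalently to apply the Jacobi identity a second time to a different choice of $D$ and then eliminate the unwanted cofactor using \eqref{TkzJacobi}. I would first try the direct Plücker route, since it keeps everything homogeneous and avoids a separate elimination step.

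The main obstacle I anticipate is purely organizational: getting the \emph{same} auxiliary determinant to simultaneously realize a Hankel determinant, a shifted Hankel determinant, and a Hankel \emph{polynomial} as its cofactors, and then tracking all the sign changes coming from the borders (the appended column of powers of $z$ and the appended row). The algebra behind \eqref{temp2-17}--\eqref{temp2-22} in the proof of Proposition~\ref{prop2} shows these sign and index chases are manageable but delicate; the new wrinkle here is that the polynomial column does not get shifted by the column operations the way a Hankel column does, so one must be careful that the row/column reductions used to simplify the cofactors are applied only to the Hankel block and that the effect on the polynomial column (multiplication by $z$, or leaving it fixed) is accounted for exactly. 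Once the correct $D$ and index choices are pinned down, both \eqref{TkzJacobi} and \eqref{TkzPlucker} drop out of \eqref{Jacobi} (or its Plücker variant) with no further computation.
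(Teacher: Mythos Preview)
Your high-level strategy---Jacobi identity for \eqref{TkzJacobi}, Pl\"ucker relation for \eqref{TkzPlucker}, applied to $D=H_k^{(s,t)}(z)$ itself---is exactly what the paper does, and you are right that the factor $z$ arises from deleting the top row so that the polynomial column becomes $z(1,\dots,z^{k-1})^\top$. However, your specific index choice for the Jacobi step does not work: with $i_1=1$, $i_2=k{+}1$, $j_1=1$, $j_2=k{+}1$ one gets $D\bigl[\begin{smallmatrix}1\\1\end{smallmatrix}\bigr]=zH_{k-1}^{(s+2,t)}(z)$ and $D\bigl[\begin{smallmatrix}1&k+1\\1&k+1\end{smallmatrix}\bigr]=H_{k-1}^{(s+2,t)}$, an unwanted shift to $s{+}2$ coming from the simultaneous deletion of row~$1$ and column~$1$. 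The paper instead takes $i_1=j_1=k{+}1$, $i_2=1$, $j_2=k$: the point is to delete column~$k$ (the last \emph{Hankel} column) rather than column~$1$, so that the cofactor with the top row removed keeps its first column at shift $s$ and becomes $zH_{k-1}^{(s+1,t)}(z)$. With that choice the five minors are $H_k^{(s,t)}$, $zH_{k-1}^{(s+1,t)}(z)$, $H_{k-1}^{(s,t)}(z)$, $H_k^{(s+1,t)}$, $H_{k-1}^{(s+1,t)}$ on the nose, and no row/column reductions in the style of \eqref{temp2-17}--\eqref{temp2-22} are needed at all.

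For \eqref{TkzPlucker} your Pl\"ucker instinct is correct, but you are missing the one non-obvious ingredient that makes it go through cleanly. The paper applies \eqref{Plucker} to the $(k{+}1)$-dimensional columns $\bvec{f}_{k+1}^{(s+1,t)},\dots,\bvec{f}_{k+1}^{(s+k,t)},\bvec{f}_{k+1}^{(s,t)},\bvec{z}_{k+1}$ together with the unit vector $\bvec{e}_{k+1}=(0,\dots,0,1)^\top$. The role of $\bvec{e}_{k+1}$ is that expanding any $(k{+}1)\times(k{+}1)$ minor containing it along that column simply deletes the last row, dropping the order by one; this is precisely how the lower-order objects $H_{k-1}^{(s+1,t)}(z)$, $H_k^{(s,t)}$, $H_k^{(s+1,t)}$ appear as maximal minors of the \emph{same} $(k{+}1)\times(k{+}3)$ array that also has $H_{k+1}^{(s,t)}$ and $H_k^{(s,t)}(z)$ among its minors. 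Without this device you cannot place determinants of orders $k{-}1$, $k$, and $k{+}1$ into a single Pl\"ucker relation, and your alternative plan of deriving \eqref{TkzPlucker} by combining two Jacobi identities and eliminating would be considerably messier.
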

%-------------------------------------------------------------------------------------------------
\begin{proof}
By letting $i_1=j_1=k+1$, $i_2=1$, $j_2=k$ and $D=H_k^{(s,t)}(z)$ in \eqref{Jacobi},
we obtain
\begin{align*}
&H_k^{(s,t)}(z)\left[\begin{array}{c}k+1\\k+1\end{array}\right]
H_k^{(s,t)}(z)\left[\begin{array}{c}1\\k\end{array}\right]\\
&\quad=
H_k^{(s,t)}(z)\left[\begin{array}{c}k+1\\k\end{array}\right]
H_k^{(s,t)}(z)\left[\begin{array}{c}1\\k+1\end{array}\right]
+
H_k^{(s,t)}(z)
H_k^{(s,t)}(z)\left[\begin{array}{cc}k+1&1\\k+1&k\end{array}\right].
\end{align*}
Thus, it follows that
\begin{align*}
&\left|\begin{array}{cccc}
\bvec{f}_k^{(s,t)}&\bvec{f}_{k}^{(s+1,t)}&\cdots&\bvec{f}_k^{(s+k-1,t)}
\end{array}\right|\left|\begin{array}{ccccc}
\bvec{f}_k^{(s+1,t)}&\bvec{f}_{k}^{(s+2,t)}&\cdots&\bvec{f}_k^{(s+k-1,t)}&z\bvec{z}_{k}
\end{array}\right|\\
&=
\left|\begin{array}{ccccc}
\bvec{f}_k^{(s,t)}&\bvec{f}_{k}^{(s+1,t)}&\cdots&\bvec{f}_k^{(s+k-2,t)}&\bvec{z}_{k}
\end{array}\right|\left|\begin{array}{cccc}
\bvec{f}_k^{(s+1,t)}&\bvec{f}_{k}^{(s+2,t)}&\cdots&\bvec{f}_k^{(s+k,t)}
\end{array}\right|\\
&\qquad+\left|\begin{array}{ccccc}
\bvec{f}_{k+1}^{(s,t)}&\bvec{f}_{k+1}^{(s+1,t)}&\cdots&\bvec{f}_{k+1}^{(s+k-1,t)}&\bvec{z}_{k+1}
\end{array}\right|\left|\begin{array}{cccc}
\bvec{f}_{k-1}^{(s+1,t)}&\bvec{f}_{k-1}^{(s+2,t)}&\cdots&\bvec{f}_{k-1}^{(s+k-1,t)}
\end{array}\right|,
\end{align*}
where $\bvec{z}_{k}:=(1,z,\dots,z^{k-1})^{\top}$.
This identity is easily checked to be \eqref{TkzJacobi}.\par
For any $(k+1)$-dimensional vectors $\bvec{x}_1,\bvec{x}_2,\dots,\bvec{x}_{k+3}$,
the determinants of the submatrices in the $(k+1)$-by-$(k+3)$ matrix 
$X_k:=\left(\begin{array}{cccc}\bvec{x}_1&\bvec{x}_2&\cdots&\bvec{x}_{k+3}\end{array}\right)$ satisfy
\begin{align}\label{Plucker}
&\left|\begin{array}{cccccc}
\bvec{x}_1&\bvec{x}_2&\cdots&\bvec{x}_{k-1}&\bvec{x}_{k}&\bvec{x}_{k+1}
\end{array}\right|\left|\begin{array}{cccccc}
\bvec{x}_1&\bvec{x}_2&\cdots&\bvec{x}_{k-1}&\bvec{x}_{k+2}&\bvec{x}_{k+3}
\end{array}\right|\nonumber\\
&-\left|\begin{array}{cccccc}
\bvec{x}_1&\bvec{x}_2&\cdots&\bvec{x}_{k-1}&\bvec{x}_{k}&\bvec{x}_{k+2}
\end{array}\right|\left|\begin{array}{cccccc}
\bvec{x}_1&\bvec{x}_2&\cdots&\bvec{x}_{k-1}&\bvec{x}_{k+1}&\bvec{x}_{k+3}
\end{array}\right|\nonumber\\
&+\left|\begin{array}{cccccc}
\bvec{x}_1&\bvec{x}_2&\cdots&\bvec{x}_{k-1}&\bvec{x}_{k}&\bvec{x}_{k+3}
\end{array}\right|\left|\begin{array}{cccccc}
\bvec{x}_1&\bvec{x}_2&\cdots&\bvec{x}_{k-1}&\bvec{x}_{k+1}&\bvec{x}_{k+2}
\end{array}\right|=0.
\end{align}
This is known as the Pl\"ucker relation.
By letting $\bvec{x}_1=\bvec{f}_{k+1}^{(s+1,t)}$, $\bvec{x}_2=\bvec{f}_{k+1}^{(s+2,t)}$,
$\dots$, $\bvec{x}_{k}=\bvec{f}_{k+1}^{(s+k,t)}$, $\bvec{x}_{k+1}=\bvec{f}_{k+1}^{(s,t)}$,
$\bvec{x}_{k+2}=\bvec{z}_{k+1}$, $\bvec{x}_{k+3}=\bvec{e}_{k+1}:=(0,0,\dots,0,1)^{\top}$ in \eqref{Plucker},
we obtain 
\begin{align*}
&\left|\begin{array}{cccccc}
\bvec{f}_{k+1}^{(s+1,t)}&\bvec{f}_{k+1}^{(s+2,t)}&\cdots&\bvec{f}_{k+1}^{(s+k,t)}&\bvec{f}_{k+1}^{(s,t)}
\end{array}\right|\nonumber\\
&\quad\times\left|\begin{array}{cccccc}
\bvec{f}_{k+1}^{(s+1,t)}&\bvec{f}_{k+1}^{(s+2,t)}&\cdots&\bvec{f}_{k+1}^{(s+k-1,t)}&\bvec{z}_{k+1}&\bvec{e}_{k+1}
\end{array}\right|\nonumber\\
&-\left|\begin{array}{cccccc}
\bvec{f}_{k+1}^{(s+1,t)}&\bvec{f}_{k+1}^{(s+2,t)}&\cdots&\bvec{f}_{k+1}^{(s+k-1,t)}&\bvec{f}_{k+1}^{(s+k,t)}&\bvec{z}_{k+1}
\end{array}\right|\nonumber\\
&\quad\times\left|
\begin{array}{cccccc}
\bvec{f}_{k+1}^{(s+1,t)}&\bvec{f}_{k+1}^{(s+2,t)}&\cdots&\bvec{f}_{k+1}^{(s+k-1,t)}&\bvec{f}_{k+1}^{(s,t)}&\bvec{e}_{k+1}
\end{array}\right|\nonumber\\
&+\left|\begin{array}{cccccc}
\bvec{f}_{k+1}^{(s+1,t)}&\bvec{f}_{k+1}^{(s+2,t)}&\cdots&\bvec{f}_{k+1}^{(s+k-1,t)}&\bvec{f}_{k+1}^{(s+k,t)}&\bvec{e}_{k+1}
\end{array}\right|\nonumber\\
&\quad\times\left|
\begin{array}{cccccc}
\bvec{f}_{k+1}^{(s+1,t)}&\bvec{f}_{k+1}^{(s+2,t)}&\cdots&\bvec{f}_{k+1}^{(s+k-1,t)}&\bvec{f}_{k+1}^{(s,t)}&\bvec{z}_{k+1}
\end{array}\right|=0,
\end{align*}
which implies \eqref{TkzPlucker}.

\end{proof}
\par
%--------------------------------- Hadamard ---------------------------------------
Moreover, for $k=1,2,\dots,m$, let us define another set of polynomials ${\cal H}_{k}^{(s,t)}(z)$ of degree $k$ as 
\begin{align}
{\cal H}_{k}^{(s,t)}(z):=\frac{H_{k}^{(s,t)}(z)}{H_{k}^{(s,t)}},\quad s,t=0,1,\dots,\label{HadamardHankel}
\end{align}
where ${\cal H}_{-1}^{(s,t)}(z):= 0$ and ${\cal H}_{0}^{(s,t)}(z):=1$.
The polynomials ${\cal H}_k^{(s,t)}(z)$ are the Hadamard polynomials.
%We easily see that the Hadamard polynomials ${\cal H}_k^{(s,t)}(z)$ can be expanded as 
%\begin{align}\label{phikbkk}
%{\cal H}_{k}^{(s,t)}(z)=z^k+b_{k,1}^{(s,t)}z^{k-1}+b_{k,2}^{(s,t)}z^{k-2}+\cdots+b_{k,k}^{(s,t)},
%\end{align}
%where $b_{k,1}^{(s,t)},b_{k,2}^{(s,t)},\dots,b_{k,k}^{(s,t)}$ are some constants.
Now, we derive a proposition for the Hadamard polynomials ${\cal H}_k^{(s,t)}(z)$ with $k=m$.
%
%---------------------------------- Proposition 3 --------------------------------
%
\begin{proposition}\label{prop3}
The Hadamard polynomial ${\cal H}_m^{(s,t)}(z)$ coincides with
the characteristic polynomial $p(z)$ of matrices with eigenvalues $\lambda_1,\lambda_2,\dots,\lambda_m,$
\begin{align}
{\cal H}_{m}^{(s,t)}(z)=p(z).\label{phizpz}
\end{align}
\end{proposition}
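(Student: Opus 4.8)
The plan is to reduce the identity to the defining recurrence \eqref{fnmt} by a single row operation on the $(m+1)$-by-$(m+1)$ determinant $H_m^{(s,t)}(z)$. Write the rows of the matrix in \eqref{Tkz} as $\bvec{r}_i:=(f_{s+i-1}^{(t)},f_{s+i}^{(t)},\dots,f_{s+i+m-2}^{(t)},z^{i-1})$ for $i=1,2,\dots,m+1$. Setting $a_0:=1$ as in the proof of Proposition~\ref{prop1}, I would replace the last row $\bvec{r}_{m+1}$ by $\sum_{\ell=0}^{m}a_\ell\,\bvec{r}_{m+1-\ell}$; this elementary operation leaves the determinant unchanged.

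In the first $m$ columns, the new $(m+1,j)$ entry becomes $\sum_{\ell=0}^{m}a_\ell f_{(s+j-1)+m-\ell}^{(t)}$, which is precisely the left-hand side of \eqref{fnmt} with $s$ replaced by $s+j-1\ge 0$, hence vanishes; note $s+j-1$ runs over $s,s+1,\dots,s+m-1$ as $j$ runs over $1,\dots,m$. In the last column the new $(m+1,m+1)$ entry becomes $\sum_{\ell=0}^{m}a_\ell z^{m-\ell}=z^m+a_1z^{m-1}+\dots+a_m=p(z)$ by \eqref{pza}. Thus after the operation the last row of the determinant is $(0,0,\dots,0,p(z))$.

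Expanding along this last row, and noting the cofactor sign is $(-1)^{(m+1)+(m+1)}=1$, gives $H_m^{(s,t)}(z)=p(z)\,M$, where $M$ is the minor obtained by deleting the last row and last column of the matrix in \eqref{Tkz}; but that minor is exactly the order-$m$ Hankel determinant $H_m^{(s,t)}$ of \eqref{Hankel}. Hence $H_m^{(s,t)}(z)=p(z)\,H_m^{(s,t)}$, and since $H_m^{(s,t)}\neq 0$ (which is implicit in the definition \eqref{HadamardHankel} of ${\cal H}_m^{(s,t)}(z)$), dividing through yields \eqref{phizpz}.

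There is essentially no substantive obstacle here: the argument follows directly from \eqref{fnmt}, uses none of the $t$-evolution \eqref{key}, and holds for every $s$ and $t$. The only points requiring care are the bookkeeping of row indices in the combination $\sum_{\ell=0}^{m}a_\ell\,\bvec{r}_{m+1-\ell}$, checking that the shifted recurrence \eqref{fnmt} applies columnwise, and identifying the leftover minor with $H_m^{(s,t)}$ — equivalently, observing that the coefficient of $z^m$ in $H_m^{(s,t)}(z)$ equals $H_m^{(s,t)}$, so that ${\cal H}_m^{(s,t)}(z)$ is monic of degree $m$ just like $p(z)$.
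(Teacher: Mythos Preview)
Your argument is correct and essentially identical to the paper's own proof: both replace the $(m+1)$th row by the combination $\sum_{\ell=0}^{m}a_\ell\,\bvec{r}_{m+1-\ell}$, invoke \eqref{fnmt} to kill the first $m$ entries, identify the surviving $(m+1,m+1)$ entry with $p(z)$ via \eqref{pza}, and then expand along the last row to obtain $H_m^{(s,t)}(z)=p(z)\,H_m^{(s,t)}$. The only cosmetic difference is that the paper phrases the row operation as ``multiply rows $1,\dots,m$ by $a_m,\dots,a_1$ and add to row $m+1$,'' but this is precisely your summation with $a_0=1$.
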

%--------------------------------------------------------------------------------
\begin{proof}
The $m$-degree polynomial $H_m^{(s,t)}(z)$ can be written as 
\begin{align}\label{tmp6}
H_m^{(s,t)}(z)=
\left|\begin{array}{ccccc}
\bvec{f}_{m+1}^{(s,t)}&\bvec{f}_{m+1}^{(s+1,t)}&\cdots&\bvec{f}_{m+1}^{(s+m-1,t)}&\bvec{z}_{m+1}
\end{array}
\right|.\end{align}
By multiplying the $1$st, $2$nd, $\dots$, $m$th rows on the right-hand side of \eqref{tmp6} by $a_m$, $a_{m-1}$, $\dots$, $a_1$, respectively,
and adding these to the $(m+1)$th row,
we obtain
\begin{align*}
H_m^{(s,t)}(z)%\nonumber\\
%&\quad
=\left|\begin{array}{ccccc}
\bvec{f}_{m}^{(s,t)}&\bvec{f}_{m}^{(s+1,t)}&\cdots&\bvec{f}_{m}^{(s+m-1,t)}&\bvec{z}_{m}\\
\displaystyle \sum_{\ell=0}^{m}a_{\ell}f_{s+m-\ell}^{(t)}
&\displaystyle \sum_{\ell=0}^{m}a_{\ell}f_{s+m+1-\ell}^{(t)}
&\cdots
&\displaystyle \sum_{\ell=0}^{m}a_{\ell}f_{s+2m-1-\ell}^{(t)}
&\displaystyle \sum_{\ell=0}^{m}a_{\ell}z^{m-\ell}
\end{array}\right|.
\end{align*}
From \eqref{fnmt}, it is obvious that the ($m+1,1$), ($m+1,2$), $\dots$, ($m+1,m$) entries are $0$. 
The ($m+1,m+1$) entry is equal to the characteristic polynomial $p(z)$. 
Thus, we can express the characteristic polynomial $p(z)$ 
using the Hankel determinant $H_{m}^{(s,t)}$ and the Hankel polynomial $H_{m}^{(s,t)}(z)$ as $H_m^{(s,t)}(z)=p(z)H_m^{(s,t)}$.
By recalling the definition of the Hadamard polynomial ${\cal H}_m^{(s,t)}(z)$ in \eqref{HadamardHankel}, 
we obtain \eqref{phizpz}.

\end{proof}
\par
The following lemma gives identities of the Hadamard polynomials ${\cal H}_k^{(s,t)}(z)$.
%
%------------------------------------- Lemma 2 --------------------------------------
%
\begin{lemma}\label{lem3}
Let us assume that for the infinite sequence $\{f_s^{(t)}\}_{s,t=0}^{\infty}$, 
the Hankel determinants $H_k^{(s,t)}$ are all nonzero.
Then, for $k=1,2,\dots,m$,
the Hadamard polynomials ${\cal H}_k^{(s,t)}(z)$ satisfy
\begin{align}
&z{\cal H}_{k-1}^{(s+1,t)}(z)={\cal H}_{k}^{(s,t)}(z)+q_k^{(s,t)}{\cal H}_{k-1}^{(s,t)}(z),
\quad s,t=0,1,\dots,\label{Christoffeln}\\
&{\cal H}_k^{(s,t)}(z)={\cal H}_{k}^{(s+1,t)}(z)+e_{k}^{(s,t)}{\cal H}_{k-1}^{(s+1,t)}(z),
\quad s,t=0,1,\dots,
\label{Geronimusn}
\end{align}
where $q_k^{(s,t)}$ and $e_k^{(s,t)}$ are given using the Hankel determinants $H_k^{(s,t)}$ as
\begin{align}
&q_{k}^{(s,t)}:=\frac{H_{k-1}^{(s,t)}H_{k}^{(s+1,t)}}{H_{k}^{(s,t)}H_{k-1}^{(s+1,t)}},\label{qkst}\\
&e_k^{(s,t)}:=\frac{H_{k+1}^{(s,t)}H_{k-1}^{(s+1,t)}}{H_{k}^{(s,t)}H_{k}^{(s+1,t)}}.\label{ekst}
\end{align}
\end{lemma}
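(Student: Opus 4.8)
The plan is to derive both identities directly from Lemma~\ref{lem1}: equations \eqref{Christoffeln} and \eqref{Geronimusn} are nothing more than the determinant identities \eqref{TkzJacobi} and \eqref{TkzPlucker} rewritten after dividing by appropriate products of Hankel determinants, all of which are nonzero by hypothesis, and then recognizing the Hadamard polynomials via their definition \eqref{HadamardHankel} together with the explicit formulas \eqref{qkst}--\eqref{ekst}. Since $H_k^{(s,t)}\neq 0$ for all admissible $k,s,t$, the quotients ${\cal H}_k^{(s,t)}(z)=H_k^{(s,t)}(z)/H_k^{(s,t)}$, $q_k^{(s,t)}$ and $e_k^{(s,t)}$ are all well defined, which is exactly what licenses the divisions below.

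For \eqref{Christoffeln} I would take \eqref{TkzJacobi} and divide through by $H_k^{(s,t)}H_{k-1}^{(s+1,t)}$: the left-hand side becomes $z{\cal H}_{k-1}^{(s+1,t)}(z)$, the last term on the right becomes ${\cal H}_k^{(s,t)}(z)$, and the remaining term is
\[
\frac{H_k^{(s+1,t)}H_{k-1}^{(s,t)}(z)}{H_k^{(s,t)}H_{k-1}^{(s+1,t)}}
=\frac{H_{k-1}^{(s,t)}H_k^{(s+1,t)}}{H_k^{(s,t)}H_{k-1}^{(s+1,t)}}\cdot\frac{H_{k-1}^{(s,t)}(z)}{H_{k-1}^{(s,t)}}
=q_k^{(s,t)}{\cal H}_{k-1}^{(s,t)}(z),
\]
which is precisely \eqref{Christoffeln}. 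Likewise, dividing \eqref{TkzPlucker} by $H_k^{(s,t)}H_k^{(s+1,t)}$ turns the left-hand side into ${\cal H}_k^{(s,t)}(z)$, the last right-hand term into ${\cal H}_k^{(s+1,t)}(z)$, and the middle term into
\[
\frac{H_{k+1}^{(s,t)}H_{k-1}^{(s+1,t)}(z)}{H_k^{(s,t)}H_k^{(s+1,t)}}
=\frac{H_{k+1}^{(s,t)}H_{k-1}^{(s+1,t)}}{H_k^{(s,t)}H_k^{(s+1,t)}}\cdot\frac{H_{k-1}^{(s+1,t)}(z)}{H_{k-1}^{(s+1,t)}}
=e_k^{(s,t)}{\cal H}_{k-1}^{(s+1,t)}(z),
\]
yielding \eqref{Geronimusn}.

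Because each step is an elementary rearrangement, there is no genuine analytic obstacle; the only points needing care are bookkeeping ones. First, one should check the boundary index $k=1$, where the conventions ${\cal H}_0^{(s,t)}(z)=1$, ${\cal H}_{-1}^{(s,t)}(z)=0$, $H_0^{(s,t)}=1$ and $H_{-1}^{(s,t)}(z)=0$ must be seen to be consistent with the general computation (a short direct check with $H_1^{(s,t)}(z)=zf_s^{(t)}-f_{s+1}^{(t)}$ suffices). Second, for the top index $k=m$ one may additionally note, via Proposition~\ref{prop1}, that $H_{m+1}^{(s,t)}=0$ forces $e_m^{(s,t)}=0$, so that \eqref{Geronimusn} degenerates to ${\cal H}_m^{(s,t)}(z)={\cal H}_m^{(s+1,t)}(z)$, in agreement with Proposition~\ref{prop3}. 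I expect the hardest part to be purely a matching-of-notation step: verifying that the cross-ratios of Hankel determinants appearing after the divisions coincide verbatim with the defined quantities $q_k^{(s,t)}$ and $e_k^{(s,t)}$ in \eqref{qkst}--\eqref{ekst}, rather than any mathematical difficulty.
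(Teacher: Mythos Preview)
Your proposal is correct and follows essentially the same route as the paper: both proofs divide \eqref{TkzJacobi} by $H_k^{(s,t)}H_{k-1}^{(s+1,t)}$ and \eqref{TkzPlucker} by $H_k^{(s,t)}H_k^{(s+1,t)}$, then identify the resulting quotients with the Hadamard polynomials and the cross-ratios $q_k^{(s,t)}$, $e_k^{(s,t)}$. Your additional boundary checks at $k=1$ and $k=m$ are harmless extras that the paper omits.
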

%---------------------------------------------------------------------------------------
\begin{proof}
By dividing both sides of \eqref{TkzJacobi} by $H_{k-1}^{(s+1,t)}H_{k}^{(s,t)}$,
we obtain 
\begin{align*}
z\frac{H_{k-1}^{(s+1,t)}(z)}{H_{k-1}^{(s+1,t)}}=
\left(\frac{H_{k-1}^{(s,t)}H_{k}^{(s+1,t)}}{H_{k}^{(s,t)}H_{k-1}^{(s+1,t)}}\right)
\frac{H_{k-1}^{(s,t)}(z)}{H_{k-1}^{(s,t)}}+\frac{H_{k}^{(s,t)}(z)}{H_{k}^{(s,t)}}.
\end{align*}
By considering the definition of the Hadamard polynomials ${\cal H}_k^{(s,t)}(z)$ in \eqref{HadamardHankel},
we thus derive \eqref{Christoffeln} with \eqref{qkst}. 
Similarly, it follows from \eqref{TkzPlucker} that 
\begin{align}\label{tmp3}
\frac{H_{k+1}^{(s,t)}H_{k-1}^{(s+1,t)}(z)}{H_{k}^{(s,t)}H_{k}^{(s+1,t)}}
={\cal H}_k^{(s,t)}(z)-{\cal H}_{k}^{(s+1,t)}(z).
\end{align}
The left-hand side of \eqref{tmp3} can be rewritten as
\begin{align*}
\frac{H_{k+1}^{(s,t)}H_{k-1}^{(s+1,t)}(z)}{H_{k}^{(s,t)}H_{k}^{(s+1,t)}}
=\left(\frac{H_{k+1}^{(s,t)}H_{k-1}^{(s+1,t)}}{H_{k}^{(s,t)}H_{k}^{(s+1,t)}}\right)
{\cal H}_{k-1}^{(s+1,t)},\end{align*}
and so we have \eqref{Geronimusn} with \eqref{ekst}.
 
\end{proof}

The identities \eqref{Christoffeln} in Lemma~\ref{lem3} can be regarded as the recursion formulas for generating evolutions of discrete-time variable $s$ of the Hadamard polynomials ${\cal H}_k^{(s,t)}(z)$.
In the theory of orthogonal polynomials, \eqref{Christoffeln} correspond to 
the Christoffel transformations for the Hadamard polynomials ${\cal H}_k^{(s,t)}(z)$ in \cite{Chihara}.
Similarly, \eqref{Geronimusn}, which is the inverse transformation of \eqref{Christoffeln},
is the Geronimus transformation for the Hadamard polynomials ${\cal H}_k^{(s,t)}(z)$.
The Christoffel and Geronimus transformations are useful in the study of integrable discrete systems \cite{Zhedanov}.
\par
From Lemma~\ref{lem3}, we derive a theorem for relationships involving $q_k^{(s,t)}$ and $e_k^{(s,t)}$.
%
%-------------------------------------- Thm 1 -------------------------------------------
%
\begin{theorem}\label{lem5}
Let us assume that for the infinite sequence $\{f_{s}^{(t)}\}_{s,t=0}^{\infty}$, 
the Hankel determinants $H_k^{(s,t)}$ are all nonzero.
Then, the Hadamard polynomials ${\cal H}_k^{(s,t)}(z)$ for $s,t=0,1,\dots$ satisfy the three-term recursion formulas:
\begin{align}\label{3termqe}
%&
{\cal H}_{k+1}^{(s,t)}(z)=(z-q_{k+1}^{(s,t)}-e_{k}^{(s,t)}){\cal H}_{k}^{(s,t)}(z)
-q_{k}^{(s,t)}e_{k}^{(s,t)}{\cal H}_{k-1}^{(s,t)}(z),
%\nonumber\\
%&
\quad k=0,1,\dots,m-1.
\end{align}
More importantly, $q_k^{(s,t)}$ and $e_k^{(s,t)}$ satisfy
\begin{align}\label{q+en}
\left\{
\begin{array}{l}
q_k^{(s+1,t)}+e_{k-1}^{(s+1,t)}=q_k^{(s,t)}+e_k^{(s,t)},\quad k=1,2,\dots,m,\\[2pt]
q_k^{(s+1,t)}e_{k}^{(s+1,t)}=q_{k+1}^{(s,t)}e_{k}^{(s,t)},\quad k=1,2,\dots,m-1,\\[2pt]
s,t=0,1,\dots.
\end{array}
\right.
\end{align}
\end{theorem}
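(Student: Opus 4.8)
The plan is to extract both assertions from Lemma~\ref{lem3}, i.e.\ from the Christoffel relation \eqref{Christoffeln}, the Geronimus relation \eqref{Geronimusn}, and the determinant formulas \eqref{qkst}--\eqref{ekst}, together with the determinant identities already established in Propositions~\ref{prop1} and~\ref{prop2}.

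For the three-term recursion \eqref{3termqe} I would eliminate every shift in $s$. Writing \eqref{Christoffeln} at level $k+1$ gives $z\,{\cal H}_{k}^{(s+1,t)}(z)={\cal H}_{k+1}^{(s,t)}(z)+q_{k+1}^{(s,t)}{\cal H}_{k}^{(s,t)}(z)$. In the left-hand side I would substitute ${\cal H}_{k}^{(s+1,t)}(z)={\cal H}_{k}^{(s,t)}(z)-e_{k}^{(s,t)}{\cal H}_{k-1}^{(s+1,t)}(z)$ from \eqref{Geronimusn}, and then replace $z\,{\cal H}_{k-1}^{(s+1,t)}(z)$ by ${\cal H}_{k}^{(s,t)}(z)+q_{k}^{(s,t)}{\cal H}_{k-1}^{(s,t)}(z)$ using \eqref{Christoffeln} at level $k$. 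Collecting the terms proportional to ${\cal H}_{k}^{(s,t)}(z)$ and to ${\cal H}_{k-1}^{(s,t)}(z)$ produces \eqref{3termqe} for $k\ge1$; the base case $k=0$ is the explicit identity ${\cal H}_{1}^{(s,t)}(z)=z-q_{1}^{(s,t)}$, which is exactly \eqref{Christoffeln} at level $1$ (recall ${\cal H}_{0}^{(s,t)}(z)=1$, ${\cal H}_{-1}^{(s,t)}(z)=0$, and $e_{0}^{(s,t)}=0$ since $H_{-1}^{(s+1,t)}=0$).

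The second (multiplicative) equation of \eqref{q+en} is immediate once the explicit forms \eqref{qkst} and \eqref{ekst} are inserted: both $q_{k}^{(s+1,t)}e_{k}^{(s+1,t)}$ and $q_{k+1}^{(s,t)}e_{k}^{(s,t)}$ collapse after cancellation to $H_{k-1}^{(s+1,t)}H_{k+1}^{(s+1,t)}/\bigl(H_{k}^{(s+1,t)}\bigr)^{2}$, hence they agree for $k=1,\dots,m-1$.

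The first (additive) equation of \eqref{q+en} is the step I expect to cost the most work, since here $k$ runs up to $m$, so the value $e_{m}^{(s,t)}=0$ — coming from $H_{m+1}^{(s,t)}=0$ of Proposition~\ref{prop1} (equivalently, from ${\cal H}_{m}^{(s,t)}(z)=p(z)$ being independent of $s,t$) — must be used at $k=m$, along with the conventions $H_{-1}^{(\cdot,\cdot)}=0$, $H_{0}^{(\cdot,\cdot)}=1$ at $k=1$. A clean route is to compare the coefficient of $z^{k-1}$ on both sides of \eqref{Christoffeln} and of \eqref{Geronimusn}: since the leading coefficient of $H_{k}^{(s,t)}(z)$ equals the cofactor $H_{k}^{(s,t)}\neq0$, each Hadamard polynomial ${\cal H}_{k}^{(s,t)}(z)$ is monic of degree $k$; writing $b_{k}^{(s,t)}$ for its coefficient of $z^{k-1}$ (with $b_{0}^{(s,t)}:=0$), \eqref{Christoffeln} yields $b_{k-1}^{(s+1,t)}=b_{k}^{(s,t)}+q_{k}^{(s,t)}$ and \eqref{Geronimusn} yields $b_{k}^{(s,t)}=b_{k}^{(s+1,t)}+e_{k}^{(s,t)}$, whence both $q_{k}^{(s,t)}+e_{k}^{(s,t)}$ and $q_{k}^{(s+1,t)}+e_{k-1}^{(s+1,t)}$ turn out to equal $b_{k-1}^{(s+1,t)}-b_{k}^{(s+1,t)}$. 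Alternatively, one may substitute \eqref{qkst}--\eqref{ekst} into both sides, clear denominators, and reduce the identity to two instances (at levels $k$ and $k-1$) of the Jacobi-type relation \eqref{tauJacobin} of Proposition~\ref{prop2}; the only delicate point is then the bookkeeping at the boundary indices $k=1$ and $k=m$.
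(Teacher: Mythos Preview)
Your derivation of the three-term recursion \eqref{3termqe} is exactly the paper's: it writes \eqref{Christoffeln} at level $k+1$, eliminates ${\cal H}_{k}^{(s+1,t)}(z)$ and $z{\cal H}_{k-1}^{(s+1,t)}(z)$ via \eqref{Geronimusn} and \eqref{Christoffeln}, and arrives at the recursion (the paper labels this identity (CG1)).

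For \eqref{q+en} your argument is correct but organized differently from the paper. The paper obtains both the additive and the multiplicative relation in one stroke: it shifts (CG1) by $(s,k)\mapsto(s+1,k-1)$ to get one three-term recursion for ${\cal H}_{k-1}^{(s+1,t)}(z)$, and then performs the \emph{opposite} substitution order (insert \eqref{Geronimusn} at levels $k$ and $k-1$ into \eqref{Christoffeln} at level $k$) to get a second three-term recursion for the same polynomial; since the ${\cal H}_{j}^{(s+1,t)}(z)$ are monic of distinct degrees, matching coefficients of ${\cal H}_{k-1}^{(s+1,t)}$ and ${\cal H}_{k-2}^{(s+1,t)}$ yields both lines of \eqref{q+en} simultaneously. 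Your route instead treats the two lines separately: the multiplicative identity by direct cancellation in the Hankel-determinant formulas \eqref{qkst}--\eqref{ekst}, and the additive identity by reading off the $z^{k-1}$ coefficients in \eqref{Christoffeln} and \eqref{Geronimusn}. This is perfectly valid and arguably more elementary---in particular your multiplicative step bypasses any polynomial identity at all---while the paper's version is a bit more uniform and makes the ``compatibility of Christoffel and Geronimus'' structure explicit. Your careful handling of the boundary cases $k=1$ (via $e_{0}^{(s,t)}=0$, $b_{0}^{(s,t)}=0$) and $k=m$ (via $H_{m+1}^{(s,t)}=0$ from Proposition~\ref{prop1}) is not spelled out in the paper's proof but is indeed needed.
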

%-------------------------------------------------------------------------------------------------
\begin{proof}
Substituting \eqref{Christoffeln} for \eqref{Geronimusn},
we obtain
\begin{align}\label{CG1}
z{\cal H}_k^{(s,t)}(z)={\cal H}_{k+1}^{(s,t)}(z)+(q_{k+1}^{(s,t)}+e_{k}^{(s,t)}){\cal H}_{k}^{(s,t)}(z)+q_{k}^{(s,t)}e_k^{(s,t)}{\cal H}_{k-1}^{(s,t)}(z).
\end{align}
By replacing $s$ and $k$ with $s+1$ and $k-1$ in \eqref{CG1}, respectively,
we derive
\begin{align}\label{tmp7}
%&
z{\cal H}_{k-1}^{(s+1,t)}(z)
%\nonumber\\
%&\quad
={\cal H}_{k}^{(s+1,t)}(z)+(q_{k}^{(s+1,t)}+e_{k-1}^{(s+1,t)}){\cal H}_{k-1}^{(s+1,t)}(z)
+q_{k-1}^{(s+1,t)}e_{k-1}^{(s+1,t)}{\cal H}_{k-2}^{(s+1,t)}(z).
\end{align}
Moreover, by substituting \eqref{Geronimusn} for \eqref{Christoffeln},
we obtain
\begin{align}\label{CG2}
%&
z{\cal H}_{k-1}^{(s+1,t)}(z)
%\nonumber\\
%&\quad
={\cal H}_{k}^{(s+1,t)}(z)+(q_{k}^{(s,t)}+e_{k}^{(s,t)}){\cal H}_{k-1}^{(s+1,t)}(z)+q_{k}^{(s,t)}e_{k-1}^{(s,t)}{\cal H}_{k-2}^{(s+1,t)}(z).
\end{align}
By observing both sides of \eqref{tmp7} and \eqref{CG2},
we have \eqref{q+en}.
\end{proof}
\par
Here, since \eqref{q+en} does not give evolution with respect to $t$,
we may simplify $q_k^{(s,t)}$ and $e_k^{(s,t)}$ as $q_k^{(s)}$ and $e_k^{(s)}$, respectively, 
in the case where $t$ is fixed in Theorem~\ref{lem5}.
Thus, \eqref{q+en} with fixed $t$ in Theorem~\ref{lem5} is equivalent to the autonomous dToda equation \eqref{adToda}.
In other words, the autonomous dToda equation \eqref{adToda} is given by a special case of \eqref{q+en}.
%though observing two pairs of the so-called Christoffel and Geronimus transformations.

%--------------------------------------------------------------------------------------------------------------------------------------------------
%--------------------------------- Sec.4 Hadamard polynomials and non-autonomous discrete Toda -----------------------------
%--------------------------------------------------------------------------------------------------------------------------------------------------
\section{Hadamard polynomials and non-autonomous discrete Toda}\label{sec:31}
In this section, with respect to $t$, but not $s$,
we present relationships between the Hankel determinants $H_k^{(s,t)}$, 
the Hankel polynomials  $H_k^{(s,t)}(z)$, and the Hadamard polynomials ${\cal H}_k^{(s,t)}(z)$ defined in Section~\ref{sec:3}. 
From these relationships, we also derive the non-autonomous dToda equation \eqref{nadToda}.\par
We derive a lemma for the Hankel determinants $H_k^{(s,t)}$ and the Hankel polynomials $H_{k}^{(s,t)}(z)$ at each $s$.
%
%------------------------------------ Lemma 2 -----------------------------------------
%
\begin{lemma}\label{lem2}
For $k=1,2,\dots,m$ and $s,t=0,1,\dots$, the Hankel polynomials $H_{k}^{(s,t)}(z)$ associated with the infinite sequence $\{f_{s}^{(t)}\}_{s,t=0}^{\infty}$ satisfy
\begin{align}
&(z-\mu^{(t)})H_{k}^{(s,t)}H_{k-1}^{(s,t+1)}(z)=H_{k}^{(s,t+1)}H_{k-1}^{(s,t)}(z)+H_{k-1}^{(s,t+1)}H_{k}^{(s,t)}(z),\label{TkzJacobit}\\
&H_{k}^{(s,t+1)}H_k^{(s,t)}(z)=H_{k+1}^{(s,t)}H_{k-1}^{(s,t+1)}(z)+H_{k}^{(s,t)}H_{k}^{(s,t+1)}(z).\label{TkzPluckert}
\end{align}
\end{lemma}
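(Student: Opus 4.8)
The plan is to mirror the proof of Lemma~\ref{lem1}, but now using the $t$-evolution of the Hankel determinants recorded in the proof of Proposition~\ref{prop2}. Recall that by exploiting \eqref{key} --- multiplying earlier rows by $-\mu^{(t)}$ and adding them to later rows --- the Hankel determinant $H_{k+1}^{(s,t)}$ was rewritten in \eqref{temp16} and then \eqref{temp1} as a determinant whose rows and columns are indexed by $t$, $t+1$, $t+2$ rather than by shifts in $s$. The first step is to do the same bordering trick for the Hankel polynomial $H_k^{(s,t)}(z)$: starting from the definition \eqref{Tkz}, perform the same elementary row operations, so that the last column $(1,z,\dots,z^k)^{\top}$ gets shifted by $-\mu^{(t)}$ in the process, producing a column of the form $(1, z-\mu^{(t)}, (z-\mu^{(t)})z, \dots)^{\top}$ or, after a similar column operation, a genuinely "$t$-indexed'' border vector. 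The key bookkeeping point is that the appearance of $z-\mu^{(t)}$ rather than $z$ in \eqref{TkzJacobit} is exactly the residue of these row operations acting on the $\bvec{z}_k$ border; this is the only substantive difference from Lemma~\ref{lem1}.

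Once $H_k^{(s,t)}(z)$, $H_{k-1}^{(s,t)}(z)$, $H_{k-1}^{(s,t+1)}(z)$ and the various plain Hankel determinants are all written as minors/cofactors of one common $(k+2)$-by-$(k+2)$ or $(k+1)$-by-$(k+3)$ matrix built from the columns $\bvec{f}_{k+1}^{(\cdot,t+1)}$, the vectors $\bvec{z}_{k+1}$ and $\bvec{e}_{k+1}$, the identities \eqref{TkzJacobit} and \eqref{TkzPluckert} follow by the same two algebraic engines used before: the Jacobi identity \eqref{Jacobi} for \eqref{TkzJacobit} (with an appropriate choice of $i_1,j_1,i_2,j_2$, in parallel with the choice $i_1=j_1=k+1$, $i_2=1$, $j_2=k$, $D=H_k^{(s,t)}(z)$ made in Lemma~\ref{lem1}), and the Pl\"ucker relation \eqref{Plucker} for \eqref{TkzPluckert} (with $\bvec{x}_1=\bvec{f}_{k+1}^{(s,t+1)}$-type columns, one border $\bvec{z}_{k+1}$, one border $\bvec{e}_{k+1}$, exactly as in the second half of Lemma~\ref{lem1}). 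In other words: translate ``$s+1$'' $\leadsto$ ``$t+1$'' and ``$z$'' $\leadsto$ ``$z-\mu^{(t)}$'' in the proof of Lemma~\ref{lem1}, with the translation justified by the determinant rewritings \eqref{temp16}--\eqref{temp1}.

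I expect the main obstacle to be purely organizational rather than conceptual: one must be careful that the row/column operations that convert $s$-shifts into $t$-shifts are applied consistently to \emph{all} the determinants appearing on both sides of the desired identities, so that they really are cofactors of a single matrix, and one must track precisely where the factor $z-\mu^{(t)}$ enters (it multiplies the $H_{k-1}^{(s,t+1)}(z)$ term on the left of \eqref{TkzJacobit}, reflecting that the border of that particular minor picks up the shift while the borders of the $H_{k-1}^{(s,t)}(z)$ and $H_k^{(s,t)}(z)$ minors do not). A clean way to avoid sign and indexing errors is to first establish the polynomial analogue of \eqref{temp1}, namely an expression of $H_{k+1}^{(s,t)}(z)$ as a $t$-indexed bordered Hankel-type determinant with last column a function of $z-\mu^{(t)}$, and only then feed that into \eqref{Jacobi} and \eqref{Plucker}; after that, \eqref{TkzJacobit} and \eqref{TkzPluckert} drop out just as \eqref{TkzJacobi} and \eqref{TkzPlucker} did. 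Note that, exactly as in Lemma~\ref{lem1}, no nonvanishing hypothesis on the $H_k^{(s,t)}$ is needed here, since these are polynomial identities among determinants; the division by Hankel determinants is postponed to the Hadamard-polynomial stage (the $t$-analogue of Lemma~\ref{lem3}).
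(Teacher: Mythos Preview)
Your proposal is correct and follows essentially the same route as the paper: rewrite $H_k^{(s,t)}(z)$ via the row operations based on \eqref{key} to obtain a bordered determinant whose lower block lives at time $t+1$ and whose last column is $(1,(z-\mu^{(t)})\bvec{z}_k^\top)^\top$, then apply the Jacobi identity with $i_1=j_1=k+1$, $i_2=1$, $j_2=k$ for \eqref{TkzJacobit}; for \eqref{TkzPluckert} use the analogous column operations to reach a form with first column $\bvec{f}_{k+1}^{(s,t)}$, middle columns at time $t+1$, last column $\bvec{z}_{k+1}$, and feed this into the Pl\"ucker relation with the same auxiliary vectors $\bvec{z}_{k+1},\bvec{e}_{k+1}$. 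Your anticipation that $z-\mu^{(t)}$ enters precisely through the row operations acting on the border column, and that no nonvanishing hypothesis is needed at this stage, is exactly right.
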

%-----------------------------------------------------------------------------------------
\begin{proof}
By multiplying the $k$th row in the Hankel polynomials $H_k^{(s,t)}(z)$ by $-\mu^{(t)}$, 
then adding it to the $(k+1)$th row and by using \eqref{key},
we derive
\begin{align*}
H_k^{(s,t)}(z)=
\left|\begin{array}{ccccc}
f_s^{(t)}&f_{s+1}^{(t)}&\dots&f_{s+k-1}^{(t)}&1\\
f_{s+1}^{(t)}&f_{s+2}^{(t)}&\dots&f_{s+k}^{(t)}&z\\
\vdots&\vdots&\ddots&\vdots&\vdots\\
f_{s+k-1}^{(t)}&f_{s+k}^{(t)}&\dots&f_{s+2k-2}^{(t)}&z^{k-1}\\
f_{s+k-1}^{(t+1)}&f_{s+k}^{(t+1)}
&\dots&f_{s+2k-2}^{(t+1)}&(z-\mu^{(t)})z^{k-1}
\end{array}\right|.
\end{align*}
Similarly, for the $k$th, $(k-1)$th, $(k-2)$th, $\dots$, $2$nd rows in the Hankel polynomials $H_k^{(s,t)}(z)$,
it follows that
\begin{align}
H_{k}^{(s,t)}(z)
=\left|\begin{array}{ccccc}
f_{s}^{(t)}&f_{s+1}^{(t)}&\cdots&f_{s+k-1}^{(t)}&1\\
\bvec{f}_{k}^{(s,t+1)}&\bvec{f}_{k}^{(s+1,t+1)}&\cdots&\bvec{f}_{k}^{(s+k-1,t+1)}&
\left(z-\mu^{(t)}\right)\bvec{z}_k
\end{array}\right|.\label{Tkzjacobitproof}
\end{align}
By letting $i_1=j_1=k+1$, $i_2=1$, $j_2=k$ and $D=H_{k}^{(s,t)}(z)$ 
in the Jacobi identity \eqref{Jacobi} and considering \eqref{Tkzjacobitproof},
we obtain
\begin{align}
&\left|\begin{array}{ccccc}
f_s^{(t)} & f_{s+1}^{(t)}&\cdots&f_{s+k-1}^{(t)}\\
\bvec{f}_{k-1}^{(s,t+1)}&\bvec{f}_{k-1}^{(s+1,t+1)}&\cdots&\bvec{f}_{k-1}^{(s+k-1,t+1)}
\end{array}\right|\nonumber\\
&\quad\times\left|
\begin{array}{ccccc}
\bvec{f}_{k}^{(s,t+1)}&\bvec{f}_{k}^{(s+1,t+1)}&\cdots&\bvec{f}_{k}^{(s+k-2,t+1)}&(z-\mu^{(t)})\bvec{z}_k
\end{array}\right|\nonumber\\
&=\left|\begin{array}{ccccc}
f_s^{(t)} & f_{s+1}^{(t)}&\cdots&f_{s+k-2}^{(t)}&1\\
\bvec{f}_{k-1}^{(s,t+1)}&\bvec{f}_{k-1}^{(s+1,t+1)}&\cdots&\bvec{f}_{k-1}^{(s+k-2,t+1)}&
(z-\mu^{(t)})\bvec{z}_{k-1}
\end{array}\right|\nonumber\\
&\quad\times\left|\begin{array}{ccccc}
\bvec{f}_{k}^{(s,t+1)}&\bvec{f}_{k}^{(s+1,t+1)}&\cdots&\bvec{f}_{k}^{(s+k-1,t+1)}
\end{array}\right|\nonumber\\
&+\left|\begin{array}{ccccc}
f_s^{(t)} & f_{s+1}^{(t)}&\cdots&f_{s+k-1}^{(t)}&1\\
\bvec{f}_{k-1}^{(s,t+1)}&\bvec{f}_{k-1}^{(s+1,t+1)}&\cdots&\bvec{f}_{k-1}^{(s+k-1,t+1)}
&(z-\mu^{(t)})\bvec{z}_k
\end{array}\right|\nonumber\\
&\quad\times\left|\begin{array}{ccccc}
\bvec{f}_{k-1}^{(s,t+1)}&\bvec{f}_{k-1}^{(s+1,t+1)}&\cdots&\bvec{f}_{k-1}^{(s+k-2,t+1)}
\end{array}\right|.\label{tmp4}
\end{align} 
Equation \eqref{tmp4} immediately leads to \eqref{TkzJacobit}.\par
By multiplying the $k$th, $(k-1)$th, $\dots$, $1$st columns in the Hankel polynomials $H_k^{(s,t)}(z)$ by $-\mu^{(t)}$
and adding these to the $(k+1)$th, $k$th, $\dots$, $2$nd columns, respectively,
we derive
\begin{align}\label{tmp5}
H_k^{(s,t)}(z)=\left|\begin{array}{cccccc}
\bvec{f}_{k+1}^{(s,t)}&\bvec{f}_{k+1}^{(s,t+1)}&\bvec{f}_{k+1}^{(s+1,t+1)}&\cdots&
\bvec{f}_{k+1}^{(s+k-2,t+1)}&\bvec{z}_{k+1}
\end{array}\right|.
\end{align}
By letting 
$\bvec{x}_1=\bvec{f}_{k+1}^{(s,t+1)}$, $\bvec{x}_2=\bvec{f}_{k+1}^{(s+1,t+1)}$,
$\dots$, $\bvec{x}_k=\bvec{f}_{k+1}^{(s+k-1,t+1)}$, $\bvec{x}_{k+1}=\bvec{f}_{k+1}^{(s,t)}$,
$\bvec{x}_{k+2}=\bvec{z}_{k+1}$, $\bvec{x}_{k+3}=\bvec{e}_{k+1}$
in the Pl\"uker relation \eqref{Plucker}, we obtain
\begin{align}
&\left|\begin{array}{cccccc}
\bvec{f}_{k+1}^{(s,t+1)}&\bvec{f}_{k+1}^{(s+1,t+1)}&
\cdots&\bvec{f}_{k+1}^{(s+k-1,t+1)}&\bvec{f}_{k+1}^{(s,t)}
\end{array}\right|\nonumber\\
&\quad \times \left|\begin{array}{cccccc}
\bvec{f}_{k+1}^{(s,t+1)}&\bvec{f}_{k+1}^{(s+1,t+1)}&
\cdots&\bvec{f}_{k+1}^{(s+k-2,t+1)}&\bvec{z}_{k+1}&
\bvec{e}_{k+1}
\end{array}\right|\nonumber\\
&-\left|\begin{array}{cccccccc}
\bvec{f}_{k+1}^{(s,t+1)}&\bvec{f}_{k+1}^{(s+1,t+1)}&
\cdots&\bvec{f}_{k+1}^{(s+k-1,t+1)}&
\bvec{z}_{k+1}
\end{array}\right|\nonumber\\
&\quad \times\left|\begin{array}{cccccccc}
\bvec{f}_{k+1}^{(s,t+1)}&\bvec{f}_{k+1}^{(s+1,t+1)}&
\cdots&\bvec{f}_{k+1}^{(s+k-2,t+1)}&
\bvec{f}_{k+1}^{(s,t)}&\bvec{e}_{k+1}
\end{array}\right|\nonumber\\
&+\left|\begin{array}{cccccccc}
\bvec{f}_{k+1}^{(s,t+1)}&\bvec{f}_{k+1}^{(s+1,t+1)}&
\cdots&\bvec{f}_{k+1}^{(s+k-1,t+1)}&\bvec{e}_{k+1}
\end{array}\right|\nonumber\\
&\quad \times\left|\begin{array}{cccccccc}
\bvec{f}_{k+1}^{(s,t+1)}&\bvec{f}_{k+1}^{(s+1,t+1)}&
\cdots&\bvec{f}_{k+1}^{(s+k-2,t+1)}&
\bvec{f}_{k+1}^{(s,t)}&\bvec{z}_{k+1}
\end{array}\right|=0.\label{Tkzprooftem4}
\end{align}
Thus, combining \eqref{tmp5} with \eqref{Tkzprooftem4} gives \eqref{TkzPluckert}.

\end{proof}
\par
We give a lemma concerning identities of the Hadamard polynomials ${\cal H}_k^{(s,t)}$.
%
%----------------------------------- Lemma 3 ---------------------------------------
%
\begin{lemma}\label{lem4}
Let us assume that for the infinite sequence $\{f_{s}^{(t)}\}_{s,t=0}^{\infty}$,
the Hankel determinants $H_k^{(s,t)}$ are all nonzero.
Then, for $k=1,2,\dots,m$,
the Hadamard polynomials ${\cal H}_k^{(s,t)}(z)$ satisfy
\begin{align}
&(z-\mu^{(t)}){\cal H}_{k-1}^{(s,t+1)}(z)={\cal H}_{k}^{(s,t)}(z)+Q_k^{(s,t)}{\cal H}_{k-1}^{(s,t)}(z),
\quad s,t=0,1,\dots,\label{Christoffelt}\\
&{\cal H}_{k}^{(s,t)}(z)
={\cal H}_{k}^{(s,t+1)}(z)+E_{k}^{(s,t)}{\cal H}_{k-1}^{(s,t+1)}(z),
\quad s,t=0,1,\dots,\label{Geronimust}
\end{align}
where $Q_k^{(s,t)}$ and $E_k^{(s,t)}$ are given using the Hankel determinants $H_k^{(s,t)}$ as
\begin{align}
&Q_k^{(s,t)}:=\frac{H_{k-1}^{(s,t)}H_{k}^{(s,t+1)}}{H_{k}^{(s,t)}H_{k-1}^{(s,t+1)}},\label{Qkst}\\
&E_k^{(s,t)}:=
\frac{H_{k+1}^{(s,t)}H_{k-1}^{(s,t+1)}}{H_{k}^{(s,t)}H_{k}^{(s,t+1)}}.\label{Ekst}
\end{align}
\end{lemma}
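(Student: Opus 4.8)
The plan is to mirror exactly the proof of Lemma~\ref{lem3}, replacing the $s$-shift identities \eqref{TkzJacobi}, \eqref{TkzPlucker} by their $t$-shift counterparts \eqref{TkzJacobit}, \eqref{TkzPluckert} from Lemma~\ref{lem2}. The Hadamard polynomials ${\cal H}_k^{(s,t)}(z)$ are already defined in \eqref{HadamardHankel} as $H_k^{(s,t)}(z)/H_k^{(s,t)}$, so everything reduces to dividing the two bilinear Hankel identities by the appropriate product of Hankel determinants and recognizing the resulting ratios.

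For \eqref{Christoffelt}: I would take \eqref{TkzJacobit} and divide both sides by $H_k^{(s,t)}H_{k-1}^{(s,t+1)}$. The left-hand side becomes $(z-\mu^{(t)})H_{k-1}^{(s,t+1)}(z)/H_{k-1}^{(s,t+1)} = (z-\mu^{(t)}){\cal H}_{k-1}^{(s,t+1)}(z)$. On the right, the term $H_k^{(s,t+1)}H_{k-1}^{(s,t)}(z)/(H_k^{(s,t)}H_{k-1}^{(s,t+1)})$ rewrites as $\bigl(H_{k-1}^{(s,t)}H_k^{(s,t+1)}/(H_k^{(s,t)}H_{k-1}^{(s,t+1)})\bigr)\cdot H_{k-1}^{(s,t)}(z)/H_{k-1}^{(s,t)} = Q_k^{(s,t)}{\cal H}_{k-1}^{(s,t)}(z)$ by the definition \eqref{Qkst}, and the term $H_{k-1}^{(s,t+1)}H_k^{(s,t)}(z)/(H_k^{(s,t)}H_{k-1}^{(s,t+1)})$ collapses to $H_k^{(s,t)}(z)/H_k^{(s,t)} = {\cal H}_k^{(s,t)}(z)$. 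That yields \eqref{Christoffelt} with $Q_k^{(s,t)}$ as in \eqref{Qkst}. For \eqref{Geronimust}: divide \eqref{TkzPluckert} by $H_k^{(s,t)}H_k^{(s,t+1)}$; the left side is ${\cal H}_k^{(s,t)}(z)$, the last term on the right is ${\cal H}_k^{(s,t+1)}(z)$, and the middle term $H_{k+1}^{(s,t)}H_{k-1}^{(s,t+1)}(z)/(H_k^{(s,t)}H_k^{(s,t+1)})$ is rearranged as $\bigl(H_{k+1}^{(s,t)}H_{k-1}^{(s,t+1)}/(H_k^{(s,t)}H_k^{(s,t+1)})\bigr)\cdot{\cal H}_{k-1}^{(s,t+1)}(z) = E_k^{(s,t)}{\cal H}_{k-1}^{(s,t+1)}(z)$ by \eqref{Ekst}, giving \eqref{Geronimust}. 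Throughout, the nonvanishing hypothesis on all $H_k^{(s,t)}$ is what legitimizes every division and guarantees the ratios $Q_k^{(s,t)}, E_k^{(s,t)}$ are well-defined; it also ensures each ${\cal H}_k^{(s,t)}(z)$ is a genuine monic polynomial of degree $k$.

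There is essentially no obstacle here: the content is entirely in Lemma~\ref{lem2}, and this lemma is a bookkeeping step in the same spirit as Lemma~\ref{lem3}. The only point requiring a moment's care is the regrouping of numerators and denominators so that each ratio of four Hankel determinants is split off cleanly from a Hadamard polynomial of the correct index $(k-1,s,t)$ or $(k-1,s,t+1)$ — i.e., matching the superscripts in $Q_k^{(s,t)}$, $E_k^{(s,t)}$ against those in ${\cal H}_{k-1}^{(s,t)}(z)$, ${\cal H}_{k-1}^{(s,t+1)}(z)$. I would write this out in two short displays exactly parallel to the $tmp3$-style manipulation in the proof of Lemma~\ref{lem3}, and no separate verification of degrees or monicity is needed since those follow from the determinant definition \eqref{Tkz}.
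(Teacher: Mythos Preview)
Your proposal is correct and follows essentially the same approach as the paper: divide \eqref{TkzJacobit} by $H_{k}^{(s,t)}H_{k-1}^{(s,t+1)}$ and \eqref{TkzPluckert} by $H_{k}^{(s,t)}H_{k}^{(s,t+1)}$, then recognize the Hadamard polynomials and the ratios $Q_k^{(s,t)}$, $E_k^{(s,t)}$. The paper's own proof is exactly this two-line computation, so there is nothing to add.
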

%-----------------------------------------------------------------------------------
\begin{proof}
By dividing both sides of \eqref{TkzJacobit} by $H_{k-1}^{(s,t+1)}H_{k}^{(s,t)}$
and considering the Hadamard polynomials ${\cal H}_k^{(s,t)}(z)$ in \eqref{HadamardHankel},
we obtain
\begin{align*}
(z-\mu^{(t)}){\cal H}_{k-1}^{(s,t+1)}(z)=
\left(\frac{H_{k-1}^{(s,t)}H_{k}^{(s,t+1)}}{H_{k}^{(s,t)}H_{k-1}^{(s,t+1)}}\right)
{\cal H}_{k-1}^{(s,t)}(z)+{\cal H}_{k}^{(s,t)}(z),
\end{align*}
which immediately leads to \eqref{Christoffelt} with \eqref{Qkst}.
Similarly, it follows from \eqref{TkzPluckert} that
\begin{align*}
\left(\frac{H_{k+1}^{(s,t)}H_{k-1}^{(s,t+1)}}{H_{k}^{(s,t)}H_{k}^{(s,t+1)}}\right)
{\cal H}_{k-1}^{(s,t+1)}(z)={\cal H}_{k}^{(s,t)}(z)-{\cal H}_{k}^{(s,t+1)}(z),
\end{align*}
which is equivalent to \eqref{Geronimust} with \eqref{Ekst}.

\end{proof}
\par
With the help of Lemma~\ref{lem4},
we give the following theorem concerning $Q_k^{(s,t)}$ and $E_k^{(s,t)}$.
\par
%
%------------------------------------- Thm 2 ----------------------------------------
%
\begin{theorem}\label{lem6}
Let us assume that for the infinite sequence $\{f_{s}^{(t)}\}_{s,t=0}^{\infty}$, 
the Hankel determinants $H_k^{(s,t)}$ are all nonzero.
Then, the Hadamard polynomials ${\cal H}_k^{(s,t)}(z)$ for $s,t=0,1,\dots$ satisfy the three-term recursion formulas:
\begin{align}\label{3termQE}
%&
{\cal H}_{k+1}^{(s,t)}(z)=(z-Q_{k+1}^{(s,t)}-E_{k}^{(s,t)}-\mu^{(t)}){\cal H}_{k}^{(s,t)}(z)
-Q_{k}^{(s,t)}E_{k}^{(s,t)}{\cal H}_{k-1}^{(s,t)}(z),
%\nonumber\\
%&
\quad k=0,1,\dots,m-1.
\end{align}
More importantly, $Q_k^{(s,t)}$ and $E_k^{(s,t)}$ satisfy 
\begin{align}\label{Q+Et}
\left\{
\begin{array}{l}
Q_{k}^{(s,t+1)}+E_{k-1}^{(s,t+1)}+\mu^{(t+1)}
=Q_{k}^{(s,t)}+E_{k}^{(s,t)}+\mu^{(t)},\quad k=1,2,\dots,m,\\[2pt]
Q_{k}^{(s,t+1)}E_{k}^{(s,t+1)}
=Q_{k+1}^{(s,t)}E_{k-1}^{(s,t)},\quad k=1,2,\dots,m-1,\\[2pt]
s,t=0,1,\dots.
\end{array}
\right.
\end{align}
\end{theorem}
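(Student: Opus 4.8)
The plan is to mirror, step for step, the proof of Theorem~\ref{lem5}, but now using the $t$-direction identities of Lemma~\ref{lem4} in place of the $s$-direction identities of Lemma~\ref{lem3}. First I would substitute the Christoffel-type relation \eqref{Christoffelt} into the Geronimus-type relation \eqref{Geronimust}: eliminating ${\cal H}_{k}^{(s,t)}(z)$ between the two gives an identity expressing $(z-\mu^{(t)}){\cal H}_{k}^{(s,t+1)}(z)$ — or, after reindexing, $(z-\mu^{(t)}){\cal H}_{k}^{(s,t)}(z)$ — as a linear combination of ${\cal H}_{k+1}^{(s,t)}(z)$, ${\cal H}_{k}^{(s,t)}(z)$ and ${\cal H}_{k-1}^{(s,t)}(z)$ with coefficients built from $Q_{k+1}^{(s,t)}, E_k^{(s,t)}, Q_k^{(s,t)}E_k^{(s,t)}$ and $\mu^{(t)}$. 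Collecting terms yields \eqref{3termQE} directly; the three-term recursion is thus the same bookkeeping as in \eqref{CG1}, only shifted by $\mu^{(t)}$.

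Next, to obtain the compatibility relations \eqref{Q+Et}, I would follow exactly the pattern of \eqref{tmp7}--\eqref{CG2}. That is, I produce \emph{two} distinct expressions for the same quantity $(z-\mu^{(t)}){\cal H}_{k-1}^{(s,t+1)}(z)$: one by reindexing \eqref{3termQE} (replace $k$ by $k-1$ and note the analogue of \eqref{CG1} holds at time $t+1$, so a $\mu^{(t+1)}$ appears), and another by substituting \eqref{Geronimust} into \eqref{Christoffelt} in the opposite order (which keeps $\mu^{(t)}$). Equating the coefficients of ${\cal H}_{k}^{(s,t+1)}(z)$, of ${\cal H}_{k-1}^{(s,t+1)}(z)$, and of ${\cal H}_{k-2}^{(s,t+1)}(z)$ — which are legitimate because, by Proposition~\ref{prop3} and the nonvanishing hypothesis, the family $\{{\cal H}_j^{(s,t+1)}(z)\}_{j=0}^{m}$ consists of polynomials of exactly degrees $0,1,\dots,m$ and hence is linearly independent — gives the additive relation $Q_k^{(s,t+1)}+E_{k-1}^{(s,t+1)}+\mu^{(t+1)} = Q_k^{(s,t)}+E_k^{(s,t)}+\mu^{(t)}$ from the ${\cal H}_{k-1}^{(s,t+1)}$-coefficients, and the multiplicative relation $Q_k^{(s,t+1)}E_{k-1}^{(s,t+1)} = Q_{k+1}^{(s,t)}E_{k-1}^{(s,t)}$ from the ${\cal H}_{k-2}^{(s,t+1)}$-coefficients.

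The main obstacle I anticipate is purely the index/shift bookkeeping rather than anything conceptual: one must be careful that the product relation that naturally falls out is $Q_k^{(s,t+1)}E_{k-1}^{(s,t+1)}=Q_{k+1}^{(s,t)}E_{k-1}^{(s,t)}$ as written in \eqref{Q+Et} — note that this is the correct $t$-analogue and differs in the subscript pattern from the $s$-version $q_k^{(s+1,t)}e_k^{(s+1,t)}=q_{k+1}^{(s,t)}e_k^{(s,t)}$ in \eqref{q+en}. A clean alternative, if one prefers to avoid re-deriving the degree/independence argument, is to verify \eqref{Q+Et} directly from the determinant formulas \eqref{Qkst}--\eqref{Ekst}: the additive relation becomes a telescoping identity among the $H_k^{(s,t)}$ that reduces to the Hirota-type bilinear relation \eqref{tauJacobit} of Proposition~\ref{prop2}, while the multiplicative relation is an immediate algebraic simplification of the quotients. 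Either route works; I would present the polynomial-coefficient route for symmetry with the proof of Theorem~\ref{lem5}, remarking that the determinant route is available as a cross-check.
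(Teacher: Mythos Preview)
Your proposal is correct and matches the paper's own proof essentially step for step: the paper likewise substitutes \eqref{Christoffelt} and \eqref{Geronimust} in both orders to obtain two three-term expressions (its equations \eqref{lem6-1} and \eqref{gct}) and then compares coefficients to read off \eqref{3termQE} and \eqref{Q+Et}. Your caution about the subscript pattern in the multiplicative relation is well placed---the paper's displayed \eqref{Q+Et} and \eqref{gct} contain index typos, and the relation that genuinely falls out is $Q_{k}^{(s,t+1)}E_{k}^{(s,t+1)}=Q_{k+1}^{(s,t)}E_{k}^{(s,t)}$, consistent with \eqref{nadToda}.
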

%---------------------------------------------------------------------------------------------
\begin{proof}
Similar to the proof for Lemma~\ref{lem5},
it follows from \eqref{Christoffelt} and \eqref{Geronimust} that
\begin{align}
%&
(z-\mu^{(t)}){\cal H}_{k}^{(s,t)}(z)
%\nonumber\\
%&\quad
={\cal H}_{k+1}^{(s,t)}(z)+(Q_{k+1}^{(s,t)}+E_{k}^{(s,t)}){\cal H}_{k}^{(s,t)}(z)
+Q_{k}^{(s,t)}E_{k}^{(s,t)}{\cal H}_{k-1}^{(s,t)}(z),\label{lem6-1}\\
%&
(z-\mu^{(t)}){\cal H}_{k-1}^{(s,t+1)}(z)
%\nonumber\\
%&\quad
={\cal H}_{k}^{(s,t+1)}(z)+(Q_{k}^{(s,t)}+E_{k}^{(s,t)}){\cal H}_{k-1}^{(s,t+1)}(z)
+Q_{k}^{(s,t)}E_{k-1}^{(s,t)}{\cal H}_{k-2}^{(s,t)}(z).\label{gct}
\end{align}
Comparing \eqref{lem6-1} with \eqref{gct} gives \eqref{Q+Et}.

\end{proof}
\par
By considering $Q_k^{(s,t)}, E_k^{(s,t)}$ as $Q_k^{(t)}, E_k^{(t)}$ at each $s$, 
we immediately simplify \eqref{Q+Et} as the non-autonomous dToda equation \eqref{nadToda}.\par
Theorems~\ref{lem5} and~\ref{lem6} suggest that the two types of discrete-time variables play a key role in simultaneously deriving 
both the autonomous dToda equation \eqref{adToda} and the non-autonomous dToda equation \eqref{nadToda}. 

%------------------------------------------------------------------------------------------------------
%--------------------------------- Sec.5 Determinant Solutions -----------------------------
%------------------------------------------------------------------------------------------------------
\section{Determinant solutions}\label{sec:4}
In this section, we clarify eigenpairs of tridiagonal matrices involving $q_k^{(s,t)},e_k^{(s,t)}$ in Theorem~\ref{lem5} and $Q_k^{(s,t)},E_k^{(s,t)}$ in Theorem~\ref{lem6}.
We then show the determinant solutions with sufficient degrees of freedom to the autonomous dToda equation \eqref{adToda}
and the non-autonomous dToda equation \eqref{nadToda}. \par
Let us introduce $m$-by-$m$ bidiagonal matrices with $e_1^{(s,t)}$, $e_2^{(s,t)},\dots,e_{m-1}^{(s,t)}$ and $q_1^{(s,t)},q_2^{(s,t)},\dots,q_m^{(s,t)}$:
\begin{align*}
L^{(s,t)}:=\left(
\begin{array}{cccc}
1&\\
e_1^{(s,t)}&1&\\
&\ddots&\ddots&\\
&&e_{m-1}^{(s,t)}&1
\end{array}
\right),\quad
R^{(s,t)}:=\left(
\begin{array}{cccc}
q_1^{(s,t)}&1\\
&q_2^{(s,t)}&\ddots\\
&&\ddots&1\\
&&&q_m^{(s,t)}
\end{array}
\right).
\end{align*}
Then, the following proposition shows eigenpairs of $A^{(s,t)}:=L^{(s,t)}R^{(s,t)}$.
%
%-------------------------------------- Proposition 4 ------------------------------
%
\begin{proposition}\label{prop4}
Eigenvalues of $A^{(s,t)}$ coincide with the roots $\lambda_1,\lambda_2,\dots,\lambda_m$ of the polynomial
$p(z)=(z-\lambda_1)(z-\lambda_2)\dots(z-\lambda_m)$.
Moreover, the eigenvectors corresponding to $\lambda_1,\lambda_2,\dots,\lambda_m$ 
are ${\cal H}_1^{(s,t)},{\cal H}_2^{(s,t)},\dots,{\cal H}_m^{(s,t)}$, respectively, 
where ${\cal H}_k^{(s,t)}:=({\cal H}_0^{(s,t)}(\lambda_k),{\cal H}_1^{(s,t)}(\lambda_k),\dots,{\cal H}_{m-1}^{(s,t)}(\lambda_k))^\top$.
\end{proposition}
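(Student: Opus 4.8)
The plan is to verify directly that each vector $\mathcal{H}_k^{(s,t)} = (\mathcal{H}_0^{(s,t)}(\lambda_k),\dots,\mathcal{H}_{m-1}^{(s,t)}(\lambda_k))^\top$ satisfies $A^{(s,t)}\mathcal{H}_k^{(s,t)} = \lambda_k \mathcal{H}_k^{(s,t)}$, using the factorization $A^{(s,t)} = L^{(s,t)}R^{(s,t)}$ and the two one-step recursions \eqref{Christoffeln} and \eqref{Geronimusn} from Lemma~\ref{lem3}, evaluated at $z=\lambda_k$. First I would introduce the intermediate vector $\bvec{w} := R^{(s,t)}\mathcal{H}_k^{(s,t)}$ and read off its components: for $j=1,\dots,m-1$, $w_j = q_j^{(s,t)}\mathcal{H}_{j-1}^{(s,t)}(\lambda_k) + \mathcal{H}_j^{(s,t)}(\lambda_k)$, and $w_m = q_m^{(s,t)}\mathcal{H}_{m-1}^{(s,t)}(\lambda_k)$. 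Here is where \eqref{Christoffeln} enters: it says precisely $z\mathcal{H}_{j-1}^{(s+1,t)}(z) = \mathcal{H}_j^{(s,t)}(z) + q_j^{(s,t)}\mathcal{H}_{j-1}^{(s,t)}(z)$, so $w_j = \lambda_k \mathcal{H}_{j-1}^{(s+1,t)}(\lambda_k)$ for $j=1,\dots,m$, where for the bottom component one must check that $\mathcal{H}_m^{(s,t)}(\lambda_k) = 0$ — but that is exactly Proposition~\ref{prop3}, since $\mathcal{H}_m^{(s,t)}(z) = p(z)$ vanishes at each root $\lambda_k$. Thus $\bvec{w} = \lambda_k\,(\mathcal{H}_0^{(s+1,t)}(\lambda_k),\dots,\mathcal{H}_{m-1}^{(s+1,t)}(\lambda_k))^\top = \lambda_k\,\mathcal{H}_k^{(s+1,t)}$.

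Next I would apply $L^{(s,t)}$ to $\bvec{w}$. Its components are $(L^{(s,t)}\bvec{w})_1 = w_1$ and $(L^{(s,t)}\bvec{w})_j = e_{j-1}^{(s,t)}w_{j-1} + w_j$ for $j=2,\dots,m$. Substituting $\bvec{w} = \lambda_k\,\mathcal{H}_k^{(s+1,t)}$ gives $(L^{(s,t)}\bvec{w})_j = \lambda_k\big(\mathcal{H}_{j-1}^{(s+1,t)}(\lambda_k) + e_{j-1}^{(s,t)}\mathcal{H}_{j-2}^{(s+1,t)}(\lambda_k)\big)$ for $j\ge 2$, and now \eqref{Geronimusn} — which reads $\mathcal{H}_k^{(s,t)}(z) = \mathcal{H}_k^{(s+1,t)}(z) + e_k^{(s,t)}\mathcal{H}_{k-1}^{(s+1,t)}(z)$ — collapses the parenthesis to $\mathcal{H}_{j-1}^{(s,t)}(\lambda_k)$. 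For the first component one uses $\mathcal{H}_0^{(s,t)}(z) = \mathcal{H}_0^{(s+1,t)}(z) = 1$ directly. Hence $A^{(s,t)}\mathcal{H}_k^{(s,t)} = L^{(s,t)}\bvec{w} = \lambda_k\,(\mathcal{H}_0^{(s,t)}(\lambda_k),\dots,\mathcal{H}_{m-1}^{(s,t)}(\lambda_k))^\top = \lambda_k\,\mathcal{H}_k^{(s,t)}$, establishing that each $\lambda_k$ is an eigenvalue with the stated eigenvector. Since $A^{(s,t)}$ is $m\times m$ and $\lambda_1,\dots,\lambda_m$ exhaust the roots of its characteristic polynomial — which one can also confirm independently, as $\det A^{(s,t)} = \prod_k q_k^{(s,t)}$ and the Hankel-ratio form of $q_k^{(s,t)}$ telescopes, while more cleanly the three-term recursion \eqref{3termqe} shows $\mathcal{H}_m^{(s,t)}(z) = \det(zI - A^{(s,t)})$ — the listed eigenvalues are complete.

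The one genuine subtlety, rather than an obstacle, is bookkeeping at the boundary indices: the top row of $R^{(s,t)}$ has no $q_0$ term (handled by $\mathcal{H}_{-1}^{(s,t)}(z) = 0$), the bottom entry of $R^{(s,t)}$ is $q_m^{(s,t)}$ with no superdiagonal $1$ (so $w_m$ is the $j=m$ case of the same formula, whose validity hinges on $\mathcal{H}_m^{(s,t)}(\lambda_k)=0$ from Proposition~\ref{prop3}), and the first row of $L^{(s,t)}$ has no $e_0$ term. I would state at the outset the conventions $\mathcal{H}_{-1}^{(s,t)}(z)=0$, $\mathcal{H}_0^{(s,t)}(z)=1$ and the identity $\mathcal{H}_m^{(s,t)}(z)=p(z)$ so that \eqref{Christoffeln} and \eqref{Geronimusn} apply uniformly for $k=1,\dots,m$, after which the two matrix multiplications go through mechanically. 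I should also note that the hypothesis of Lemma~\ref{lem3} — that all Hankel determinants $H_k^{(s,t)}$ are nonzero — is implicitly in force here, both to make $q_k^{(s,t)}$, $e_k^{(s,t)}$ well-defined and to make the Hadamard polynomials $\mathcal{H}_k^{(s,t)}(z)$ genuinely of degree $k$; this will presumably be carried as a standing assumption in the section, so I would simply remark that it is assumed.
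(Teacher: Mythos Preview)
Your proof is correct and takes a genuinely different route from the paper's. The paper first identifies $\mathcal{H}_k^{(s,t)}(z)$ with the characteristic polynomial $\det(zI_k-A_k^{(s,t)})$ of the $k$th leading principal submatrix by observing that both satisfy the same three-term recursion \eqref{3termqe}; combined with Proposition~\ref{prop3} this gives $p(z)=\det(zI_m-A^{(s,t)})$ and hence the eigenvalue claim. For the eigenvectors, the paper then writes out \eqref{3termqe} at $z=\lambda_k$ for $k=0,1,\dots,m-1$ as a linear system and recognizes it directly as $A^{(s,t)}\mathcal{H}_k^{(s,t)}=\lambda_k\mathcal{H}_k^{(s,t)}$. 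You instead exploit the factorization $A^{(s,t)}=L^{(s,t)}R^{(s,t)}$ and the two one-step recursions \eqref{Christoffeln}, \eqref{Geronimusn} from Lemma~\ref{lem3} separately: first $R^{(s,t)}\mathcal{H}_k^{(s,t)}=\lambda_k\mathcal{H}_k^{(s+1,t)}$ via the Christoffel relation, then $L^{(s,t)}\mathcal{H}_k^{(s+1,t)}=\mathcal{H}_k^{(s,t)}$ via the Geronimus relation. This is a clean and arguably more structural argument, since it makes transparent that the Christoffel and Geronimus transformations correspond exactly to the $R$- and $L$-factors, and it yields as a byproduct the intertwining $R^{(s,t)}\mathcal{H}_k^{(s,t)}=\lambda_k\mathcal{H}_k^{(s+1,t)}$. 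The paper's approach, on the other hand, delivers the stronger intermediate fact $\mathcal{H}_k^{(s,t)}(z)=\det(zI_k-A_k^{(s,t)})$ for all $k$, which you only mention in passing; since this identity is actually used later (e.g.\ in the proof of Theorem~\ref{thm1}), you might want to record it explicitly rather than merely allude to it.
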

%----------------------------------------------------------------------------------------
\begin{proof}
Let us prepare the $k$th principal submatrix of $A^{(s,t)}$, namely,
\begin{align*}
A_k^{(s,t)}
:=\left(
\begin{array}{cccc}
q_1^{(s,t)}&1\\
q_1^{(s,t)}e_{1}^{(s,t)}&q_2^{(s,t)}+e_1^{(s,t)}&\ddots\\
&\ddots&\ddots&1\\
&&q_{k-1}^{(s,t)}e_{k-1}^{(s,t)}&q_k^{(s,t)}+e_{k-1}^{(s,t)}
\end{array}
\right).
\end{align*}
Moreover, let $I_k$ be the $k$-by-$k$ identity matrix.
Then, by considering cofactor expansion along the $(k+1)$th row of $\det(zI_{k+1}-A_{k+1}^{(s,t)})$, we derive
\begin{align}\label{phizI-A}
%&
\det(zI_{k+1}-A_{k+1}^{(s,t)})
%\nonumber\\
%&\quad
=(z-q_{k+1}^{(s,t)}-e_k^{(s,t)})\det(zI_k-A_{k}^{(s,t)})
-q_k^{(s,t)}e_k^{(s,t)}\det(zI_{k-1}-A_{k-1}^{(s,t)}).
\end{align}
By comparing \eqref{3termqe} in Theorem~\ref{lem5} with \eqref{phizI-A},
we obtain
\begin{align}
{\cal H}_{k}^{(s,t)}(z)=\det(zI_k-A_k^{(s,t)}),\quad k=1,2,\dots,m.\label{phikstdet}
\end{align}
Thus, by combining Proposition~\ref{prop3} with \eqref{phikstdet},
we have
\begin{align}
p(z)=\det(zI_m-A^{(s,t)}).
\end{align}
Consequently, by taking into account that $p(z)=(z-\lambda_1)(z-\lambda_2)\cdots(z-\lambda_m)$, 
we see that
$\lambda_1,\lambda_2,\dots,\lambda_m$ are the eigenvalues of $A^{(s,t)}$.\par
Equation \eqref{3termqe} with $z=\lambda_k$ leads to
\begin{align}\label{qephisystem}
\left\{
\begin{array}{l}
(q_{1}^{(s,t)}+e_{0}^{(s,t)}){\cal H}_{0}^{(s,t)}(\lambda_k)+{\cal H}_{1}^{(s,t)}(\lambda_k)=\lambda_k{\cal H}_{0}^{(s,t)}(\lambda_k),\\
q_{1}^{(s,t)}e_{1}^{(s,t)}{\cal H}_{0}^{(s,t)}(\lambda_k)+
(q_{2}^{(s,t)}+e_{1}^{(s,t)}){\cal H}_{1}^{(s,t)}(\lambda_k)+{\cal H}_{2}^{(s,t)}(\lambda_k)=\lambda_k{\cal H}_{1}^{(s,t)}(\lambda_k),\\
\quad\vdots\\
q_{m-1}^{(s,t)}e_{m-1}^{(s,t)}{\cal H}_{m-2}^{(s,t)}(\lambda_k)+
(q_{m}^{(s,t)}+e_{m-1}^{(s,t)}){\cal H}_{m-1}^{(s,t)}(\lambda_k)=\lambda_k{\cal H}_{m-1}^{(s,t)}(\lambda_k).
\end{array}
\right.
\end{align}
Noting that $A^{(s,t)}=A_m^{(s,t)}$, we can rewrite \eqref{qephisystem} as 
\begin{align}\label{eigenpqe}
A^{(s,t)}{\cal H}_k^{(s,t)}=\lambda_k{\cal H}_k^{(s,t)},\quad k=1,2,\dots,m.
\end{align}
Equation \eqref{eigenpqe} shows that ${\cal H}_{1}^{(s,t)}$, ${\cal H}_{2}^{(s,t)}$, $\dots$, ${\cal H}_{m}^{(s,t)}$ are eigenvectors of $A^{(s,t)}$ corresponding to $\lambda_1$, $\lambda_2$, $\dots$, $\lambda_m$, respectively.\par

\end{proof}
\par
It is emphasized here that the discrete-time variable $t$ is meaningless for investigating the autonomous dToda equation \eqref{adToda}.
Without loss of generality, we can remove the discrete-time variable $t$ from the superscript in the case of the autonomous dToda equation \eqref{adToda}.
Thus, we henceforth regard $q_k^{(s,t)}, e_k^{(s,t)}$ in \eqref{q+en} with $t=0$ as $q_k^{(s)}, e_k^{(s)}$ in \eqref{adToda}, that is, $q_k^{(s)}=q_k^{(s,0)}$ and  $e_k^{(s)}=e_k^{(s,0)}$.
With the help of Proposition~\ref{prop4}, 
we derive the following theorem with respect to the solution to the autonomous dToda equation \eqref{adToda}.
%
%------------------------------------- Theorem 3 ---------------------------------------------------
%
\begin{theorem}\label{thm1}
The solution to the autonomous dToda equation \eqref{adToda} can be expressed as
\begin{align}
&q_k^{(s)}=\frac{H_{k-1}^{(s,0)}H_{k}^{(s+1,0)}}{H_{k}^{(s,0)}H_{k-1}^{(s+1,0)}},
\quad k=1,2,\dots,m,\label{qkstau}\\
&e_k^{(s)}=\frac{H_{k+1}^{(s,0)}H_{k-1}^{(s+1,0)}}{H_{k}^{(s,0)}H_{k}^{(s+1,0)}},\label{ekstau}
\quad k=1,2,\dots,m-1,
\end{align}
where $f_0^{(0)},f_{1}^{(0)},\dots, f_{m-1}^{(0)}$ appearing in the Hankel determinants $H_k^{(0,0)}$ 
are uniquely given from $q_1^{(0)}$, $q_2^{(0)}$, $\dots$, $q_m^{(0)}$ 
and $e_1^{(0)}$, $e_2^{(0)}$, $\dots$, $e_{m-1}^{(0)}$ in $A^{(0,0)}$. 
\end{theorem}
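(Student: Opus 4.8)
The plan is to verify that the expressions \eqref{qkstau} and \eqref{ekstau} genuinely solve the autonomous dToda equation \eqref{adToda}, and then to address the claimed degrees-of-freedom statement, namely that any admissible set of initial data $q_k^{(0)},e_k^{(0)}$ can be realized by a suitable choice of $f_0^{(0)},\dots,f_{m-1}^{(0)}$. For the first part, observe that \eqref{qkstau} and \eqref{ekstau} are exactly \eqref{qkst} and \eqref{ekst} of Lemma~\ref{lem3} evaluated at $t=0$, and by Theorem~\ref{lem5} these quantities satisfy the system \eqref{q+en}, which with $t$ fixed (here $t=0$) is precisely \eqref{adToda}. So the only genuine work is the parameter-counting/well-posedness claim at the end. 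The subtlety is that Lemma~\ref{lem3} and Theorem~\ref{lem5} require all Hankel determinants $H_k^{(s,t)}$ to be nonzero, so one must make sure the correspondence with initial data lands inside that regime.

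First I would recall from Proposition~\ref{prop4} (with $t=0$) that $A^{(0,0)}=L^{(0,0)}R^{(0,0)}$ is the tridiagonal matrix built from $q_1^{(0)},\dots,q_m^{(0)},e_1^{(0)},\dots,e_{m-1}^{(0)}$, and that its Hankel-determinant recipe is invertible in the following sense: the sequence $f_0^{(0)},\dots,f_{2m-1}^{(0)}$ (equivalently, just $f_0^{(0)},\dots,f_{m-1}^{(0)}$, since \eqref{fnmt} extends it) determines and is determined by the moment data of $A^{(0,0)}$. Concretely, I would introduce the vector of "moments" $g_j := \bvec{c}^{\top} (A^{(0,0)})^{j}\bvec{b}$ for appropriate fixed $\bvec{b},\bvec{c}$ (the standard Hankel/moment construction for a Jacobi-type matrix), show that $H_k^{(0,0)}$ computed from $\{g_j\}$ reproduces the $LU$-type factorization $A^{(0,0)}=L^{(0,0)}R^{(0,0)}$ via \eqref{qkst}–\eqref{ekst}, and note that the map from the free initial moments $g_0,\dots,g_{m-1}$ to the pair $(q^{(0)},e^{(0)})$ is a birational bijection onto the locus where all $H_k^{(0,0)}\neq 0$ — this is essentially the classical correspondence between Hankel determinants, the $qd$-algorithm, and tridiagonal $LU$ factorization, which one can cite or reprove by an explicit induction on $k$ using \eqref{qkst}–\eqref{ekst} to solve recursively for $f_k^{(0)}$ in terms of $f_0^{(0)},\dots,f_{k-1}^{(0)}$ and $q_1^{(0)},\dots,q_k^{(0)},e_1^{(0)},\dots,e_{k-1}^{(0)}$.

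The key computational step is this inductive inversion: given $q_1^{(0)},\dots,q_m^{(0)},e_1^{(0)},\dots,e_{m-1}^{(0)}$ with all $q_k^{(0)}e_k^{(0)}\neq 0$, set $f_0^{(0)}$ to be an arbitrary nonzero scale, and then use \eqref{qkst} at $(s,t)=(0,0)$, together with the shift relation between $H_k^{(0,0)}$ and $H_k^{(1,0)}$ supplied by \eqref{tauJacobin}, to solve successively for $f_1^{(0)}, f_2^{(0)},\dots,f_{m-1}^{(0)}$; each step is a linear equation in the next unknown $f_k^{(0)}$ with a nonvanishing coefficient (guaranteed by $H_{k-1}^{(0,0)}\neq 0$ from the previous step), so the solution exists and is unique. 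Then \eqref{fnmt} (with the $a_j$ the coefficients of the characteristic polynomial of $A^{(0,0)}$, which by Proposition~\ref{prop4} is $p(z)$) extends the sequence and, by construction, reconstructs $\{f_s^{(0)}\}$ consistently with \eqref{key} once $\mu^{(0)}$ is fixed. I would conclude by checking that the $(q_k^{(s)},e_k^{(s)})$ produced from this $f$ indeed propagate under \eqref{adToda} — which is just Theorem~\ref{lem5} again — and hence \eqref{qkstau}–\eqref{ekstau} hold for all $s$, with the stated uniqueness of the initial $f$-values.

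The main obstacle I anticipate is the nondegeneracy bookkeeping: Theorem~\ref{lem5} and Lemma~\ref{lem3} were proved under the blanket hypothesis that all $H_k^{(s,t)}\neq 0$, but an arbitrary choice of initial $(q^{(0)},e^{(0)})$ only controls $H_k^{(0,0)}$ and $H_k^{(1,0)}$; one must argue (or restrict attention to the generic/open locus where) the Hankel determinants stay nonzero for all $s$, or else interpret the solution formula in the sense of analytic continuation / as an identity of rational functions valid wherever the denominators do not vanish. A clean way to handle this is to note that $H_k^{(s,0)}$ is, up to a nonzero factor depending only on $L^{(0,0)}$, a polynomial in the entries of $(A^{(0,0)})^{s}$ with the $\lambda_i$ as parameters, so the vanishing locus is a proper algebraic subset; on its complement everything above goes through verbatim, and that is exactly the regime in which \eqref{adToda} itself is well-defined (no division by zero in the $qd$ recursion). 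I would state the theorem's conclusion with this implicit genericity understood, mirroring how the earlier lemmas phrase their hypotheses.
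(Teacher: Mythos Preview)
Your first step---invoking Lemma~\ref{lem3} and Theorem~\ref{lem5} at $t=0$ to see that \eqref{qkstau}--\eqref{ekstau} satisfy \eqref{adToda}---is exactly what the paper does. For the inverse problem (recovering $f_0^{(0)},\dots,f_{m-1}^{(0)}$ from the initial $q_k^{(0)},e_k^{(0)}$), however, you take a different route. You propose to invert the $qd$ relations \eqref{qkst}--\eqref{ekst} directly, solving successively for $f_1^{(0)},f_2^{(0)},\dots$ from $q_1^{(0)},e_1^{(0)},q_2^{(0)},\dots$, with the linear recurrence \eqref{fnmt} (whose coefficients come from $\det(zI-A^{(0,0)})$) supplying the extension beyond $f_{m-1}^{(0)}$. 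The paper instead goes through the Hadamard polynomials: from \eqref{phikstdet} one has ${\cal H}_k^{(0)}(z)=\det(zI_k-A_k^{(0,0)})$, so the coefficients $b_{k,j}^{(0)}$ of ${\cal H}_k^{(0)}(z)$ are determined by the initial data; a cofactor expansion of $H_k^{(0)}(z)$ along its last column then identifies the $b_{k,j}^{(0)}$ as ratios of modified Hankel determinants, and the inverse of Cramer's rule turns this into the explicitly triangular system
\[
f_k^{(0)}=-b_{k,1}^{(0)}f_{k-1}^{(0)}-\cdots-b_{k,k}^{(0)}f_0^{(0)},\qquad f_0^{(0)}=1,
\]
which settles uniqueness in one stroke. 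Your approach is viable and arguably more in the spirit of the $qd$ algorithm, but it requires tracking which Hankel minor appears as the leading coefficient at each inductive step and verifying that the unused relations (those $q_k^{(0)},e_k^{(0)}$ not needed to reach $f_{m-1}^{(0)}$) are automatically consistent once \eqref{fnmt} is imposed; the paper's polynomial-coefficient argument bypasses both issues. Your discussion of the nondegeneracy hypothesis $H_k^{(s,0)}\neq 0$ for all $s$ is more careful than the paper, which simply inherits it from the standing assumption in Lemma~\ref{lem3}; the genericity remark you make is a reasonable way to frame it.
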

%------------------------------------------------------------------------------------------------------
\begin{proof}
By ignoring the discrete-time variable $t$ in Lemma~\ref{lem3} and Theorem~\ref{lem5},
we derive \eqref{qkstau} and \eqref{ekstau} as the solution to the autonomous dToda equation \eqref{adToda}.
\par
Using a cofactor expansion along the $(k+1)$th column for the Hankel polynomials $H_k^{(s)}(z)$ without the discrete-time variable $t$ in \eqref{Tkz}, 
we obtain 
\begin{align}\label{thm1-1}
H_k^{(s)}(z)=H_k^{(s)}z^k+\tilde{H}_{k-1}^{(s)}z^{k-1}+\tilde{H}_{k-2}^{(s)}z^{k-2}+\dots+\tilde{H}_0^{(s)},
\end{align}
where $\tilde{H}_{k-i}^{(s)}$ denote determinants given by replacing the $(k-i+1)$th column with $-\bvec{f}_{k}^{(s+k)}$ 
in the Hankel determinants $H_k^{(s)}$, that is,
\begin{align*}
\tilde{H}_{k-i}^{(s)}:=\left|
\begin{array}{cccccccccccc}
\bvec{f}_{k}^{(s)}&\bvec{f}_k^{(s+1)}&
\cdots&\bvec{f}_k^{(s+k-i-1)}&
-\bvec{f}_k^{(s+k)}&\bvec{f}_k^{(s+k-i+1)}&
\cdots&\bvec{f}_k^{(s+k-1)}&
\end{array}\right|.
\end{align*}
Thus, by comparing \eqref{thm1-1} with ${\cal H}_{k}^{(s,t)}(z)=z^k+b_{k,1}^{(s,t)}z^{k-1}+b_{k,2}^{(s,t)}z^{k-2}+\cdots+b_{k,k}^{(s,t)}$, 
we derive $b_{k,1}^{(s)}=\tilde{H}_{k-1}^{(s)}/H_k^{(s)},b_{k,2}^{(s)}=\tilde{H}_{k-2}^{(s)}/H_k^{(s)},\dots,
b_{k,k}^{(s)}=\tilde{H}_{0}^{(s)}/H_k^{(s)}$.
Applying the inverse of Cramer's rule to these with $s=0$, we obtain
\begin{align}
\left(\begin{array}{cccc}
\bvec{f}_{k}^{(0)}&\bvec{f}_{k}^{(1)}&\cdots&\bvec{f}_{k}^{(k-1)}
\end{array}
\right)\left(
\begin{array}{c}
b_{k,k}^{(0)}\\b_{k,k-1}^{(0)}\\\vdots\\b_{k,1}^{(0)}
\end{array}\right)=
-\bvec{f}_{k}^{(k)},\quad
k=1,2,\dots,m.\label{beqs}
\end{align}
Therefore it follows that
\begin{align}
\left\{
\begin{array}{l}
f_0^{(0)}=1,\\
f_1^{(0)}=-b_{1,1}f_0^{(0)},\\
f_2^{(0)}=-b_{2,1}f_{1}^{(0)}-b_{2,2}f_0^{(0)},\\
\quad\vdots\\
f_{m-1}^{(0)}=
-b_{m-1,1}f_{m-2}^{(0)}-b_{m-1,2}f_{m-3}^{(0)}-\cdots-b_{m-1,m-1}f_{0}^{(0)},
\end{array}
\right.\label{fbdet}
\end{align}
which implies that $f_0^{(0)},f_{1}^{(0)},\dots, f_{m-1}^{(0)}$ are uniquely determined
using $b_{k,1}^{(0)}$, $b_{k,2}^{(0)}$, $\dots$, $b_{k,k}^{(0)}$.\par
Moreover, it is obvious from \eqref{phikstdet} with $s=0$ and without the discrete-time $t$ that 
$q_k^{(0)}$ and $e_k^{(0)}$ generate the Hadamard polynomial ${\cal H}_k^{(0)}(z)$.
Thus, we see that $b_{k,1}^{(0)}$, $b_{k,2}^{(0)}$, 
$\dots$, $b_{k,k}^{(0)}$ are uniquely given.

\end{proof}
\par
Similar to $L^{(s,t)}$ and $R^{(s,t)}$ involving $e_{k}^{(s,t)}$ and $q_k^{(s,t)}$, respectively,
let us define $m$-by-$m$ matrices involving $E_k^{(s,t)}$ and $Q_k^{(s,t)}$ appearing in Lemma~\ref{lem4} and Theorem~\ref{lem6}:
\begin{align*}
{\cal L}^{(s,t)}:=\left(
\begin{array}{cccc}
1&\\
E_1^{(s,t)}&1&\\
&\ddots&\ddots&\\
&&E_{m-1}^{(s,t)}&1
\end{array}
\right),\quad
{\cal R}^{(s,t)}:=\left(
\begin{array}{cccc}
Q_1^{(s,t)}&1\\
&Q_2^{(s,t)}&\ddots\\
&&\ddots&1\\
&&&Q_m^{(s,t)}
\end{array}
\right).
\end{align*}
Then, we have the following proposition concerning eigenpairs of ${\cal A}^{(s,t)}:={\cal L}^{(s,t)}{\cal R}^{(s,t)}$.
%
%------------------------------------------- Proposition 5----------------------------------------
%
\begin{proposition}\label{prop5}
Eigenvalues of ${\cal A}^{(s,t)}+\mu^{(t)}I_m$ coincide with the roots $\lambda_1$, $\lambda_2$, $\dots$, $\lambda_m$ of the polynomial
$p(z)=(z-\lambda_1)(z-\lambda_2)\dots(z-\lambda_m)$.
Moreover, the eigenvectors corresponding to $\lambda_1,\lambda_2,\dots,\lambda_m$ 
are ${\cal H}_1^{(s,t)},{\cal H}_2^{(s,t)},\dots,{\cal H}_m^{(s,t)}$, respectively.
\end{proposition}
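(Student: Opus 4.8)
The plan is to mirror the proof of Proposition~\ref{prop4}, replacing the $s$-shift machinery with the $t$-shift machinery from Section~\ref{sec:31}. First I would introduce the $k$th principal submatrix $\mathcal{A}_k^{(s,t)}$ of $\mathcal{A}^{(s,t)}={\cal L}^{(s,t)}{\cal R}^{(s,t)}$, which is the tridiagonal matrix with diagonal entries $Q_1^{(s,t)}, Q_2^{(s,t)}+E_1^{(s,t)}, \dots, Q_k^{(s,t)}+E_{k-1}^{(s,t)}$, superdiagonal entries all $1$, and subdiagonal entries $Q_j^{(s,t)}E_j^{(s,t)}$. Expanding $\det(zI_{k+1}-\mathcal{A}_{k+1}^{(s,t)})$ along its last row gives the three-term recursion
\begin{align*}
\det(zI_{k+1}-\mathcal{A}_{k+1}^{(s,t)})
&=(z-Q_{k+1}^{(s,t)}-E_k^{(s,t)})\det(zI_k-\mathcal{A}_k^{(s,t)})\\
&\quad-Q_k^{(s,t)}E_k^{(s,t)}\det(zI_{k-1}-\mathcal{A}_{k-1}^{(s,t)}).
\end{align*}

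Next I would compare this with the three-term recursion \eqref{3termQE} from Theorem~\ref{lem6}. The two recursions differ only in that \eqref{3termQE} carries an extra $-\mu^{(t)}$ in the coefficient of ${\cal H}_k^{(s,t)}(z)$. Hence setting $w:=z+\mu^{(t)}$, the polynomial $\det(wI_k-\mathcal{A}_k^{(s,t)})$ as a function of $z$ satisfies exactly the same recursion and same initial data ($k=-1,0$ cases) as ${\cal H}_k^{(s,t)}(z)$, so by induction ${\cal H}_k^{(s,t)}(z)=\det((z+\mu^{(t)})I_k-\mathcal{A}_k^{(s,t)})$ for $k=1,2,\dots,m$. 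Taking $k=m$ and invoking Proposition~\ref{prop3} (which gives ${\cal H}_m^{(s,t)}(z)=p(z)$) yields $p(z)=\det((z+\mu^{(t)})I_m-\mathcal{A}^{(s,t)})=\det(zI_m-(\mathcal{A}^{(s,t)}-\mu^{(t)}I_m))$. Substituting $z\mapsto z-\mu^{(t)}$, this says $p(z-\mu^{(t)})$... more cleanly: the roots of $\det(zI_m-(\mathcal{A}^{(s,t)}+\mu^{(t)}I_m))$... I would phrase it so that $\det(wI_m-(\mathcal{A}^{(s,t)}+\mu^{(t)}I_m)) = \det((w-\mu^{(t)})I_m - \mathcal{A}^{(s,t)}) = p(w-\mu^{(t)}+\mu^{(t)})$... let me just say: from ${\cal H}_m^{(s,t)}(z)=\det((z+\mu^{(t)})I_m-\mathcal{A}^{(s,t)})=p(z)$ it follows that $\det(wI_m-\mathcal{A}^{(s,t)})=p(w-\mu^{(t)})$, so the eigenvalues of $\mathcal{A}^{(s,t)}$ are $\lambda_j+\mu^{(t)}$... hmm, rather, writing $w = \lambda_j + \mu^{(t)}$ gives $p(\lambda_j)=0$; thus the eigenvalues of $\mathcal{A}^{(s,t)}$ are $\lambda_1-\mu^{(t)},\dots$... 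I would be careful here and conclude that $\mathcal{A}^{(s,t)}+\mu^{(t)}I_m$ has characteristic polynomial $p$, hence eigenvalues $\lambda_1,\dots,\lambda_m$.

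For the eigenvectors, I would set $z=\lambda_k$ in the relation ${\cal H}_{j}^{(s,t)}(z)=\det((z+\mu^{(t)})I_j-\mathcal{A}_j^{(s,t)})$ read back through \eqref{3termQE}, obtaining the linear system
\begin{align*}
\left\{
\begin{array}{l}
(Q_1^{(s,t)}+E_0^{(s,t)}+\mu^{(t)}){\cal H}_0^{(s,t)}(\lambda_k)+{\cal H}_1^{(s,t)}(\lambda_k)=\lambda_k{\cal H}_0^{(s,t)}(\lambda_k),\\
\quad\vdots\\
Q_{m-1}^{(s,t)}E_{m-1}^{(s,t)}{\cal H}_{m-2}^{(s,t)}(\lambda_k)+(Q_m^{(s,t)}+E_{m-1}^{(s,t)}){\cal H}_{m-1}^{(s,t)}(\lambda_k)=\lambda_k{\cal H}_{m-1}^{(s,t)}(\lambda_k),
\end{array}
\right.
\end{align*}
with $E_0^{(s,t)}:=0$, which is precisely $(\mathcal{A}^{(s,t)}+\mu^{(t)}I_m){\cal H}_k^{(s,t)}=\lambda_k{\cal H}_k^{(s,t)}$. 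This identifies ${\cal H}_1^{(s,t)},\dots,{\cal H}_m^{(s,t)}$ as the required eigenvectors. The only mild obstacle is bookkeeping the $\mu^{(t)}$ shift correctly — it must land in the scalar coefficient of ${\cal H}_k^{(s,t)}(z)$ in the recursion and nowhere else — but once the substitution $z\mapsto z+\mu^{(t)}$ is tracked consistently through the $k=m$ case and the eigenvector system, the argument is a routine transcription of the proof of Proposition~\ref{prop4}.
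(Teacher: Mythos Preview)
Your approach is essentially the paper's: introduce the principal submatrices $\mathcal{A}_k^{(s,t)}$, match their characteristic-polynomial recursion to \eqref{3termQE}, invoke Proposition~\ref{prop3} for the eigenvalues, and then read the eigenvector relation off the three-term recursion at $z=\lambda_k$. The only substantive difference is that the paper expands $\det[(z-\mu^{(t)})I_{k+1}-\mathcal{A}_{k+1}^{(s,t)}]$ directly, which immediately gives the coefficient $z-Q_{k+1}^{(s,t)}-E_k^{(s,t)}-\mu^{(t)}$ and hence ${\cal H}_k^{(s,t)}(z)=\det[(z-\mu^{(t)})I_k-\mathcal{A}_k^{(s,t)}]$ with no substitution needed.

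Your write-up has a sign slip that your own confused paragraph signals but does not resolve: to make the unshifted recursion $\det(wI_{k+1}-\mathcal{A}_{k+1}^{(s,t)})=(w-Q_{k+1}^{(s,t)}-E_k^{(s,t)})\det(wI_k-\mathcal{A}_k^{(s,t)})-\cdots$ match \eqref{3termQE}, you must take $w:=z-\mu^{(t)}$, not $w:=z+\mu^{(t)}$. With the correct sign you get ${\cal H}_k^{(s,t)}(z)=\det[(z-\mu^{(t)})I_k-\mathcal{A}_k^{(s,t)}]$, hence $p(z)=\det[(z-\mu^{(t)})I_m-\mathcal{A}^{(s,t)}]=\det[zI_m-(\mathcal{A}^{(s,t)}+\mu^{(t)}I_m)]$, and the eigenvalue claim follows cleanly. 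Relatedly, in your displayed eigenvector system the last equation is missing the $+\mu^{(t)}$ on the diagonal term; every row of $\mathcal{A}^{(s,t)}+\mu^{(t)}I_m$ carries that shift, so the final equation should read $Q_{m-1}^{(s,t)}E_{m-1}^{(s,t)}{\cal H}_{m-2}^{(s,t)}(\lambda_k)+(Q_m^{(s,t)}+E_{m-1}^{(s,t)}+\mu^{(t)}){\cal H}_{m-1}^{(s,t)}(\lambda_k)=\lambda_k{\cal H}_{m-1}^{(s,t)}(\lambda_k)$. Once these are fixed the argument is correct and identical in spirit to the paper's.
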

%----------------------------------------------------------------------------------------------------
\begin{proof}
Let us prepare the $k$th principle submatrix of ${\cal A}^{(s,t)}$, namely,
\begin{align*}
{\cal A}_k^{(s,t)}:=
\left(
\begin{array}{cccc}
Q_1^{(s,t)}&1\\
Q_1^{(s,t)}E_1^{(s,t)}&Q_2^{(s,t)}+E_1^{(s,t)}&\ddots\\
&\ddots&\ddots&1\\
&&Q_{k-1}^{(s,t)}E_{k-1}^{(s,t)}&Q_k^{(s,t)}+E_{k-1}^{(s,t)}
\end{array}
\right).
\end{align*}
Then, by considering cofactor expansion along the $(k+1)$th row for
$\det[(z-\mu^{(t)})I_{k+1}-{\cal A}_{k+1}^{(s,t)}]$,
we obtain 
\begin{align}\label{detzmuicalA}
%&
\det[(z-\mu^{(t)})I_{k+1}-{\cal A}_{k+1}^{(s,t)}]%\nonumber\\
&%\quad
=(z-Q_{k+1}^{(s,t)}-E_{k}^{(s,t)}-\mu^{(t)})
\det[(z-\mu^{(t)})I_{k}-{\cal A}_{k}^{(s,t)}]\nonumber\\
&%\quad
\quad\quad-Q_{k}^{(s,t)}E_{k}^{(s,t)}
\det[(z-\mu^{(t)})I_{k-1}-{\cal A}_{k-1}^{(s,t)}].
\end{align}
Comparing this with \eqref{3termQE},
we derive
\begin{align*}
{\cal H}_{k}^{(s,t)}=\det[(z-\mu^{(t)})I_k-{\cal A}_{k}^{(s,t)}],
\end{align*}
which implies $p(z)=\det[(z-\mu^{(t)})I_m-{\cal A}^{(s,t)}]$.
Thus, by recalling $p(z)=(z-\lambda_1)(z-\lambda_2)\cdots(z-\lambda_m)$,
we see that $\lambda_1,\lambda_2,\dots,\lambda_m$ are eigenvalues of ${\cal A}^{(s,t)}+\mu^{(t)}I_m$.
Similarly to the case of $q_k^{(s,t)}$ and $e_k^{(s,t)}$,
by reconsidering \eqref{3termQE} with $z=\lambda_k$,
we have
\begin{align}\label{eigenpairsofcalA}
({\cal A}^{(s,t)}+\mu^{(t)}I_m){\cal H}_k^{(s,t)}=\lambda_k{\cal H}_k^{(s,t)},\quad k=1,2,\dots,m.
\end{align}
Thus, ${\cal H}_{k}^{(s,t)}$ are eigenvectors of  ${\cal A}^{(s,t)}+\mu^{(t)}I_m$ corresponding to the eigenvalues $\lambda_k$.

\end{proof}
\par
With the help of Proposition~\ref{prop5},
we derive the following theorem concerning the solution to the non-autonomous dToda equation \eqref{nadToda}.
%
%---------------------------------- Theorem 4 -----------------------------------
%
\begin{theorem}\label{thm2}
The solution to the non-autonomous dToda equation \eqref{nadToda} can be expressed as
\begin{align}
&Q_k^{(t)}=\frac{H_{k-1}^{(0,t)}H_{k}^{(0,t+1)}}{H_{k}^{(0,t)}H_{k-1}^{(0,t+1)}},
\quad k=1,2,\dots,m,\label{Qkttau}\\
&E_k^{(t)}=\frac{H_{k+1}^{(0,t)}H_{k-1}^{(0,t+1)}}{H_{k}^{(0,t)}H_{k}^{(0,t+1)}},
\quad k=1,2,\dots,m-1,\label{Ekttau}
\end{align}
where $f_0^{(0)},f_{1}^{(0)},\dots, f_{m-1}^{(0)}$ appearing in the Hankel determinants $H_k^{(0,0)}$ 
are uniquely given by $Q_1^{(0)}$, $Q_2^{(0)}$, $\dots$, $Q_m^{(0)}$ 
and $E_1^{(0)}$, $E_2^{(0)}$, $\dots$, $E_{m-1}^{(0)}$ in ${\cal A}^{(0,0)}$. 
\end{theorem}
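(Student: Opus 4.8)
The plan is to mirror, with respect to the discrete-time variable $t$, the argument already carried out for $s$ in Theorem~\ref{thm1}. First I would invoke Lemma~\ref{lem4} together with Theorem~\ref{lem6}: specializing the expressions \eqref{Qkst} and \eqref{Ekst} for $Q_k^{(s,t)}$ and $E_k^{(s,t)}$ to $s=0$ and writing $Q_k^{(t)}=Q_k^{(0,t)}$, $E_k^{(t)}=E_k^{(0,t)}$, the relations \eqref{Q+Et} collapse precisely to the non-autonomous dToda equation \eqref{nadToda}. Hence \eqref{Qkttau} and \eqref{Ekttau} are a solution; what remains is to verify that this solution carries sufficient degrees of freedom, i.e. that the $2m-1$ free initial data $f_0^{(0)},f_1^{(0)},\dots,f_{m-1}^{(0)}$ (recall $f_0^{(0)}$ may be normalized, and $f_m^{(0)},f_{m+1}^{(0)},\dots$ are fixed by \eqref{fnmt}, while $f_s^{(t)}$ for $t\ge 1$ is fixed by \eqref{key}) can be recovered bijectively from the $2m-1$ entries $Q_1^{(0)},\dots,Q_m^{(0)},E_1^{(0)},\dots,E_{m-1}^{(0)}$ of ${\cal A}^{(0,0)}$.

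For that recovery step I would run the same chain as in Theorem~\ref{thm1}, now using Proposition~\ref{prop5} in place of Proposition~\ref{prop4}. The identity $\mathcal{H}_k^{(s,t)}(z)=\det[(z-\mu^{(t)})I_k-\mathcal{A}_k^{(s,t)}]$ established in the proof of Proposition~\ref{prop5} shows that, at $s=0$, $t=0$, the Hadamard polynomials $\mathcal{H}_1^{(0,0)}(z),\dots,\mathcal{H}_m^{(0,0)}(z)$ are determined by the principal submatrices $\mathcal{A}_1^{(0,0)},\dots,\mathcal{A}_m^{(0,0)}$, hence by $Q_1^{(0)},\dots,Q_m^{(0)},E_1^{(0)},\dots,E_{m-1}^{(0)}$. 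Writing $\mathcal{H}_k^{(0,0)}(z)=z^k+b_{k,1}^{(0)}z^{k-1}+\dots+b_{k,k}^{(0)}$ and expanding the Hankel polynomial $H_k^{(0,0)}(z)$ along its last column exactly as in \eqref{thm1-1}, I would equate coefficients to get $b_{k,j}^{(0)}=\tilde H_{k-j}^{(0,0)}/H_k^{(0,0)}$, and then apply the inverse of Cramer's rule as in \eqref{beqs}. This yields the triangular system \eqref{fbdet}, which recursively and uniquely determines $f_0^{(0)}=1,f_1^{(0)},\dots,f_{m-1}^{(0)}$ from the $b_{k,j}^{(0)}$. Composing the two maps gives the desired uniqueness.

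The one genuinely new point compared with Theorem~\ref{thm1} — and the step I expect to require the most care — is that the $\mu^{(t)}$'s now enter, both through \eqref{key} (which governs the evolution of the sequence in $t$) and through the shift in Proposition~\ref{prop5}. I should check that the argument above uses only $\mu^{(0)}$, and that it does so harmlessly: in the coefficient-matching at $s=0,t=0$ the shift $\mu^{(0)}$ appears symmetrically on both sides (the monic polynomial $\det[(z-\mu^{(0)})I_k-\mathcal{A}_k^{(0,0)}]$ is still monic of degree $k$ in $z$ with lower-order coefficients that are polynomial in the $Q$'s, $E$'s and $\mu^{(0)}$), so the recovery of the $b_{k,j}^{(0)}$ — and therefore of the $f_i^{(0)}$ — goes through unchanged; the remaining constants $\mu^{(1)},\mu^{(2)},\dots$ are simply prescribed parameters of the system \eqref{nadToda} and play no role in the initial-data count. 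Modulo this bookkeeping, the proof is a near-verbatim transcription of the proof of Theorem~\ref{thm1} with $s\mapsto t$, Lemma~\ref{lem3}$\mapsto$Lemma~\ref{lem4}, Theorem~\ref{lem5}$\mapsto$Theorem~\ref{lem6}, and Proposition~\ref{prop4}$\mapsto$Proposition~\ref{prop5}, so I would state it concisely and refer back rather than repeat the determinant manipulations.
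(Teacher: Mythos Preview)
Your proposal is correct and follows exactly the paper's approach: the paper's proof is a single sentence stating that one repeats the proof of Theorem~\ref{thm1} with Lemma~\ref{lem4} and Theorem~\ref{lem6} in place of Lemma~\ref{lem3} and Theorem~\ref{lem5}, and $s$ replaced by $t$. Your additional care about the role of $\mu^{(0)}$ in the coefficient-matching step (via Proposition~\ref{prop5}) is a sensible elaboration that the paper leaves implicit.
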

%------------------------------------------------------------------------------------
\begin{proof}
%The derivation of \eqref{fbdet2} have shown in the proof of Theorem~\ref{thm1}.
The proof is essentially given using Lemma~\ref{lem4} and Theorem~\ref{lem6} instead of Lemma~\ref{lem3} and Theorem~\ref{lem5}, and replacing $s$ with $t$.

\end{proof}

\par
It is obvious that the solutions $q_k^{(s)}, e_k^{(s)}$ in Theorem~\ref{thm1} or $Q_k^{(t)},E_k^{(t)}$ in Theorem~\ref{thm2} have sufficient arbitrariness from the constants $a_1,a_2,\dots,a_m$ and the initial values $f_0^{(0)}, f_1^{(0)},\dots,f_{m-1}^{(0)}$ of the infinite sequence $\{f_s^{(t)}\}_{s,t=0}^{\infty}$.
From the perspective of the eigenvalue problem,
we also discuss the arbitrariness of $q_k^{(s)}, e_k^{(s)}$ or $Q_k^{(t)},E_k^{(t)}$.
If $q_k^{(0,0)}, e_k^{(0,0)}$ or $Q_k^{(0,0)},E_k^{(0,0)}$ are given,
we uniquely obtain $A_k^{(0,0)}$ or ${\cal A}_k^{(0,0)}$.
In other words, the eigenvalues $\lambda_1,\lambda_2,\dots,\lambda_m$ of both $A^{(0,0)}$ and ${\cal A}^{(0,0)}+\mu^{(0)}I_m$
are determined by $q_k^{(0,0)}, e_k^{(0,0)}$ or $Q_k^{(0,0)},E_k^{(0,0)}$.
Thus, the eigenvalues $\lambda_1,\lambda_2,\dots,\lambda_m$ have sufficient arbitrariness.
%------------------------------------------------------------------------------------------------------
%---------------------------------- sec6 Asymptotic convergence ---------------------------
%-------------------------------------------------------------------------------------------------------
\section{Asymptotic behavior of discrete Toda equations}\label{sec:5}
In this section, we find a general term to the infinite sequence $\{f_{s}^{(t)}\}_{s,t=0}^{\infty}$
associated with distinct $\lambda_1,\lambda_2,\dots,\lambda_m$, 
and then derive asymptotic expansions of the Hankel determinants $H_k^{(s,t)}$
involving $f_s^{(t)}$ as $s\rightarrow\infty$ or $t\rightarrow\infty$.
Using the asymptotic expansions,
we clarify asymptotic behavior of the determinantal solutions to the autonomous Toda equation \eqref{adToda} 
and the non-autonomous dToda equation \eqref{nadToda}.
\par
The following lemma gives a general term to  the infinite sequence $\{f_{s}^{(t)}\}_{s=0}^{\infty}$ at fixed $t$.
%
%------------------------------------------- L\UTF{008C}emma 5 -----------------------------------------
%
\begin{lemma}\label{lem7}
Let us assume that the elements of the infinite sequence $\{f_{s}^{(t)}\}_{s=0}^{\infty}$ 
satisfy the linear dependence \eqref{fnmt} at fixed $t$.
Then, $f_{s}^{(t)}$ can be expressed using distinct $\lambda_1,\lambda_2,\dots,\lambda_m$ as
\begin{align}\label{gsfst}
f_{s}^{(t)}=\sum_{\ell=1}^{m}c_{\ell}^{(t)}\lambda_{\ell}^{s},\quad s=0,1,\dots,
\end{align}
where $c_1^{(t)},c_2^{(t)},\dots,c_m^{(t)}$ are constants determined by
\begin{align}\label{cdetermine}
\left(
\begin{array}{c}
c_{1}^{(t)}\\c_{2}^{(t)}\\\vdots\\c_{m}^{(t)}\\
\end{array}
\right)=
\left(
\begin{array}{cccc}
1&1&\cdots&1\\
\lambda_1&\lambda_2&\cdots&\lambda_m\\
\vdots&\vdots&\ddots&\vdots\\
\lambda_1^{m-1}&\lambda_{2}^{m-1}&\cdots&\lambda_{m}^{m-1}
\end{array}
\right)^{\hspace{-2mm}-1}
\left(
\begin{array}{c}
f_{0}^{(t)}\\f_{1}^{(t)}\\\vdots\\f_{m-1}^{(t)}\\
\end{array}
\right),
\end{align}
and satisfying
\begin{align}
c_i^{(t)}=c_i^{(0)}\prod_{\ell=0}^{t-1}(\lambda_{i}-\mu^{(\ell)}),\quad i=1,2,\dots,m.\label{citoi0}
\end{align}
\end{lemma}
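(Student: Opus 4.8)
The plan is to prove the three assertions of Lemma~\ref{lem7} in turn, treating \eqref{gsfst}--\eqref{cdetermine} as essentially the classical solution formula for a constant-coefficient linear recurrence, and then deriving \eqref{citoi0} from the key relation \eqref{key}. First I would observe that, at fixed $t$, the sequence $\{f_s^{(t)}\}_{s=0}^\infty$ satisfies the linear recurrence \eqref{fnmt} whose characteristic polynomial is exactly $p(z)=z^m+a_1z^{m-1}+\dots+a_m=(z-\lambda_1)\cdots(z-\lambda_m)$ by \eqref{pz}--\eqref{pza}. Since $\lambda_1,\dots,\lambda_m$ are assumed distinct, the $m$ sequences $\{\lambda_\ell^s\}_{s=0}^\infty$, $\ell=1,\dots,m$, each solve \eqref{fnmt}, and they span the $m$-dimensional solution space; hence there exist constants $c_1^{(t)},\dots,c_m^{(t)}$ with $f_s^{(t)}=\sum_{\ell=1}^m c_\ell^{(t)}\lambda_\ell^s$ for all $s$. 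Writing this out for $s=0,1,\dots,m-1$ gives a linear system whose coefficient matrix is the Vandermonde matrix of $\lambda_1,\dots,\lambda_m$, which is invertible precisely because the $\lambda_\ell$ are distinct; inverting it yields \eqref{cdetermine}, and conversely any $c_\ell^{(t)}$ defined by \eqref{cdetermine} reproduces $f_0^{(t)},\dots,f_{m-1}^{(t)}$ and therefore, by the recurrence, all $f_s^{(t)}$.

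For \eqref{citoi0} the idea is to substitute the general term \eqref{gsfst} into the defining relation \eqref{key}. Indeed,
\begin{align*}
\sum_{\ell=1}^m c_\ell^{(t+1)}\lambda_\ell^s
= f_s^{(t+1)}
= f_{s+1}^{(t)}-\mu^{(t)}f_s^{(t)}
= \sum_{\ell=1}^m c_\ell^{(t)}\bigl(\lambda_\ell^{s+1}-\mu^{(t)}\lambda_\ell^s\bigr)
= \sum_{\ell=1}^m c_\ell^{(t)}\bigl(\lambda_\ell-\mu^{(t)}\bigr)\lambda_\ell^s,
\end{align*}
valid for every $s=0,1,\dots$. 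Comparing the two expansions for $s=0,1,\dots,m-1$ and using the invertibility of the Vandermonde matrix again (so that the representation of a sequence in the basis $\{\lambda_\ell^s\}$ is unique) gives $c_\ell^{(t+1)}=c_\ell^{(t)}(\lambda_\ell-\mu^{(t)})$ for each $\ell$. Iterating this one-step recursion in $t$ from $t=0$ produces the telescoping product $c_i^{(t)}=c_i^{(0)}\prod_{\ell=0}^{t-1}(\lambda_i-\mu^{(\ell)})$, which is \eqref{citoi0} (with the usual convention that an empty product at $t=0$ equals $1$).

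I do not expect a genuine obstacle here, since all three parts reduce to standard facts about linear recurrences and Vandermonde systems; the only points requiring a little care are: making explicit that the hypothesis of \emph{distinct} $\lambda_\ell$ is exactly what guarantees both the spanning property of $\{\lambda_\ell^s\}$ and the invertibility of the Vandermonde matrix used twice above, and being careful with index bookkeeping when passing from ``equality of the two expansions for $s=0,\dots,m-1$'' to ``equality of the coefficients $c_\ell$'' (this is where uniqueness of the basis representation is invoked). The mild subtlety, if any, is that \eqref{key} a priori only controls the shift $f_s^{(t+1)}$ in terms of $f_\cdot^{(t)}$, so one should note that the sequence $\{f_s^{(t+1)}\}_s$ automatically satisfies \eqref{fnmt} as well---which is immediate since $p(z)$ is independent of $t$ and \eqref{fnmt} is linear---so that \eqref{gsfst} legitimately applies at level $t+1$ and the coefficient comparison is justified.
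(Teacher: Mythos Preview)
Your proposal is correct and follows essentially the same approach as the paper: both arguments verify that the sequences $\{\lambda_\ell^s\}$ solve the recurrence \eqref{fnmt}, solve for the coefficients $c_\ell^{(t)}$ via the Vandermonde system at $s=0,\dots,m-1$, and then substitute \eqref{gsfst} into \eqref{key} to obtain the one-step recursion $c_\ell^{(t+1)}=(\lambda_\ell-\mu^{(t)})c_\ell^{(t)}$ before telescoping. The only cosmetic difference is that the paper verifies the ansatz directly (computing $\sum_i a_i f_{s+m-i}^{(t)}=\sum_\ell c_\ell^{(t)}\lambda_\ell^s\,p(\lambda_\ell)=0$), whereas you invoke the fact that the solution space of \eqref{fnmt} is $m$-dimensional and spanned by the $\{\lambda_\ell^s\}$; these are equivalent standard arguments.
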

%-------------------------------------------------------------------------------------------------
\begin{proof}
Using \eqref{gsfst}, we can rewrite the left-hand side of the linear dependence \eqref{fnmt} as
\begin{align*}
\sum_{i=0}^{m}a_{i}f_{s+m-i}^{(t)}
&=
\sum_{i=0}^{m}a_{i}\left(
\sum_{\ell=1}^{m}c_{\ell}^{(t)}\lambda_{\ell}^{s+m-i}\right)\\
&=\sum_{\ell=1}^m
c_{\ell}^{(t)}\lambda_\ell^{s}\left(
\sum_{i=1}^{m}a_{i}\lambda_{\ell}^{m-i}
\right).
\end{align*}
From \eqref{pza}, it is obvious that 
$\sum_{i=1}^{m}a_{i}\lambda_{\ell}^{m-i}=p(\lambda_\ell)$.
By recalling that $\lambda_\ell$ are zeros of $p(z)$,
we derive $\sum_{i=1}^{m}a_{i}\lambda_{\ell}^{m-i}=p(\lambda_\ell)=0$.
Thus, we see that $f_s^{(t)}$ in \eqref{gsfst} satisfy \eqref{fnmt}.
From \eqref{gsfst}, it follows that
\begin{align}\label{coefmatrix}
\left(\begin{array}{c}
f_0^{(t)}\\f_1^{(t)}\\\vdots\\f_{m-1}^{(t)}
\end{array}\right)=
\left(\begin{array}{cccc}
1&1&\cdots&1\\
\lambda_1&\lambda_2&\cdots&\lambda_m\\
\vdots&\vdots&\ddots&\vdots\\
\lambda_1^{m-1}&\lambda_{2}^{m-1}&\cdots&\lambda_{m}^{m-1}
\end{array}
\right)\left(
\begin{array}{c}
c_{1}^{(t)}\\c_{2}^{(t)}\\\vdots\\c_{m}^{(t)}\\
\end{array}
\right).
\end{align}
Then, by considering that the $m$-by-$m$ matrix involving $\lambda_1,\lambda_2,\dots,\lambda_m$ 
is the Vandermonde matrix with determinant $\prod_{i<j}(\lambda_j-\lambda_i)$, we obtain \eqref{cdetermine}.
Moreover, by combining \eqref{key} in Proposition~\ref{prop2} with \eqref{gsfst},
we derive 
\begin{align}\label{cit+1}
c_i^{(t+1)}=(\lambda_i-\mu^{(t)})c_i^{(t)},\quad i=1,2,\dots,m,
\end{align}
which immediately leads to \eqref{citoi0}.

\end{proof}
\par
It is of significance that the infinite sequence $\{f_s^{(t)}\}_{s=0}^{\infty}$ at any $t$ in Lemma~\ref{lem7}
involves arbitrary constants $\lambda_1,\lambda_2,\dots,\lambda_m$ and initial values $f_0^{(t)}, f_1^{(t)},\dots,f_{m-1}^{(t)}$.
Thus, the infinite sequence $\{f_s^{(t)}\}_{s=0}^{\infty}$ has $2m$ degrees of freedom,
which is greater than in the initial setting of the autonomous dToda equation \eqref{adToda}. 
In other words, the freedom in the infinite sequence $\{f_s^{(t)}\}_{s=\infty}^{\infty}$ sufficiently covers 
that in the setting of the autonomous dToda equation \eqref{adToda}.
\par
The following lemma gives an expansion of the Hankel determinants $H_k^{(s,t)}$.
%
%----------------------------------------- Lemma 6 --------------------------------------
%
\begin{lemma}\label{lem8}
At any $t$, the Hankel determinants $H_k^{(s,t)}$ can be expanded as
\begin{align}\label{aeHankels}
H_k^{(s,t)}=\sum_{1\le \kappa_1<\kappa_2<\dots<\kappa_k\le m}
c_{\kappa_1}^{(t)}c_{\kappa_2}^{(t)}\cdots c_{\kappa_k}^{(t)} 
(\lambda_{\kappa_1}\lambda_{\kappa_2}\cdots \lambda_{\kappa_k})^s
\prod_{\ell=2}^{k}\lambda_{\kappa_\ell}^{\ell-1}
\prod_{i<j}(\lambda_{\kappa_j}-\lambda_{\kappa_i}).
\end{align}
\end{lemma}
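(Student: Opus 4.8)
The plan is to substitute the general term \eqref{gsfst} for $f_s^{(t)}$ into the Hankel determinant \eqref{Hankel} and then exploit multilinearity of the determinant in its columns. Writing $f_{s+j}^{(t)} = \sum_{\ell=1}^m c_\ell^{(t)} \lambda_\ell^{s+j}$, the $(i,j)$ entry of $H_k^{(s,t)}$ is $f_{s+i+j-2}^{(t)}$, so the matrix whose determinant we want factors as a product $V D W$, where (reading columns) each column is a linear combination of the $m$ ``elementary'' columns $(\lambda_\ell^{s}, \lambda_\ell^{s+1}, \dots, \lambda_\ell^{s+k-1})^\top$ indexed by $\ell=1,\dots,m$. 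Concretely, $H_k^{(s,t)}$ is the determinant of a $k\times k$ matrix that is the product of the $k\times m$ matrix with entries $\lambda_\ell^{s+i-1}$, the $m\times m$ diagonal matrix $\mathrm{diag}(c_1^{(t)},\dots,c_m^{(t)})$, and the $m\times k$ matrix with entries $\lambda_\ell^{j-1}$.

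First I would apply the Cauchy--Binet formula to this $VDW$ product: the determinant becomes a sum over all $k$-element subsets $\{\kappa_1<\dots<\kappa_k\}\subseteq\{1,\dots,m\}$ of the product $c_{\kappa_1}^{(t)}\cdots c_{\kappa_k}^{(t)}$ times the determinant of the $k\times k$ submatrix of $V$ on columns $\kappa_1,\dots,\kappa_k$ times the determinant of the corresponding $k\times k$ submatrix of $W$. Next I would evaluate these two sub-determinants. The $W$-submatrix has $(i,j)$ entry $\lambda_{\kappa_j}^{i-1}$ ($i,j=1,\dots,k$), i.e. a Vandermonde matrix in $\lambda_{\kappa_1},\dots,\lambda_{\kappa_k}$, whose determinant is $\prod_{i<j}(\lambda_{\kappa_j}-\lambda_{\kappa_i})$. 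The $V$-submatrix has $(i,j)$ entry $\lambda_{\kappa_j}^{s+i-1}$; pulling $\lambda_{\kappa_j}^{s}$ out of each column $j$ contributes $\prod_{j=1}^k \lambda_{\kappa_j}^{s} = (\lambda_{\kappa_1}\cdots\lambda_{\kappa_k})^s$, leaving again a Vandermonde determinant $\prod_{i<j}(\lambda_{\kappa_j}-\lambda_{\kappa_i})$. Multiplying everything together yields a factor $(\lambda_{\kappa_1}\cdots\lambda_{\kappa_k})^s$, two copies of the Vandermonde product, and the product of the $c$'s.

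The only cosmetic gap between this and the stated formula \eqref{aeHankels} is that the paper writes one Vandermonde factor as $\prod_{i<j}(\lambda_{\kappa_j}-\lambda_{\kappa_i})$ and absorbs the other into $\prod_{\ell=2}^{k}\lambda_{\kappa_\ell}^{\ell-1}$; so the last step is a routine identity check that $\left(\prod_{i<j}(\lambda_{\kappa_j}-\lambda_{\kappa_i})\right)^2$ can be rewritten, after suitable reindexing of the first Vandermonde determinant as a determinant of $\lambda_{\kappa_j}^{i-1}$ with rows permuted, into $\prod_{\ell=2}^{k}\lambda_{\kappa_\ell}^{\ell-1}\prod_{i<j}(\lambda_{\kappa_j}-\lambda_{\kappa_i})$ --- i.e. matching the normalization the authors chose. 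I expect the main (though still modest) obstacle to be bookkeeping: getting the Cauchy--Binet index ranges right, confirming the sign conventions so that no stray $(-1)$ appears, and checking that the power of $\lambda_{\kappa_\ell}$ left over after extracting the two Vandermonde determinants is exactly $\ell-1$ as claimed rather than something off by one. None of this requires new ideas beyond Cauchy--Binet plus the Vandermonde determinant evaluation, both of which are standard.
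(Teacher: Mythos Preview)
Your Cauchy--Binet argument is correct and is essentially the same computation the paper is gesturing at with its ``reorganizing'' step: substitute \eqref{gsfst}, expand by multilinearity, and collect terms by $k$-subsets. You correctly obtain
\[
H_k^{(s,t)}=\sum_{1\le\kappa_1<\cdots<\kappa_k\le m}
c_{\kappa_1}^{(t)}\cdots c_{\kappa_k}^{(t)}\,
(\lambda_{\kappa_1}\cdots\lambda_{\kappa_k})^s
\Bigl(\prod_{i<j}(\lambda_{\kappa_j}-\lambda_{\kappa_i})\Bigr)^{2}.
\]

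The problem is your final ``routine identity check.'' It is not routine, and it is not true: $\prod_{\ell=2}^{k}\lambda_{\kappa_\ell}^{\ell-1}$ is \emph{not} equal to a Vandermonde product. Already for $k=2$ the claim would read $(\lambda_{\kappa_2}-\lambda_{\kappa_1})^2=\lambda_{\kappa_2}(\lambda_{\kappa_2}-\lambda_{\kappa_1})$, which fails in general. A direct computation of $H_2^{(s,t)}=f_sf_{s+2}-f_{s+1}^2$ confirms that the coefficient is the squared difference $(\lambda_{\kappa_2}-\lambda_{\kappa_1})^2$, not $\lambda_{\kappa_2}(\lambda_{\kappa_2}-\lambda_{\kappa_1})$.

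In other words, the formula \eqref{aeHankels} as printed is in error, and the paper's own proof inherits the slip: the passage to the intermediate determinant \eqref{taukccclllv}, summed only over ordered tuples $\kappa_1<\cdots<\kappa_k$, drops the antisymmetrization over permutations that produces the second Vandermonde factor. Your derivation is the right one; do not try to force it to match the stated expression. Note that nothing downstream is harmed: Lemmas~\ref{lem9} and~\ref{lem10} and the convergence theorems use only the leading-order structure $H_k^{(s,t)}=\mathrm{const}\cdot(\lambda_1\cdots\lambda_k)^s(1+O(\rho_k^s))$, which your (correct) formula delivers just as well.
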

%---------------------------------------------------------------------------------------------
\begin{proof}
Since the entries of the Hankel determinants $H_k^{(s,t)}$ satisfy \eqref{gsfst} in Lemma~\ref{lem7},
we derive 
\begin{align*}
H_k^{(s,t)}=\left|
\begin{array}{cccc}
\displaystyle\sum_{\ell=1}^{m}c_{\ell}^{(t)}\lambda_{\ell}^{s}
&\displaystyle\sum_{\ell=1}^{m}c_{\ell}^{(t)}\lambda_{\ell}^{s+1}
&\cdots
&\displaystyle\sum_{\ell=1}^{m}c_{\ell}^{(t)}\lambda_{\ell}^{s+k-1}\\
\displaystyle\sum_{\ell=1}^{m}c_{\ell}^{(t)}\lambda_{\ell}^{s+1}
&\displaystyle\sum_{\ell=1}^{m}c_{\ell}^{(t)}\lambda_{\ell}^{s+2}
&\cdots
&\displaystyle\sum_{\ell=1}^{m}c_{\ell}^{(t)}\lambda_{\ell}^{s+k}\\
\vdots&\vdots&\ddots&\vdots\\
\displaystyle\sum_{\ell=1}^{m}c_{\ell}^{(t)}\lambda_{\ell}^{s+k-1}
&\displaystyle\sum_{\ell=1}^{m}c_{\ell}^{(t)}\lambda_{\ell}^{s+k}
&\cdots
&\displaystyle\sum_{\ell=1}^{m}c_{\ell}^{(t)}\lambda_{\ell}^{s+2k-2}\\
\end{array}
\right|,\quad k=1,2,\dots,m.
\end{align*}
By reorganizing these determinants, we obtain
\begin{align}\label{taukccclllv}
&H_k^{(s,t)}=
\sum_{1\le \kappa_1<\kappa_2<\dots<\kappa_k\le m}
c_{\kappa_1}^{(t)}c_{\kappa_2}^{(t)}\cdots c_{\kappa_k}^{(t)} 
(\lambda_{\kappa_1}\lambda_{\kappa_2}\cdots \lambda_{\kappa_k})^s
\left|
\begin{array}{cccc}
1&\lambda_{\kappa_2}&\cdots&\lambda_{\kappa_k}^{k-1}\\
\lambda_{\kappa_1}&\lambda_{\kappa_2}^2&\cdots&\lambda_{\kappa_k}^k\\
\vdots&\vdots&\ddots&\vdots\\
\lambda_{\kappa_1}^{k-1}&\lambda_{\kappa_2}^k&\cdots&\lambda_{\kappa_k}^{2k-2}
\end{array}
\right|,\nonumber\\
&\quad
k=1,2,\dots,m.
\end{align}
Noting that the Vandermonde determinants appear on the right hand side of \eqref{taukccclllv},
we thus have \eqref{aeHankels}.

\end{proof}
\par
Lemma~\ref{lem8} immediately leads to asymptotic expansions of the Hankel determinants $H_k^{(s,t)}$ as $s\rightarrow\infty$.
%
%-------------------------------------- Lemma 7----------------------------------------
%
\begin{lemma}\label{lem9}
Let us assume that $\lambda_1,\lambda_2,\dots,\lambda_m$ are constants 
such that $|\lambda_1|>|\lambda_2|>\dots>|\lambda_m|$.
Then, the Hankel determinants $H_k^{(s,t)}$ can be expanded as
\begin{align}\label{astauks}
H_k^{(s,t)}=\hat{c}_{k}^{(t)} (\lambda_1\lambda_2\cdots\lambda_k)^s\left(1+O\left(\rho_k^s\right)\right),
\quad s\rightarrow\infty,
\end{align}
where $\hat{c}_k^{(t)}:=c_1^{(t)}c_2^{(t)}\cdots c_k^{(t)}\prod_{\ell=2}^{k}\lambda_{\kappa_\ell}^{\ell-1}
\prod_{i<j}(\lambda_{\kappa_j}-\lambda_{\kappa_i})$
and $\rho_k$ are some constants such that $\rho_k>|\lambda_{k+1}|/|\lambda_{k}|$.
\end{lemma}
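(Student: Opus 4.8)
The plan is to read the result straight off the exact finite expansion of $H_k^{(s,t)}$ supplied by Lemma~\ref{lem8}: single out the one summand that dominates as $s\to\infty$, factor out its $s$-dependent part, and show that every other term is geometrically small relative to it.

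First I would rewrite \eqref{aeHankels} by pulling $(\lambda_1\lambda_2\cdots\lambda_k)^s$ out of every summand, obtaining
\begin{align*}
H_k^{(s,t)}=(\lambda_1\cdots\lambda_k)^s\Biggl[\hat c_k^{(t)}
+\sum_{\substack{1\le\kappa_1<\dots<\kappa_k\le m\\(\kappa_1,\dots,\kappa_k)\neq(1,\dots,k)}}
c_{\kappa_1}^{(t)}\cdots c_{\kappa_k}^{(t)}
\left(\frac{\lambda_{\kappa_1}\cdots\lambda_{\kappa_k}}{\lambda_1\cdots\lambda_k}\right)^{\!s}
\prod_{\ell=2}^{k}\lambda_{\kappa_\ell}^{\ell-1}\prod_{i<j}(\lambda_{\kappa_j}-\lambda_{\kappa_i})\Biggr],
\end{align*}
where $\hat c_k^{(t)}$ is precisely the coefficient of the summand indexed by $(1,2,\dots,k)$, namely $c_1^{(t)}\cdots c_k^{(t)}\prod_{\ell=2}^{k}\lambda_{\ell}^{\ell-1}\prod_{1\le i<j\le k}(\lambda_j-\lambda_i)$. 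Since $|\lambda_1|>\dots>|\lambda_m|$ forces the $\lambda_i$ to be pairwise distinct, the Vandermonde factor $\prod_{1\le i<j\le k}(\lambda_j-\lambda_i)$ does not vanish, so $\hat c_k^{(t)}\neq0$ provided $c_1^{(t)},\dots,c_k^{(t)}$ and $\lambda_2,\dots,\lambda_k$ are nonzero; I take this for granted, as it is forced by the standing assumption (used in Lemmas~\ref{lem3} and~\ref{lem4}) that the Hankel determinants do not vanish.

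The crux is to bound the ratios $R_\kappa:=|\lambda_{\kappa_1}\cdots\lambda_{\kappa_k}|/|\lambda_1\cdots\lambda_k|$ over all multi-indices $\kappa=(\kappa_1,\dots,\kappa_k)\neq(1,\dots,k)$. Given such a $\kappa$, let $p$ be the least index with $\kappa_p\neq p$; then $\kappa_q\ge q+1$ for every $q\ge p$, hence $|\lambda_{\kappa_q}|\le|\lambda_{q+1}|$ for $q\ge p$, and therefore
\begin{align*}
R_\kappa=\frac{\prod_{q=1}^{p-1}|\lambda_q|\cdot\prod_{q=p}^{k}|\lambda_{\kappa_q}|}{\prod_{q=1}^{k}|\lambda_q|}
\le\frac{\prod_{q=1}^{p-1}|\lambda_q|\cdot\prod_{q=p}^{k}|\lambda_{q+1}|}{\prod_{q=1}^{k}|\lambda_q|}
=\frac{|\lambda_{k+1}|}{|\lambda_p|}\le\frac{|\lambda_{k+1}|}{|\lambda_k|}<1,
\end{align*}
with equality throughout exactly when $\kappa=(1,\dots,k-1,k+1)$. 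Since the index set is finite, summing the estimate $R_\kappa^s=O\bigl((|\lambda_{k+1}|/|\lambda_k|)^s\bigr)$ over the nondominant terms gives
\[
\sum_{\kappa\neq(1,\dots,k)}(\cdots)=O\!\left(\Bigl(\tfrac{|\lambda_{k+1}|}{|\lambda_k|}\Bigr)^{\!s}\right),\qquad s\to\infty.
\]
Dividing by $\hat c_k^{(t)}\neq0$ and absorbing constants yields $H_k^{(s,t)}=\hat c_k^{(t)}(\lambda_1\cdots\lambda_k)^s(1+O(\rho_k^s))$ for any $\rho_k>|\lambda_{k+1}|/|\lambda_k|$ (indeed already for $\rho_k=|\lambda_{k+1}|/|\lambda_k|$), which is \eqref{astauks}.

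The only genuinely delicate point is the combinatorial inequality above, i.e.\ that $(1,\dots,k)$ realizes the strictly largest product $|\lambda_{\kappa_1}\cdots\lambda_{\kappa_k}|$ and $(1,\dots,k-1,k+1)$ the next largest; the remainder is routine collection of lower-order terms into a single $O(\rho_k^s)$. I would also keep track of the nonvanishing of the leading coefficient $\hat c_k^{(t)}$, without which the normalization ``$1+O(\rho_k^s)$'' is meaningless; in the degenerate case $k=m$ with $\lambda_m=0$ the claimed expansion collapses to $H_m^{(s,t)}=0$, consistent with the rank deficiency of the corresponding Hankel matrix.
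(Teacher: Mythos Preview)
Your argument is correct and follows exactly the route the paper takes: invoke the finite expansion of Lemma~\ref{lem8} and identify the summand with $(\kappa_1,\dots,\kappa_k)=(1,\dots,k)$ as the dominant one under $|\lambda_1|>\dots>|\lambda_m|$. The paper's proof is a single sentence that asserts this dominance without justification; your combinatorial inequality for $R_\kappa$ (via the least index $p$ with $\kappa_p\neq p$) and your remarks on the nonvanishing of $\hat c_k^{(t)}$ supply precisely the details the paper omits, and your observation that $\rho_k=|\lambda_{k+1}|/|\lambda_k|$ already suffices sharpens the stated bound.
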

%------------------------------------------------------------------------------------------
\begin{proof}
Under the assumption $|\lambda_1|>|\lambda_2|>\dots>|\lambda_m|$,
the dominant terms of the Hankel determinants $H_k^{(s,t)}$ as $s\rightarrow\infty$ 
are the terms with  $\kappa_1=1,\kappa_2=2,\dots,\kappa_k=k$ appearing in Lemma~\ref{lem8}.
\end{proof}
\par
With the help of Lemma~\ref{lem9}, 
we present an asymptotic analysis of the autonomous dToda equation \eqref{adToda} 
as $s\rightarrow\infty$.
%
%------------------------------------- Theorem 5---------------------------------------------
%
\begin{theorem}\label{thm3}
Let us assume that for the infinite sequence $\{f_s\}_{s=0}^{\infty}$, the Hankel determinants $H_k^{(s)}$ are all nonzero.
Moreover, let $\lambda_1,\lambda_2,\dots,\lambda_m$ 
be constants such that $|\lambda_1|>|\lambda_2|>\dots>|\lambda_m|$.
Then, the asymptotic behavior as $s\rightarrow\infty$ of the autonomous dToda variables $q_k^{(s)}$ and $e_k^{(s)}$ are given as
 \begin{align}
&\lim_{s\rightarrow\infty}
q_{k}^{(s)}=\lambda_k,\quad k=1,2,\dots,m,\label{qkstsinfty}\\
&\lim_{s\rightarrow\infty}
e_k^{(s)}=0,\quad k=1,2,\dots, m-1.\label{ekstsinfty}
\end{align}
\end{theorem}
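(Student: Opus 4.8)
The plan is to substitute the asymptotic expansion of the Hankel determinants from Lemma~\ref{lem9} directly into the determinant formulas \eqref{qkstau} and \eqref{ekstau} of Theorem~\ref{thm1} and then take the limit $s\to\infty$. Throughout, I drop the discrete-time variable $t$ (setting $t=0$ as permitted), and I write $H_k^{(s)}:=H_k^{(s,0)}$ and $\hat{c}_k:=\hat{c}_k^{(0)}$. Since $H_k^{(s)}\neq 0$ for all $s$ by hypothesis, the quotients defining $q_k^{(s)}$ and $e_k^{(s)}$ are well defined, and the constants $\hat{c}_k$ in Lemma~\ref{lem9} are nonzero because the $\lambda_i$ are distinct and the $c_i$ are nonzero (the latter should be noted as following from $H_m^{(s)}\neq 0$, which forces all $c_i\neq 0$ via the expansion in Lemma~\ref{lem8}).

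First I would handle $q_k^{(s)}$. From \eqref{astauks},
\begin{align*}
q_k^{(s)}
=\frac{H_{k-1}^{(s)}H_k^{(s+1)}}{H_k^{(s)}H_{k-1}^{(s+1)}}
=\frac{\hat{c}_{k-1}(\lambda_1\cdots\lambda_{k-1})^{s}\,\hat{c}_k(\lambda_1\cdots\lambda_k)^{s+1}}
{\hat{c}_k(\lambda_1\cdots\lambda_k)^{s}\,\hat{c}_{k-1}(\lambda_1\cdots\lambda_{k-1})^{s+1}}
\bigl(1+O(\rho^s)\bigr),
\end{align*}
where $\rho:=\max\{\rho_{k-1},\rho_k\}<1$ can be arranged since $\rho_k>|\lambda_{k+1}|/|\lambda_k|$ but can be taken strictly below $1$. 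The powers of the $\hat{c}$ factors and of $\lambda_1\cdots\lambda_{k-1}$ cancel, and one factor of $\lambda_1\cdots\lambda_k$ over $\lambda_1\cdots\lambda_{k-1}$ survives, leaving $q_k^{(s)}=\lambda_k\bigl(1+O(\rho^s)\bigr)\to\lambda_k$. For $e_k^{(s)}$, the same substitution into \eqref{ekstau} gives
\begin{align*}
e_k^{(s)}
=\frac{H_{k+1}^{(s)}H_{k-1}^{(s+1)}}{H_k^{(s)}H_k^{(s+1)}}
=\frac{\hat{c}_{k+1}\hat{c}_{k-1}}{\hat{c}_k^2}\cdot
\frac{(\lambda_1\cdots\lambda_{k+1})^{s}(\lambda_1\cdots\lambda_{k-1})^{s+1}}
{(\lambda_1\cdots\lambda_k)^{s}(\lambda_1\cdots\lambda_k)^{s+1}}
\bigl(1+O(\rho^s)\bigr),
\end{align*}
and collecting the powers of $\lambda_j$ shows the leading power behaves like $(\lambda_{k+1}/\lambda_k)^{s}$, whose modulus tends to $0$ since $|\lambda_{k+1}|<|\lambda_k|$; hence $e_k^{(s)}\to 0$.

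The routine part is the bookkeeping of exponents, which I have sketched above; the one point deserving care is the justification that the error terms $1+O(\rho_k^s)$ in Lemma~\ref{lem9} combine correctly under multiplication and division, i.e.\ that a ratio of four such expansions is again $1+O(\rho^s)$ with a common $\rho<1$ — this is immediate provided each denominator stays bounded away from zero, which is exactly what $H_k^{(s)}\neq 0$ together with the nonvanishing leading coefficient $\hat{c}_k$ guarantees. The main (mild) obstacle is therefore not analytic but organizational: choosing a single $\rho\in(0,1)$ dominating all the relevant ratios $|\lambda_{k+1}|/|\lambda_k|$ simultaneously, and confirming that the $\hat{c}_k$ are nonzero so that the cancellations are legitimate. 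Once that is in place, \eqref{qkstsinfty} and \eqref{ekstsinfty} follow directly.
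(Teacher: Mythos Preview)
Your proposal is correct and follows essentially the same route as the paper: substitute the asymptotic expansion of Lemma~\ref{lem9} into the Hankel-determinant expressions for $q_k^{(s)}$ and $e_k^{(s)}$ (the paper cites Lemma~\ref{lem3}, you cite the equivalent formulas in Theorem~\ref{thm1}), simplify the resulting powers of $\lambda_1,\dots,\lambda_k$, and pass to the limit. Your additional remark that $H_m^{(s)}\neq 0$ forces all $c_i\neq 0$ (hence $\hat{c}_k\neq 0$) via the single-term expansion in Lemma~\ref{lem8} is a welcome justification that the paper's proof leaves implicit.
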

%-------------------------------------------------------------------------------------------------
\begin{proof}
By applying Lemma~\ref{lem9} to the expressions of $q_{k}^{(s,t)}$ and $e_k^{(s,t)}$ in Lemma~\ref{lem3},
we obtain, as $s\rightarrow\infty$,
\begin{align}
&q_k^{(s)}=\lambda_k \frac{\left(1+O\left(\rho_{k-1}^s\right)\right)\left(1+O\left(\rho_k^{s+1}\right)\right)}
{\left(1+O\left(\rho_k^s\right)\right)\left(1+O\left(\rho_{k-1}^{s+1}\right)\right)},
\quad k=1,2,\dots,m,\label{qkstOs}\\
&e_k^{(s)}=\frac{\hat{c}_{k+1}^{(t)}\hat{c}_{k-1}^{(t)}}{(\hat{c}_k^{(t)})^2}
\frac{1}{\lambda_k}
\left(\frac{\lambda_{k+1}}{\lambda_k}
\right)^{\hspace{-1mm}s}
\frac{\left(1+O\left(\rho_{k+1}^s\right)\right)\left(1+O\left(\rho_{k-1}^{s+1}\right)\right)}
{\left(1+O\left(\rho_k^s\right)\right)\left(1+O\left(\rho_{k}^{s+1}\right)\right)},
%\nonumber\\
%&
\quad k=1,2,\dots,m-1.\label{ekstOs}
\end{align}
Thus, by taking the limit $s\rightarrow\infty$ in \eqref{qkstOs} and \eqref{ekstOs},
we see that $q_k^{(s,t)}\rightarrow\lambda_k$ and $e_k^{(s,t)}\rightarrow 0$ as $s\rightarrow\infty$.
We may remove the discrete-time variable $t$ from the superscripts in $q_k^{(s,t)}$ and $e_k^{(s,t)}$
in the case of analysis for the autonomous dToda equation \eqref{adToda}.
Therefore, we have \eqref{qkstsinfty} and \eqref{ekstsinfty}

\end{proof}
\par
The discrete-time evolution from $s$ to $s+1$ in $Q_k^{(s,t)}$ and $E_k^{(s,t)}$ 
is not considered in either the autonomous dToda equation \eqref{adToda} and the non-autonomous dToda equation \eqref{nadToda}.
Although the explicit formula for generating such a discrete-time evolution has not been shown,
we can describe an asymptotic behavior of $Q_k^{(s,t)}$ and $E_k^{(s,t)}$ as $s\rightarrow\infty$.
%------------------------------------------- Theorem 6 ----------------------------------------------
\begin{theorem}\label{thm4}
Let us assume that for the  infinite sequence $\{f_s^{(t)}\}_{s,t=0}^{\infty}$, the Hankel determinants $H_k^{(s,t)}$ are all nonzero.
Moreover, let $\lambda_1,\lambda_2,\dots,\lambda_m$ be constants such that $|\lambda_1|>|\lambda_2|>\dots>|\lambda_m|$.
Then, it holds that, as $s\rightarrow\infty$, 
\begin{align}
&\lim_{s\rightarrow\infty}Q_k^{(s,t)}=\lambda_{k}-\mu^{(t)},\quad k=1,2,\dots,m,\\
&\lim_{s\rightarrow\infty}E_k^{(s,t)}=0,\quad k=1,2,\dots,m-1.
\end{align}
\end{theorem}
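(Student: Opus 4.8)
The plan is to proceed exactly as in the proof of Theorem~\ref{thm3}, substituting the expressions for $Q_k^{(s,t)}$ and $E_k^{(s,t)}$ from Lemma~\ref{lem4} in place of the expressions for $q_k^{(s,t)}$ and $e_k^{(s,t)}$ from Lemma~\ref{lem3}, and using the asymptotic expansion of the Hankel determinants as $s\to\infty$ provided by Lemma~\ref{lem9}. The key observation is that Lemma~\ref{lem9} gives, for each fixed $t$,
\begin{align*}
H_k^{(s,t)}=\hat{c}_k^{(t)}(\lambda_1\lambda_2\cdots\lambda_k)^s\bigl(1+O(\rho_k^s)\bigr),\qquad
H_k^{(s,t+1)}=\hat{c}_k^{(t+1)}(\lambda_1\lambda_2\cdots\lambda_k)^s\bigl(1+O(\rho_k^s)\bigr),
\end{align*}
since the hypothesis $|\lambda_1|>|\lambda_2|>\dots>|\lambda_m|$ applies uniformly in the discrete-time variable; only the prefactors $\hat{c}_k^{(t)}$ depend on $t$ (through $c_\ell^{(t)}$), while the geometric growth rates $(\lambda_1\cdots\lambda_k)^s$ and the error orders $\rho_k^s$ do not.

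First I would substitute these expansions into the definition \eqref{Qkst},
\[
Q_k^{(s,t)}=\frac{H_{k-1}^{(s,t)}H_k^{(s,t+1)}}{H_k^{(s,t)}H_{k-1}^{(s,t+1)}},
\]
so that the powers $(\lambda_1\cdots\lambda_{k-1})^s$ in numerator and denominator cancel, leaving a single surviving factor $\lambda_k$ together with the ratio $\hat{c}_{k-1}^{(t)}\hat{c}_k^{(t+1)}/(\hat{c}_k^{(t)}\hat{c}_{k-1}^{(t+1)})$ of prefactors and a product of the form $\bigl(1+O(\rho_{k-1}^s)\bigr)\bigl(1+O(\rho_k^s)\bigr)/\bigl[\bigl(1+O(\rho_k^s)\bigr)\bigl(1+O(\rho_{k-1}^s)\bigr)\bigr]$. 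Using \eqref{citoi0} one checks that the prefactor ratio telescopes exactly to $1$: indeed $\hat{c}_k^{(t)}=\hat{c}_{k-1}^{(t)}\,c_k^{(t)}\lambda_k^{k-1}\prod_{i<k}(\lambda_k-\lambda_i)$, and substituting $c_i^{(t+1)}=(\lambda_i-\mu^{(t)})c_i^{(t)}$ one finds $\hat{c}_k^{(t+1)}/\hat{c}_k^{(t)}=\prod_{i=1}^{k}(\lambda_i-\mu^{(t)})$, hence $\hat{c}_{k-1}^{(t)}\hat{c}_k^{(t+1)}/(\hat{c}_k^{(t)}\hat{c}_{k-1}^{(t+1)})=(\lambda_k-\mu^{(t)})/\lambda_k$. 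Therefore $Q_k^{(s,t)}=(\lambda_k-\mu^{(t)})\bigl(1+o(1)\bigr)$ as $s\to\infty$, giving the first limit. For $E_k^{(s,t)}$ the same substitution into \eqref{Ekst} leaves a residual factor $(\lambda_{k+1}/\lambda_k)^s$ which tends to $0$ since $|\lambda_{k+1}|<|\lambda_k|$, multiplied by a bounded prefactor and a product of terms each $1+o(1)$; hence $E_k^{(s,t)}\to 0$.

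There is essentially no serious obstacle here: the argument is a verbatim adaptation of Theorem~\ref{thm3} with $t$ held fixed throughout, and the only mild bookkeeping step is confirming that the $t$-dependent prefactors in the $Q_k$ ratio collapse to $(\lambda_k-\mu^{(t)})/\lambda_k$ via \eqref{citoi0} rather than to $1$ as in the autonomous case. Since $t$ plays no dynamical role in the $s\to\infty$ limit, one could alternatively simply note that, for each fixed $t$, $\{f_s^{(t)}\}_{s=0}^\infty$ satisfies \eqref{fnmt} with the \emph{same} characteristic polynomial $p(z)$, so Lemma~\ref{lem9} applies verbatim and the computation carries over line by line from the proof of Theorem~\ref{thm3}. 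I would present the proof in this compressed form, exhibiting the two displayed asymptotic formulas for $Q_k^{(s,t)}$ and $E_k^{(s,t)}$ analogous to \eqref{qkstOs}--\eqref{ekstOs} and then passing to the limit.
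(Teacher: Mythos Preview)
Your approach is exactly the paper's: substitute the asymptotic expansion of Lemma~\ref{lem9} into the expressions \eqref{Qkst}--\eqref{Ekst} and pass to the limit. The $E_k^{(s,t)}$ analysis is correct. However, your $Q_k^{(s,t)}$ computation contains two compensating arithmetic slips that you should fix.

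First, in \eqref{Qkst} all four Hankel determinants carry the \emph{same} exponent $s$ (the shift is $t\mapsto t+1$, not $s\mapsto s+1$), so the geometric factors cancel completely: numerator and denominator both carry $(\lambda_1\cdots\lambda_{k-1})^s(\lambda_1\cdots\lambda_k)^s$, and there is \emph{no} surviving factor $\lambda_k$. The surviving $\lambda_k$ you have in mind belongs to the autonomous case $q_k^{(s,t)}$, where the shift $s\mapsto s+1$ produces an extra power. Second, the prefactor ratio is
\[
\frac{\hat{c}_{k-1}^{(t)}\hat{c}_k^{(t+1)}}{\hat{c}_k^{(t)}\hat{c}_{k-1}^{(t+1)}}
=\frac{\hat{c}_k^{(t+1)}/\hat{c}_k^{(t)}}{\hat{c}_{k-1}^{(t+1)}/\hat{c}_{k-1}^{(t)}}
=\frac{\prod_{i=1}^{k}(\lambda_i-\mu^{(t)})}{\prod_{i=1}^{k-1}(\lambda_i-\mu^{(t)})}
=\lambda_k-\mu^{(t)},
\]
not $(\lambda_k-\mu^{(t)})/\lambda_k$ (and it certainly does not ``telescope to $1$''). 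These two errors cancel, so your conclusion $Q_k^{(s,t)}\to\lambda_k-\mu^{(t)}$ is right; this is precisely how the paper's proof runs, with the entire limit coming from the $\hat c$-ratio and no contribution from the powers of the $\lambda_i$.
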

%--------------------------------------------------------------------------------------------------------
\begin{proof}
For the expressions of $Q_k^{(s,t)}$ and $E_k^{(s,t)}$ in Lemma~\ref{lem4} as $s\rightarrow\infty$,
it follows that, as $s\rightarrow\infty$,
\begin{align}
&Q_k^{(s,t)}=\frac{\hat{c}_{k-1}^{(t)}\hat{c}_{k}^{(t+1)}}{\hat{c}_k^{(t)}\hat{c}_{k-1}^{(t+1)}}
\frac{\left(1+O\left(\rho_{k-1}^s\right)\right)\left(1+O\left(\rho_k^{s}\right)\right)}
{\left(1+O\left(\rho_k^s\right)\right)\left(1+O\left(\rho_{k-1}^{s}\right)\right)},
\quad k=1,2,\dots,m,\label{QkstOs}\\
&E_k^{(s,t)}=
\frac{\hat{c}_{k+1}^{(t)}\hat{c}_{k-1}^{(t+1)}}{\hat{c}_{k}^{(t)}\hat{c}_k^{(t+1)}}
\left(\frac{\lambda_{k+1}}{\lambda_k}
\right)^{\hspace{-1mm}s}
\frac{\left(1+O\left(\rho_{k+1}^s\right)\right)\left(1+O\left(\rho_{k-1}^{s}\right)\right)}
{\left(1+O\left(\rho_k^s\right)\right)\left(1+O\left(\rho_{k}^{s}\right)\right)}
%\nonumber\\
%&
\quad k=1,2,\dots,m-1.\label{EkstOs}
\end{align}
Equations \eqref{QkstOs} and \eqref{EkstOs}
imply that, as $s\rightarrow\infty$, $Q_k^{(s,t)}\rightarrow\hat{c}_{k-1}^{(t)}\hat{c}_{k}^{(t+1)}/(\hat{c}_{k}^{(t)}\hat{c}_{k-1}^{(t+1)})$ 
and $E_k^{(s,t)}\rightarrow0$, respectively.
Noting that $\hat{c}_{k}^{(t+1)}/\hat{c}_{k}^{(t)}
=\prod_{\ell=1}^{k}c_{\ell}^{(t+1)}/c_{\ell}^{(t)}
=\prod_{\ell=1}^{k}(\lambda_{\ell}-\mu^{(t)})$,
we also derive $Q_k^{(s,t)}\rightarrow\lambda_{k}-\mu^{(t)}$ as $s\rightarrow\infty$.

\end{proof}
\par
Theorems~\ref{thm3} and~\ref{thm4} with
Propositions~\ref{prop4} and~\ref{prop5} claim that, for fixed $t$, $q_k^{(s)}=q_k^{(s,t)}$ and $Q_k^{(s)}=Q_k^{(s,t)}$ converge to
the eigenvalues of $A^{(s)}=A^{(s,t)}$ and ${\cal A}^{(s)}={\cal A}^{(s,t)}$ as $s\rightarrow\infty$, respectively.
The asymptotic convergence of $q_k^{(s)}$ and $e_k^{(s)}$ as $s\rightarrow\infty$ obviously coincides with that shown in \cite{Henrici,Rutishauser1990}.
\par
Let us turn to the asymptotic analysis of $q_k^{(s,t)}, e_k^{(s,t)}$ and $Q_k^{(s,t)}, E_k^{(s,t)}$ as $t\rightarrow\infty$.
The following lemma gives asymptotic expansions of the Hankel determinants $H_k^{(s,t)}$ as $t\rightarrow\infty$.
%
%----------------------------------------- Lemma 8 ---------------------------------------------
%
\begin{lemma}\label{lem10}
Let us assume that $\lambda_1,\lambda_2,\dots,\lambda_m$ are constants such that 
$|\lambda_1-\mu^{(t)}|>|\lambda_2-\mu^{(t)}|>\dots>|\lambda_m-\mu^{(t)}|$.
Then, as $t\rightarrow\infty$,
\begin{align}\label{asympexpantaukt}
&H_k^{(s,t)}=
\check{c}_k^{(s)}
\prod_{i=0}^{t}
(\lambda_{1}-\mu^{(i)})(\lambda_{2}-\mu^{(i)})\cdots(\lambda_{k}-\mu^{(i)})
\left(
1+O\left(\varrho_k^t\right)
\right),
%\nonumber\\
%&
\quad k=1,2,\dots,m,
\end{align}
where $\check{c}_{k}^{(s)}$ are constants given by
$\check{c}_k^{(s)}:=c_1^{(0)}c_2^{(0)}\cdots c_k^{(0)}(\lambda_1\lambda_2\cdots \lambda_k)^s$
$\prod_{\ell=2}^{k}\lambda_{\kappa_\ell}^{\ell-1}$
$\prod_{i<j}(\lambda_{\kappa_j}-\lambda_{\kappa_i})$
and $\varrho_k$ are constants satisfying $\varrho_{k}>|\lambda_{k+1}-\mu^{(t)}|/|\lambda_{k}-\mu^{(t)}|$.
\end{lemma}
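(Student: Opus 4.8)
The plan is to combine the explicit expansion of $H_k^{(s,t)}$ from Lemma~\ref{lem8} with the evolution formula \eqref{citoi0} for the coefficients $c_i^{(t)}$, and then identify the dominant term under the ordering hypothesis on $|\lambda_i-\mu^{(t)}|$. First I would substitute \eqref{citoi0}, namely $c_i^{(t)}=c_i^{(0)}\prod_{\ell=0}^{t-1}(\lambda_i-\mu^{(\ell)})$, into the right-hand side of \eqref{aeHankels}. For a multi-index $1\le\kappa_1<\kappa_2<\dots<\kappa_k\le m$ the product $c_{\kappa_1}^{(t)}c_{\kappa_2}^{(t)}\cdots c_{\kappa_k}^{(t)}$ becomes
\begin{align*}
c_{\kappa_1}^{(0)}c_{\kappa_2}^{(0)}\cdots c_{\kappa_k}^{(0)}\prod_{\ell=0}^{t-1}(\lambda_{\kappa_1}-\mu^{(\ell)})(\lambda_{\kappa_2}-\mu^{(\ell)})\cdots(\lambda_{\kappa_k}-\mu^{(\ell)}),
\end{align*}
so each term in $H_k^{(s,t)}$ carries the factor $\prod_{\ell=0}^{t-1}\prod_{j=1}^k(\lambda_{\kappa_j}-\mu^{(\ell)})$ together with an $s$- and $\kappa$-dependent constant (the surviving powers $(\lambda_{\kappa_1}\cdots\lambda_{\kappa_k})^s\prod_{\ell=2}^k\lambda_{\kappa_\ell}^{\ell-1}\prod_{i<j}(\lambda_{\kappa_j}-\lambda_{\kappa_i})$, which does not depend on $t$).

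Next I would compare the sizes of these terms as $t\to\infty$. The ratio of the term indexed by an arbitrary $(\kappa_1,\dots,\kappa_k)$ to the term indexed by $(1,2,\dots,k)$ is, up to a $t$-independent constant, equal to
\begin{align*}
\prod_{\ell=0}^{t-1}\frac{(\lambda_{\kappa_1}-\mu^{(\ell)})\cdots(\lambda_{\kappa_k}-\mu^{(\ell)})}{(\lambda_1-\mu^{(\ell)})\cdots(\lambda_k-\mu^{(\ell)})}.
\end{align*}
Since $\{1,\dots,k\}$ is precisely the set of indices giving the $k$ largest values of $|\lambda_i-\mu^{(\ell)}|$ at every $\ell$ (this is exactly the hypothesis $|\lambda_1-\mu^{(t)}|>\dots>|\lambda_m-\mu^{(t)}|$, holding for all $t$), each factor in this product has modulus $\le 1$, and whenever $(\kappa_1,\dots,\kappa_k)\ne(1,\dots,k)$ at least one index exceeds $k$, forcing a uniform bound of the form $\big|\lambda_{\kappa_j}-\mu^{(\ell)}\big|/\big|\lambda_{j'}-\mu^{(\ell)}\big|\le\varrho$ for some $\varrho<1$ (here I would need the ratios $|\lambda_{k+1}-\mu^{(\ell)}|/|\lambda_k-\mu^{(\ell)}|$ to be bounded away from $1$ uniformly in $\ell$, which I would either assume as part of the hypothesis or extract from it; this is the point where I should be careful about what ``$\varrho_k>|\lambda_{k+1}-\mu^{(t)}|/|\lambda_k-\mu^{(t)}|$'' is quantifying over). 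Hence the term with $\kappa_j=j$ dominates and all others are $O(\varrho_k^t)$ relative to it.

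Factoring out the dominant term then gives
\begin{align*}
H_k^{(s,t)}=\check c_k^{(s)}\prod_{\ell=0}^{t-1}(\lambda_1-\mu^{(\ell)})\cdots(\lambda_k-\mu^{(\ell)})\big(1+O(\varrho_k^t)\big),
\end{align*}
where $\check c_k^{(s)}=c_1^{(0)}\cdots c_k^{(0)}(\lambda_1\cdots\lambda_k)^s\prod_{\ell=2}^k\lambda_{\kappa_\ell}^{\ell-1}\prod_{i<j}(\lambda_{\kappa_j}-\lambda_{\kappa_i})$ is read off from the $(1,\dots,k)$ term of Lemma~\ref{lem8}; the discrepancy between the product running to $t-1$ here and to $t$ in the statement \eqref{asympexpantaukt} is only an extra bounded factor $\prod_{j=1}^k(\lambda_j-\mu^{(t)})$, which can be absorbed into the constant or into the stated normalization, so I would reconcile the index range at the end. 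The main obstacle is the uniformity of the error estimate in $\ell$: unlike the $s\to\infty$ case of Lemma~\ref{lem9}, where a single fixed ratio $|\lambda_{k+1}|/|\lambda_k|$ controls everything, here the suppression factor changes with $\ell$, so I must ensure $\sup_\ell |\lambda_{k+1}-\mu^{(\ell)}|/|\lambda_k-\mu^{(\ell)}|<1$ — effectively an implicit boundedness assumption on the shift sequence $\{\mu^{(\ell)}\}$ — and that $\prod_{\ell}\varrho$ indeed decays geometrically rather than merely monotonically.
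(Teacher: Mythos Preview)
Your proposal is correct and follows essentially the same approach as the paper: use the expansion of $H_k^{(s,t)}$ from Lemma~\ref{lem8}, substitute \eqref{citoi0}, and observe that under the ordering hypothesis the term with $\kappa_j=j$ dominates as $t\to\infty$. The paper's proof is in fact just a one-line appeal to this dominance, so your version is more detailed---and your caveats about the product index range and the uniformity in $\ell$ of the ratio bound are genuine subtleties that the paper simply glosses over.
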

%--------------------------------------------------------------------------------------------------
\begin{proof}
Equation \eqref{asympexpantaukt} is easily proved by noting that the terms with
$\kappa_1=1,\kappa_2=2,\dots,\kappa_k=k$ appearing in Lemma~\ref{lem8}
are dominant terms of the Hankel determinants $H_k^{(s,t)}$ as $t\rightarrow\infty$
under the assumption $|\lambda_1-\mu^{(t)}|>|\lambda_2-\mu^{(t)}|>\dots>|\lambda_m-\mu^{(t)}|$.

\end{proof}
\par
Neither the autonomous dToda equation \eqref{adToda} nor the non-autonomous dToda equation \eqref{nadToda}
generates the discrete-time evolution of $q_k^{(s,t)}$ and $e_k^{(s,t)}$ with respect to $t$.
However, Lemma~\ref{lem10} immediately yields asymptotic behavior of $q_k^{(t)}$ and $e_k^{(t)}$ obtained by removing the discrete-time variable $s$ from the superscripts in $q_k^{(s,t)}$ and $e_k^{(s,t)}$, respectively, as $t\rightarrow\infty$. 
%
%---------------------------------------- Theorem 7 -------------------------------
%
\begin{theorem}\label{thm5}
Let us assume that for the infinite sequence $\{f^{(t)}\}_{t=0}^{\infty}$, 
the Hankel determinants  $H_k^{(s,t)}$ are all nonzero.
Moreover, let $\lambda_1,\lambda_2,\dots,\lambda_m$ be constants 
such that $|\lambda_1-\mu^{(t)}|>|\lambda_2-\mu^{(t)}|>\dots>|\lambda_m-\mu^{(t)}|$.
Then, it holds that
\begin{align}
&\lim_{t\rightarrow\infty} q_k^{(t)}
=\lambda_k,\quad k=1,2,\dots,m,\\
&\lim_{t\rightarrow\infty} e_k^{(t)}
=0,\quad k=1,2,\dots,m-1.
\end{align}
\end{theorem}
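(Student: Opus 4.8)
The plan is to follow the template of the proof of Theorem~\ref{thm3}, but with Lemma~\ref{lem10} in place of Lemma~\ref{lem9}. Here $q_k^{(t)}$ and $e_k^{(t)}$ are understood as $q_k^{(s,t)}$ and $e_k^{(s,t)}$ with $s$ held fixed (the final limits will not depend on the choice of $s$). So I would start from the determinant expressions \eqref{qkst} and \eqref{ekst}, i.e.\ $q_k^{(s,t)}=H_{k-1}^{(s,t)}H_k^{(s+1,t)}/(H_k^{(s,t)}H_{k-1}^{(s+1,t)})$ and $e_k^{(s,t)}=H_{k+1}^{(s,t)}H_{k-1}^{(s+1,t)}/(H_k^{(s,t)}H_k^{(s+1,t)})$, and substitute the asymptotic expansion \eqref{asympexpantaukt} for each of the four Hankel determinants.

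First I would treat $q_k^{(t)}$. The factor $\prod_{i=0}^{t}(\lambda_1-\mu^{(i)})\cdots(\lambda_k-\mu^{(i)})$ that Lemma~\ref{lem10} attaches to $H_k^{(s,t)}$ does not depend on $s$, so in the ratio defining $q_k^{(s,t)}$ the products carried by $H_{k-1}^{(s,t)}$ and $H_{k-1}^{(s+1,t)}$ cancel, as do those carried by $H_k^{(s,t)}$ and $H_k^{(s+1,t)}$. What survives is the constant ratio $\check c_{k-1}^{(s)}\check c_k^{(s+1)}/(\check c_k^{(s)}\check c_{k-1}^{(s+1)})$ times a quotient of the shape $(1+O(\varrho_{k-1}^t))(1+O(\varrho_k^t))/[(1+O(\varrho_k^t))(1+O(\varrho_{k-1}^t))]$. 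Writing $\check c_k^{(s)}=c_1^{(0)}\cdots c_k^{(0)}(\lambda_1\cdots\lambda_k)^s\prod_{\ell=2}^{k}\lambda_\ell^{\ell-1}\prod_{i<j}(\lambda_j-\lambda_i)$, the $s$-independent Vandermonde and $c_\ell^{(0)}$ factors cancel between numerator and denominator and the remaining powers collapse to $(\lambda_1\cdots\lambda_k)/(\lambda_1\cdots\lambda_{k-1})=\lambda_k$; hence $q_k^{(s,t)}=\lambda_k(1+o(1))$ and $q_k^{(t)}\to\lambda_k$ as $t\to\infty$.

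For $e_k^{(t)}$ I would make the same substitution. Now the $\mu$-products do not all cancel: the numerator supplies $\prod_{i=0}^{t}\prod_{\ell=1}^{k+1}(\lambda_\ell-\mu^{(i)})\cdot\prod_{i=0}^{t}\prod_{\ell=1}^{k-1}(\lambda_\ell-\mu^{(i)})$ and the denominator supplies the square of $\prod_{i=0}^{t}\prod_{\ell=1}^{k}(\lambda_\ell-\mu^{(i)})$, so the surviving factor is $\prod_{i=0}^{t}(\lambda_{k+1}-\mu^{(i)})/(\lambda_k-\mu^{(i)})$, multiplied by the finite nonzero constant $\check c_{k+1}^{(s)}\check c_{k-1}^{(s+1)}/(\check c_k^{(s)}\check c_k^{(s+1)})$ and by a $(1+o(1))$ correction. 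By the hypothesis $|\lambda_k-\mu^{(t)}|>|\lambda_{k+1}-\mu^{(t)}|$ --- used with a uniform gap, exactly as in Lemma~\ref{lem10}, via the constants $\varrho_k<1$ --- every factor of this product is bounded in modulus by $\varrho_k$, so the product is $O(\varrho_k^{t+1})$ and tends to $0$; thus $e_k^{(t)}\to0$.

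I expect the only genuinely delicate point to be this last step: the raw hypothesis gives the ordering of $|\lambda_\ell-\mu^{(t)}|$ only pointwise in $t$, whereas concluding $\prod_{i=0}^{t}(\lambda_{k+1}-\mu^{(i)})/(\lambda_k-\mu^{(i)})\to 0$ requires a uniform gap. Since this uniformity is already what makes the $O(\varrho_k^{t})$ estimates of Lemma~\ref{lem10} meaningful, I would simply invoke Lemma~\ref{lem10} and record that the same $\varrho_k$ also controls the surviving product in $e_k^{(t)}$. Everything else --- the cancellation of the $\mu$-products and the collapse of $\check c_k^{(s)}$ to $\lambda_k$ --- is routine bookkeeping.
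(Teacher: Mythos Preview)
Your proposal is correct and follows essentially the same approach as the paper's proof: both substitute the asymptotic expansion of Lemma~\ref{lem10} into the determinant formulas \eqref{qkst}--\eqref{ekst} from Lemma~\ref{lem3}, cancel the $\mu$-products to obtain the constant $\check c_{k-1}^{(s)}\check c_k^{(s+1)}/(\check c_k^{(s)}\check c_{k-1}^{(s+1)})=\lambda_k$ for $q_k^{(t)}$, and isolate the surviving product $\prod_i(\lambda_{k+1}-\mu^{(i)})/(\lambda_k-\mu^{(i)})$ to show $e_k^{(t)}\to0$. Your treatment is in fact more explicit than the paper's, both in spelling out why the $\check c$-ratio collapses to $\lambda_k$ and in flagging the implicit uniform-gap assumption underlying the $\varrho_k$ constants of Lemma~\ref{lem10}.
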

%-----------------------------------------------------------------------------------------
\begin{proof}
By combining Lemma~\ref{lem10} with the expressions of $q_k^{(s,t)}$ and $e_{k}^{(s,t)}$ in Lemma~\ref{lem3},
we obtain, as $t\rightarrow\infty$,
\begin{align}
&q_k^{(s,t)}=\frac{\check{c}_{k-1}^{(s)}\check{c}_k^{(s+1)}}{\check{c}_k^{(s)}\check{c}_{k-1}^{(s+1)}}
\frac{\left(1+O\left(\varrho_{k-1}^{t}\right)\right)
\left(1+O\left(\varrho_{k}^{t}\right)\right)}
{\left(1+O\left(\varrho_{k}^{t}\right)\right)
\left(1+O\left(\varrho_{k-1}^{t}\right)\right)},\quad
k=1,2,\dots,m,\label{qkstOt}\\
&e_k^{(s,t)}=\frac{\check{c}_{k+1}^{(s)}\check{c}_{k-1}^{(s+1)}}{\check{c}_k^{(s)}\check{c}_k^{(s+1)}}
\prod_{i=1}^{t}\frac{\lambda_{k+1}-\mu^{(i)}}{\lambda_k-\mu^{(i)}}
\frac{\left(1+O\left(\varrho_{k}^{t}\right)\right)
\left(1+O\left(\varrho_{k-1}^{t}\right)\right)}
{\left(1+O\left(\varrho_{k}^{t}\right)\right)
\left(1+O\left(\varrho_{k}^{t}\right)\right)},
%\nonumber\\
%&
\quad k=1,2,\dots,m-1.\label{ekstOt}
\end{align}
From the limit $t\rightarrow\infty$ in \eqref{qkstOt} and \eqref{ekstOt},
we find that $q_k^{(s,t)}
\rightarrow
\check{c}_{k-1}^{(s)}\check{c}_k^{(s+1)}/$
$(\check{c}_k^{(s)}\check{c}_{k-1}^{(s+1)})
=\lambda_k$ and  $e_k^{(s,t)}\rightarrow 0$ as $t\rightarrow\infty$.

\end{proof}
\par
Lemma~\ref{lem10} also enables us to clarify the asymptotic analysis of the non-autonomous dToda equation \eqref{nadToda}.
%
%---------------------------------------- Theorem 8 -------------------------------
%
\begin{theorem}\label{thm6}
Let us assume that for the infinite sequence $\{f^{(t)}\}_{t=0}^{\infty}$, 
the Hankel determinants  $H_k^{(s,t)}$ are all nonzero.
Moreover, let $\lambda_1,\lambda_2,\dots,\lambda_m$ be constants 
such that $|\lambda_1-\mu^{(t)}|>|\lambda_2-\mu^{(t)}|>\dots>|\lambda_m-\mu^{(t)}|$.
Then, it holds that
\begin{align}
&\lim_{t\rightarrow\infty} Q_k^{(t)}=\lambda_k-\mu^\ast,\quad k=1,2,\dots,m,\\
&\lim_{t\rightarrow\infty} E_k^{(t)}=0,\quad k=1,2,\dots,m-1,
\end{align}
where $\mu^{\ast}:=\lim_{t\rightarrow\infty}\mu^{(t)}$.
\end{theorem}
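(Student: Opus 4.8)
The plan is to mirror the proof of Theorem~\ref{thm5}, using the expressions for $Q_k^{(s,t)}$ and $E_k^{(s,t)}$ in Lemma~\ref{lem4} together with the asymptotic expansion of the Hankel determinants as $t\to\infty$ from Lemma~\ref{lem10}, and then pushing the analysis one step further to account for the shift $\mu^{(t)}$. First I would recall that $Q_k^{(t)}=Q_k^{(s,t)}$ is independent of $s$ (as noted after Theorem~\ref{lem6}), so it suffices to work at a convenient fixed $s$, say $s=0$, and use formulas \eqref{Qkst} and \eqref{Ekst}, that is $Q_k^{(0,t)}=H_{k-1}^{(0,t)}H_{k}^{(0,t+1)}/(H_{k}^{(0,t)}H_{k-1}^{(0,t+1)})$ and $E_k^{(0,t)}=H_{k+1}^{(0,t)}H_{k-1}^{(0,t+1)}/(H_{k}^{(0,t)}H_{k}^{(0,t+1)})$.

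Next I would substitute the expansion \eqref{asympexpantaukt} into these ratios. Writing $P_k^{(t)}:=\prod_{i=0}^{t}(\lambda_1-\mu^{(i)})\cdots(\lambda_k-\mu^{(i)})$, the ratio $H_{k}^{(0,t+1)}/H_{k}^{(0,t)}$ contributes the single extra factor $(\lambda_1-\mu^{(t+1)})\cdots(\lambda_k-\mu^{(t+1)})$ up to a factor $1+O(\varrho_k^{t})$; the constants $\check c_k^{(0)}$ cancel between numerator and denominator of $Q_k^{(t)}$. A short computation then gives
\begin{align*}
Q_k^{(t)}=(\lambda_k-\mu^{(t+1)})\bigl(1+O(\varrho_{k-1}^{t})+O(\varrho_k^{t})\bigr),\qquad
E_k^{(t)}=\mathrm{const}\cdot\prod_{i=0}^{t}\frac{\lambda_{k+1}-\mu^{(i)}}{\lambda_k-\mu^{(i)}}\bigl(1+O(\varrho_k^{t})\bigr),
\end{align*}
exactly as in \eqref{qkstOt}--\eqref{ekstOt} but with an unshifted-to-shifted adjustment. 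Taking $t\to\infty$ and using $\mu^{(t)}\to\mu^\ast$ yields $Q_k^{(t)}\to\lambda_k-\mu^\ast$. For $E_k^{(t)}$, each factor $(\lambda_{k+1}-\mu^{(i)})/(\lambda_k-\mu^{(i)})$ has modulus bounded away from $1$ for large $i$ by the hypothesis $|\lambda_{k+1}-\mu^{(i)}|<|\lambda_k-\mu^{(i)}|$ (uniformly, since $\mu^{(i)}\to\mu^\ast$), so the infinite product of moduli diverges to $0$ and hence $E_k^{(t)}\to 0$.

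The one genuinely new point compared with Theorem~\ref{thm5} is the appearance of $\mu^{\ast}=\lim_{t\to\infty}\mu^{(t)}$: the finite-$t$ quantity $Q_k^{(t)}$ is governed by $\mu^{(t+1)}$, which must be shown to converge, and this is where the existence of the limit $\mu^\ast$ is used — I would state the convergence of $\{\mu^{(t)}\}$ as a standing hypothesis (implicit in the notation $\mu^\ast$). The main obstacle, and the step requiring care, is controlling the error term in $E_k^{(t)}$: one must argue that the geometric-type decay $\varrho_k^{t}$ together with the diverging product $\prod|\,(\lambda_{k+1}-\mu^{(i)})/(\lambda_k-\mu^{(i)})|\to 0$ genuinely forces $E_k^{(t)}\to 0$ even though the individual ratios may fluctuate with $i$; choosing $\varrho_k$ to dominate $\limsup_i |\lambda_{k+1}-\mu^{(i)}|/|\lambda_k-\mu^{(i)}|$ (possible because $\mu^{(i)}\to\mu^\ast$ makes the limsup strictly less than $1$) handles this uniformly. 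The rest is the same routine determinant bookkeeping already carried out in the proofs of Lemmas~\ref{lem9}--\ref{lem10} and Theorems~\ref{thm4}--\ref{thm5}.
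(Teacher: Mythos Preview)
Your proposal is correct and follows essentially the same route as the paper: substitute the asymptotic expansion of Lemma~\ref{lem10} into the Hankel-determinant expressions \eqref{Qkst}--\eqref{Ekst} from Lemma~\ref{lem4}, obtaining $Q_k^{(s,t)}=(\lambda_k-\mu^{(t+1)})(1+o(1))$ and $E_k^{(s,t)}=\mathrm{(bounded)}\cdot\prod_i\frac{\lambda_{k+1}-\mu^{(i)}}{\lambda_k-\mu^{(i)}}(1+o(1))$, then let $t\to\infty$. Your explicit justification that $\varrho_k$ can be chosen to dominate $\limsup_i|\lambda_{k+1}-\mu^{(i)}|/|\lambda_k-\mu^{(i)}|<1$ (using $\mu^{(i)}\to\mu^\ast$) is a detail the paper leaves implicit; note only that the ``const'' in your $E_k$ expansion actually carries an additional bounded factor $1/(\lambda_k-\mu^{(t+1)})$, which does not affect the limit.
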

%------------------------------------------------------------------------------------------
\begin{proof}
From the expressions of $Q_k^{(s,t)}$ and $E_k^{(s,t)}$ in Lemma~\ref{lem4} and Lemma~\ref{lem10},
it follows that, as $t\rightarrow\infty$,
\begin{align*}
&Q_k^{(s,t)}=(\lambda_k-\mu^{(t+1)})
\frac{\left(1+O\left(\varrho_{k-1}^{t}\right)\right)
\left(1+O\left(\varrho_{k}^{t+1}\right)\right)}
{\left(1+O\left(\varrho_{k}^{t}\right)\right)
\left(1+O\left(\varrho_{k-1}^{t+1}\right)\right)},
\quad k=1,2,\dots,m,\\
&E_k^{(s,t)}=\frac{\check{c}_{k+1}^{(s)}\check{c}_{k-1}^{(s)}}{(\check{c}_k^{(s)})^2}
\frac{1}{\lambda_k-\mu^{(t+1)}}\prod_{i=1}^{t}\frac{\lambda_{k+1}-\mu^{(i)}}{\lambda_k-\mu^{(i)}}
\frac{\left(1+O\left(\varrho_{k+1}^{t}\right)\right)
\left(1+O\left(\varrho_{k-1}^{t+1}\right)\right)}
{\left(1+O\left(\varrho_{k}^{t}\right)\right)
\left(1+O\left(\varrho_{k}^{t+1}\right)\right)},
%\nonumber\\
%&
\quad k=1,2,\dots,m-1,
\end{align*}
which imply that $Q_k^{(s,t)}\rightarrow \lambda_k-\mu^{\ast}$ and $E_k^{(s,t)}\rightarrow 0$ as $t\rightarrow\infty$.

\end{proof}
\par
From Theorems~\ref{thm5} and~\ref{thm6} with Propositions~\ref{prop4} and~\ref{prop5},
it can be concluded that $q_k^{(t)}$ and  $Q_k^{(t)}$ converge to the eigenvalues of $A^{(t)}$
and their shifted values of ${\cal A}^{(t)}$, as $t\rightarrow\infty$, respectively.
In particular,
Theorem~\ref{thm6} with Proposition~\ref{prop5} also describes asymptotic convergence to matrix eigenvalues 
in the so-called qd with implicit shift algorithm.

%---------------------------------------------------------------------------------------------------------------------------------
%-------------------------------------------------------- Sec.7 dLV system -----------------------------------------------
%----------------------------------------------------------------------------------------------------------------------------------
\section{Discrete Lotka-Volterra system, its determinant solution and asymptotic behavior}\label{sec:6}
In this section, we first derive the dLV system \eqref{dLV} by observing properties concerning the Hadamard polynomials ${\cal H}_k^{(s,t)}(z)$ involving the infinite sequence $\{f_{s}^{(t)}\}_{s,t=0}^{\infty}$ again.
Then, we show the determinant solution to the dLV system \eqref{dLV}, and give an asymptotic analysis for the dLV system  \eqref{dLV}.\par
For the Hadamard polynomials ${\cal H}_k^{(s,t)}(z)$, 
let us prepare symmetrical polynomials $\tilde{\cal H}_k^{(s,t)}(z)$ given by
\begin{align}\label{psiphi}
\left\{
\begin{array}{l}
\tilde{\cal H}_{-1}^{(s,t)}(z):=0,\\
\tilde{\cal H}_{2k}^{(s,t)}(z):={\cal H}_k^{(s,t)}(z^2),\quad k=0,1,\dots,m,\\
\tilde{\cal H}_{2k+1}^{(s,t)}(z):=z{\cal H}_{k}^{(s+1,t)}(z^2),\quad k=0,1,\dots,m.
\end{array}
\right.
\end{align}
Hereinafter, we refer to the symmetric polynomials $\tilde{\cal H}_k^{(s,t)}(z)$ as the symmetric Hadamard polynomials.
\par
Similar to the case of the Hadamard polynomials ${\cal H}_k^{(s,t)}(z)$,
we easily derive three-term recurrence relations with respect to 
the symmetric Hadamard polynomials $\tilde{\cal H}_{k}^{(s,t)}(z)$.
%
%------------------------------------ Lemma 9-----------------------------------
%
\begin{lemma}\label{lem11}
The symmetric Hadamard polynomials $\tilde{\cal H}_k^{(s,t)}(z)$ satisfy
\begin{align}\label{psi3term}
z\tilde{\cal H}_k^{(s,t)}(z)=\tilde{\cal H}_{k+1}^{(s,t)}(z)+v_k^{(s,t)}\tilde{\cal H}_{k-1}^{(s,t)}(z),\quad
k=0,1,\dots,2m,
\end{align}
where $v_{2k}^{(s,t)}:=e_k^{(s,t)}$ and $v_{2k-1}^{(s,t)}:=q_k^{(s,t)}$.
\end{lemma}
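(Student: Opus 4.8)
The plan is to split the claim according to the parity of $k$ and reduce each case, after the substitution $z\mapsto z^2$, to one of the two transformation identities of Lemma~\ref{lem3}: the Christoffel relation \eqref{Christoffeln} and the Geronimus relation \eqref{Geronimusn}. Throughout, we keep the standing nonvanishing hypothesis on the Hankel determinants $H_k^{(s,t)}$, so that the Hadamard polynomials and the quantities $q_k^{(s,t)}$, $e_k^{(s,t)}$ (hence $v_k^{(s,t)}$) are well defined. Since \eqref{Christoffeln} and \eqref{Geronimusn} are polynomial identities in the indeterminate, we are free to evaluate them at $z^2$.

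First I would treat the even case $k=2\ell$. Using the definition \eqref{psiphi}, the left-hand side of \eqref{psi3term} is $z\,{\cal H}_\ell^{(s,t)}(z^2)$, while the right-hand side, noting $v_{2\ell}^{(s,t)}=e_\ell^{(s,t)}$ and $\tilde{\cal H}_{2\ell-1}^{(s,t)}(z)=z\,{\cal H}_{\ell-1}^{(s+1,t)}(z^2)$, equals $z\bigl({\cal H}_\ell^{(s+1,t)}(z^2)+e_\ell^{(s,t)}{\cal H}_{\ell-1}^{(s+1,t)}(z^2)\bigr)$. Cancelling the common factor $z$, the asserted identity is exactly \eqref{Geronimusn} evaluated at $z^2$ (with $k$ there equal to $\ell$), which holds for $\ell=1,\dots,m$; the endpoint $k=2m$ is this case with $\ell=m$.

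Next I would treat the odd case $k=2\ell+1$. Here the left-hand side is $z\cdot z\,{\cal H}_\ell^{(s+1,t)}(z^2)=z^2{\cal H}_\ell^{(s+1,t)}(z^2)$, and the right-hand side, using $v_{2\ell+1}^{(s,t)}=q_{\ell+1}^{(s,t)}$, $\tilde{\cal H}_{2\ell+2}^{(s,t)}(z)={\cal H}_{\ell+1}^{(s,t)}(z^2)$ and $\tilde{\cal H}_{2\ell}^{(s,t)}(z)={\cal H}_\ell^{(s,t)}(z^2)$, equals ${\cal H}_{\ell+1}^{(s,t)}(z^2)+q_{\ell+1}^{(s,t)}{\cal H}_\ell^{(s,t)}(z^2)$. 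This is precisely \eqref{Christoffeln} with $k$ there replaced by $\ell+1$ and evaluated at $z^2$, valid for $\ell+1=1,\dots,m$, i.e. for the odd indices $k=1,3,\dots,2m-1$. Finally, the boundary index $k=0$ is immediate: both sides reduce to $z$ since $\tilde{\cal H}_0^{(s,t)}(z)={\cal H}_0^{(s,t)}(z^2)=1$, $\tilde{\cal H}_1^{(s,t)}(z)=z\,{\cal H}_0^{(s+1,t)}(z^2)=z$, and $\tilde{\cal H}_{-1}^{(s,t)}(z)=0$.

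I do not expect a genuine obstacle here: there is no nontrivial computation, only bookkeeping. The only points requiring care are (i) matching the parity of $k$ to the correct transformation, (ii) tracking the shift $s\mapsto s+1$ that is built into the odd-index definition in \eqref{psiphi}, and (iii) checking that the index ranges in \eqref{Christoffeln}–\eqref{Geronimusn} cover all $k=0,1,\dots,2m$ needed in \eqref{psi3term}, which they do by the case analysis above.
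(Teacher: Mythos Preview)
Your proposal is correct and follows essentially the same approach as the paper: substitute $z\mapsto z^2$ in the two identities \eqref{Christoffeln} and \eqref{Geronimusn} of Lemma~\ref{lem3}, then translate via the definition \eqref{psiphi} to obtain the even and odd cases of \eqref{psi3term}. Your write-up is in fact more careful than the paper's, explicitly checking the boundary index $k=0$ and verifying that the index ranges match.
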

%--------------------------------------------------------------------------------------
\begin{proof}
By replacing $z$ with $z^2$ in Lemma~\ref{lem3},
we obtain 
\begin{align*}
&z^2{\cal H}_{k-1}^{(s+1,t)}(z^2)={\cal H}_{k}^{(s,t)}(z^2)+q_{k}^{(s,t)}{\cal H}_{k-1}^{(s,t)}(z^2),\quad k=1,2,\dots,m,\\
&{\cal H}_k^{(s,t)}(z^2)={\cal H}_{k}^{(s+1,t)}+e_k^{(s,t)}{\cal H}_{k-1}^{(s+1,t)}(z^2),\quad k=0,1,\dots,m.
\end{align*}
Thus, by considering \eqref{psiphi}, we have \eqref{psi3term}.

\end{proof}
\par
For the parameters $\mu^{(t)}$ in the non-autonomous dToda equation \eqref{nadToda}, 
let us introduce an infinite sequence $\{\kappa^{(t)}\}_{t=0}^{\infty}$ satisfying
\begin{align}\label{kappamu}
(\kappa^{(t)})^2:=\mu^{(t)},\quad t=0,1,\dots.
\end{align}
Then, we derive another three-term recurrence relations of the symmetric Hadamard polynomials
$\tilde{\cal H}_k^{(s,t)}(z)$ involving the infinte sequence $\{\kappa^{(t)}\}_{t=0}^{\infty}$.
%
%------------------------------------ Lemma 10 --------------------------------
%
\begin{lemma}\label{lem12}
The symmetric Hadamard polynomials $\tilde{\cal H}_{k}^{(s,t)}(z)$ satisfy
\begin{align}\label{psichrist}
[z^2-(\kappa^{(t)})^2]
\tilde{\cal H}_{k-1}^{(s,t+1)}(z)
=\tilde{\cal H}_{k+1}^{(s,t)}(z)+V_{k}^{(s,t)}\tilde{\cal H}_{k-1}^{(s,t)}(z),\quad k=1,2,\dots,2m,
\end{align}
where $V_{2k-1}^{(s,t)}:=Q_{k}^{(s,t)}$ and $V_{2k}^{(s,t)}:=Q_{k}^{(s+1,t)}$.
\end{lemma}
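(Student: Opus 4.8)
The plan is to obtain \eqref{psichrist} directly from the Christoffel-type transformation \eqref{Christoffelt} of Lemma~\ref{lem4} by the substitution $z\mapsto z^2$, together with the relation $\mu^{(t)}=(\kappa^{(t)})^2$ from \eqref{kappamu} and the defining formulas \eqref{psiphi} for the symmetric Hadamard polynomials. Since \eqref{psiphi} treats even and odd indices differently, I would split the proof into the cases $k$ odd and $k$ even; in each case the desired identity reduces, after the change of variable, to a single instance of \eqref{Christoffelt}.

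First, for odd $k=2j-1$ with $j=1,2,\dots,m$, I would replace $z$ by $z^2$ in \eqref{Christoffelt}, obtaining
\[
[z^2-\mu^{(t)}]\,{\cal H}_{j-1}^{(s,t+1)}(z^2)={\cal H}_{j}^{(s,t)}(z^2)+Q_{j}^{(s,t)}\,{\cal H}_{j-1}^{(s,t)}(z^2).
\]
By \eqref{psiphi} one has ${\cal H}_{j-1}^{(s,t+1)}(z^2)=\tilde{\cal H}_{k-1}^{(s,t+1)}(z)$, ${\cal H}_{j}^{(s,t)}(z^2)=\tilde{\cal H}_{k+1}^{(s,t)}(z)$ and ${\cal H}_{j-1}^{(s,t)}(z^2)=\tilde{\cal H}_{k-1}^{(s,t)}(z)$, so combining this with $\mu^{(t)}=(\kappa^{(t)})^2$ and the definition $V_{2j-1}^{(s,t)}:=Q_{j}^{(s,t)}$ yields \eqref{psichrist} for odd $k$.

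Next, for even $k=2j$ with $j=1,2,\dots,m$, I would apply \eqref{Christoffelt} with $s$ replaced by $s+1$ and index $j$, again after $z\mapsto z^2$, and then multiply through by $z$:
\[
[z^2-\mu^{(t)}]\,z{\cal H}_{j-1}^{(s+1,t+1)}(z^2)=z{\cal H}_{j}^{(s+1,t)}(z^2)+Q_{j}^{(s+1,t)}\,z{\cal H}_{j-1}^{(s+1,t)}(z^2).
\]
The odd-index clause of \eqref{psiphi} gives $z{\cal H}_{j-1}^{(s+1,t+1)}(z^2)=\tilde{\cal H}_{k-1}^{(s,t+1)}(z)$, $z{\cal H}_{j}^{(s+1,t)}(z^2)=\tilde{\cal H}_{k+1}^{(s,t)}(z)$ and $z{\cal H}_{j-1}^{(s+1,t)}(z^2)=\tilde{\cal H}_{k-1}^{(s,t)}(z)$, which turns the displayed equation into \eqref{psichrist} for even $k$, with $V_{2j}^{(s,t)}:=Q_{j}^{(s+1,t)}$.

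There is no substantial obstacle here: the lemma is a bookkeeping corollary of Lemma~\ref{lem4}. The points requiring attention are (i) the hidden shift $s\to s+1$ carried by the odd-degree symmetric polynomials in \eqref{psiphi}, which is exactly what produces $V_{2j}^{(s,t)}=Q_{j}^{(s+1,t)}$ rather than $Q_{j}^{(s,t)}$; (ii) matching the index ranges, since $j=1,\dots,m$ in each case together exhaust $k=1,\dots,2m$ while \eqref{Christoffelt} is available on precisely that range; and (iii) the boundary case $k=1$, where $\tilde{\cal H}_{0}\equiv 1$ and \eqref{psichrist} collapses to the $j=1$ instance of \eqref{Christoffelt} evaluated at $z^2$.
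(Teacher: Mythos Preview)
Your proof is correct and follows essentially the same route as the paper: split into the odd and even cases of \eqref{psiphi}, substitute $z\mapsto z^2$ in \eqref{Christoffelt} (with the shift $s\to s+1$ in the even case), and read off \eqref{psichrist}. The paper's own proof is just a more terse version of exactly this argument.
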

%--------------------------------------------------------------------------------------
\begin{proof}
From \eqref{Christoffelt} in Lemma~\ref{lem4},
it is obvious that
\begin{align*}
&[z^2-(\kappa^{(t)})^2]\tilde{\cal H}_{2k-2}^{(s,t+1)}(z)
=\tilde{\cal H}_{2k}^{(s,t)}(z)+Q_{k}^{(s,t)}\tilde{\cal H}_{2k-2}^{(s,t)}(z),\\
&[z^2-(\kappa^{(t)})^2]\tilde{\cal H}_{2k-1}^{(s,t+1)}(z)
=\tilde{\cal H}_{2k+1}^{(s,t)}(z)+Q_{k}^{(s+1,t)}\tilde{\cal H}_{2k-1}^{(s,t)}(z),
\end{align*} 
which are equivalent to \eqref{psichrist}.

\end{proof}
Lemmas~\ref{lem11} and~\ref{lem12} lead to the following lemma with respect to $v_k^{(s,t)}$ and $V_k^{(s,t)}$.
%
%------------------------------------ Lemma 11 --------------------------------
%
\begin{lemma}
The variables $v_k^{(s,t)}$ and $V_k^{(s,t)}$ satisfy
\begin{align}
&v_k^{(s,t+1)}+V_{k+2}^{(s,t)}=v_{k+2}^{(s,t)}+V_{k+1}^{(s,t)},
\quad k=0,1,\dots,2m-2,\label{v+V}\\
&v_{k}^{(s,t+1)}V_{k}^{(s,t)}=v_{k}^{(s,t)}V_{k+1}^{(s,t)},
\quad k=1,2,\dots,2m-1. \label{vV}
\end{align}
\end{lemma}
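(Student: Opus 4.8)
The plan is to compare the two families of three-term recurrences for the symmetric Hadamard polynomials $\tilde{\cal H}_k^{(s,t)}(z)$ furnished by Lemmas~\ref{lem11} and~\ref{lem12}, exactly mirroring how \eqref{q+en} and \eqref{Q+Et} were extracted from the Christoffel/Geronimus identities in the proofs of Theorems~\ref{lem5} and~\ref{lem6}. The point is that \eqref{psi3term} and \eqref{psichrist} together encode both a ``$k\mapsto k+1$'' recursion and a ``$t\mapsto t+1$'' transformation, and eliminating the polynomials between suitable index shifts of these two relations will leave purely algebraic relations among $v_k^{(s,t)}$ and $V_k^{(s,t)}$.

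First I would write down \eqref{psichrist} at index $k$ and, separately, substitute \eqref{psi3term} (with $z^2-(\kappa^{(t)})^2$ replaced appropriately using $(\kappa^{(t)})^2=\mu^{(t)}$ from \eqref{kappamu}) into it in two different orders. Concretely: on one hand expand $\tilde{\cal H}_{k+1}^{(s,t)}(z)$ inside \eqref{psichrist} using \eqref{psi3term} to move the degree down; on the other hand expand $[z^2-(\kappa^{(t)})^2]\tilde{\cal H}_{k-1}^{(s,t+1)}(z)$ by first writing $z^2\tilde{\cal H}_{k-1}^{(s,t+1)}(z)$ via \eqref{psi3term} at time $t+1$ (which introduces $v_{k-1}^{(s,t+1)}$, $v_{k-2}^{(s,t+1)}$, \dots) and then reducing further. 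Carrying this out for two consecutive values of $k$ and matching coefficients of $\tilde{\cal H}_{k-1}^{(s,t)}(z)$ and $\tilde{\cal H}_{k-2}^{(s,t)}(z)$ — which are linearly independent as polynomials of distinct degrees — yields one additive identity and one multiplicative identity. Reindexing (shifting $k$ by one or two as needed) should bring these into the stated forms \eqref{v+V} and \eqref{vV}; the ranges $k=0,\dots,2m-2$ and $k=1,\dots,2m-1$ will come out of tracking where the recurrences \eqref{psi3term} and \eqref{psichrist} remain valid (i.e. where $\tilde{\cal H}_{k\pm1}$ are still genuinely defined, $k\le 2m$).

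An alternative, possibly cleaner route is to avoid rederiving anything and instead translate the already-proved relations \eqref{q+en} and \eqref{Q+Et} through the substitutions $v_{2k}^{(s,t)}=e_k^{(s,t)}$, $v_{2k-1}^{(s,t)}=q_k^{(s,t)}$ and $V_{2k-1}^{(s,t)}=Q_k^{(s,t)}$, $V_{2k}^{(s,t)}=Q_k^{(s+1,t)}$. One would need the relations linking the two sets of dToda variables across a single $t$-step — these are implicit in the Hankel-determinant formulas \eqref{qkst}, \eqref{ekst}, \eqref{Qkst}, \eqref{Ekst}, since $q_k^{(s,t+1)}$, $e_k^{(s,t+1)}$, $Q_k^{(s,t)}$, $E_k^{(s,t)}$ are all ratios of the same $H_k^{(s,t)}$'s at shifted arguments. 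Plugging $H_k^{(s,t+1)}H_k^{(s,t)}$-type Jacobi identities of Proposition~\ref{prop2} into these ratios should produce \eqref{v+V} and \eqref{vV} after the even/odd bookkeeping. I would present whichever of the two is shorter; the recurrence-elimination argument is more self-contained and parallels the existing proofs, so I lean toward that one.

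The main obstacle I anticipate is purely organizational: the even/odd index conventions in \eqref{psiphi}, combined with the fact that \eqref{psi3term} carries a ``$s$'' shift hidden inside the odd-index polynomials ($\tilde{\cal H}_{2k+1}^{(s,t)}(z)=z{\cal H}_k^{(s+1,t)}(z^2)$) while \eqref{psichrist} mixes $Q_k^{(s,t)}$ and $Q_k^{(s+1,t)}$, make it easy to land on a relation that is off by one index or that has $s$ shifted incorrectly. The safe way is to verify the two identities separately on even and odd $k$ — i.e. check \eqref{v+V} and \eqref{vV} for $k=2j$ and $k=2j-1$ against the correct instances of \eqref{q+en}/\eqref{Q+Et} — and confirm that both cases collapse to the single stated formula. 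No genuinely hard estimate or new idea is needed beyond that careful case split.
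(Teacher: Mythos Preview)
Your first route is essentially the paper's argument: combine \eqref{psi3term} and \eqref{psichrist} and read off coefficients of linearly independent polynomials. The paper does it in one stroke, without any even/odd case split: multiply \eqref{psichrist} at index $k+1$ by $z$, expand $z\tilde{\cal H}_k^{(s,t+1)}$ on the left via \eqref{psi3term} at time $t+1$ and then reapply \eqref{psichrist} at indices $k$ and $k+2$; expand $z\tilde{\cal H}_{k+2}^{(s,t)}$ and $z\tilde{\cal H}_k^{(s,t)}$ on the right via \eqref{psi3term} at time $t$. Subtraction leaves exactly
\[
(v_k^{(s,t+1)}+V_{k+2}^{(s,t)}-v_{k+2}^{(s,t)}-V_{k+1}^{(s,t)})\,\tilde{\cal H}_{k+1}^{(s,t)}(z)
+(v_k^{(s,t+1)}V_k^{(s,t)}-v_k^{(s,t)}V_{k+1}^{(s,t)})\,\tilde{\cal H}_{k-1}^{(s,t)}(z)=0,
\]
and linear independence of $\tilde{\cal H}_{k+1}^{(s,t)}$ and $\tilde{\cal H}_{k-1}^{(s,t)}$ finishes it. So your anticipated ``main obstacle'' (even/odd bookkeeping, hidden $s$-shifts) never actually arises: since Lemmas~\ref{lem11} and~\ref{lem12} are already stated uniformly in $k$, the argument is parity-free, and your alternative detour through \eqref{q+en}, \eqref{Q+Et} and the Hankel formulas is unnecessary.
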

%--------------------------------------------------------------------------------------
\begin{proof}
From Lemmas~\ref{lem11} and~\ref{lem12},
it follows that
\begin{align*}
%&
(v_{k}^{(s,t+1)}+V_{k+2}^{(s,t)}-v_{k+2}^{(s,t)}-V_{k+1}^{(s,t)})\tilde{\cal H}_{k+1}^{(s,t)}(z)
%\nonumber\\
%&\quad
+(v_{k}^{(s,t+1)}V_{k}^{(s,t)}-v_{k}^{(s,t)}V_{k+1}^{(s,t)})\tilde{\cal H}_{k-1}^{(s,t)}(z)=0.
\end{align*}
It is obvious that the symmetric Hadamard polynomials $\tilde{\cal H}_{k+1}^{(s,t)}(z)$ and $\tilde{\cal H}_{k-1}^{(s,t)}(z)$  
are linear independent.
Thus, we have \eqref{v+V} and \eqref{vV}.
\end{proof}
%
%------------------------------------ Lemma 12 --------------------------------
%
\begin{lemma}\label{lem14}
Let $u_k^{(s,t)}:=v_k^{(s,t)}(\tilde{\cal H}_{k-1}^{(s,t)}(\kappa^{(t)})/\tilde{\cal H}_{k}^{(s,t)}(\kappa^{(t)}))$ for $k=0,1,\dots,2m$.
Then, $v_k^{(s,t)}$ and $V_k^{(s,t)}$ can be expressed in terms of $u_k^{(s,t)}$ as
\begin{align}
&v_k^{(s,t)}=u_k^{(s,t)}(\kappa^{(t)}-u_{k-1}^{(s,t)}),\quad k=1,2,\dots,2m,\label{vu}\\
&V_k^{(s,t)}=-(\kappa^{(t)}-u_{k-1}^{(s,t)})
(\kappa^{(t)}-u_{k}^{(s,t)}),\quad k=1,2,\dots,2m.\label{Vu}
\end{align} 
Moreover, it holds that
\begin{align}
u_k^{(s,t+1)}(\kappa^{(t+1)}-u_{k-1}^{(s,t+1)})
=
u_{k}^{(s,t)}(\kappa^{(t)}-u_{k+1}^{(s,t)}),\quad
k=1,2,\dots,2m-1.\label{odLV}
\end{align}
\end{lemma}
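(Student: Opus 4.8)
The plan is to treat the three claimed identities in order, deriving the first two by straightforward algebraic manipulation of the three-term relations in Lemmas~\ref{lem11} and~\ref{lem12} evaluated at $z=\kappa^{(t)}$, and then obtaining \eqref{odLV} by combining them with \eqref{vV}.

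First I would establish \eqref{vu}. By definition $u_k^{(s,t)}=v_k^{(s,t)}\tilde{\cal H}_{k-1}^{(s,t)}(\kappa^{(t)})/\tilde{\cal H}_{k}^{(s,t)}(\kappa^{(t)})$, so the content of \eqref{vu} is the identity $\tilde{\cal H}_{k-1}^{(s,t)}(\kappa^{(t)})/\tilde{\cal H}_{k}^{(s,t)}(\kappa^{(t)})=\kappa^{(t)}-u_{k-1}^{(s,t)}$, which by the definition of $u_{k-1}^{(s,t)}$ is equivalent to
\[
\frac{\tilde{\cal H}_{k-1}^{(s,t)}(\kappa^{(t)})}{\tilde{\cal H}_{k}^{(s,t)}(\kappa^{(t)})}
= \kappa^{(t)} - v_{k-1}^{(s,t)}\frac{\tilde{\cal H}_{k-2}^{(s,t)}(\kappa^{(t)})}{\tilde{\cal H}_{k-1}^{(s,t)}(\kappa^{(t)})}.
\]
Multiplying through by $\tilde{\cal H}_{k-1}^{(s,t)}(\kappa^{(t)})$, this is exactly \eqref{psi3term} of Lemma~\ref{lem11} with index $k-1$ and $z=\kappa^{(t)}$, namely $\kappa^{(t)}\tilde{\cal H}_{k-1}^{(s,t)}(\kappa^{(t)})=\tilde{\cal H}_{k}^{(s,t)}(\kappa^{(t)})+v_{k-1}^{(s,t)}\tilde{\cal H}_{k-2}^{(s,t)}(\kappa^{(t)})$. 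So \eqref{vu} follows by a one-line rearrangement, using that the relevant $\tilde{\cal H}_{k}^{(s,t)}(\kappa^{(t)})$ are nonzero (this nonvanishing, inherited from the standing nonvanishing hypothesis on the Hankel determinants, should be noted). Next, for \eqref{Vu}, I would evaluate \eqref{psichrist} of Lemma~\ref{lem12} at $z=\kappa^{(t)}$: since $(\kappa^{(t)})^2-(\kappa^{(t)})^2=0$, the left-hand side vanishes, giving $\tilde{\cal H}_{k+1}^{(s,t)}(\kappa^{(t)})=-V_{k}^{(s,t)}\tilde{\cal H}_{k-1}^{(s,t)}(\kappa^{(t)})$, i.e. $V_k^{(s,t)}=-\tilde{\cal H}_{k+1}^{(s,t)}(\kappa^{(t)})/\tilde{\cal H}_{k-1}^{(s,t)}(\kappa^{(t)})$ (shifting the index appropriately). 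Then I would expand $\tilde{\cal H}_{k+1}^{(s,t)}(\kappa^{(t)})/\tilde{\cal H}_{k-1}^{(s,t)}(\kappa^{(t)})$ as a telescoping product $\bigl(\tilde{\cal H}_{k+1}/\tilde{\cal H}_{k}\bigr)\bigl(\tilde{\cal H}_{k}/\tilde{\cal H}_{k-1}\bigr)$ and substitute the ratio identity already obtained in proving \eqref{vu}, namely $\tilde{\cal H}_{j}^{(s,t)}(\kappa^{(t)})/\tilde{\cal H}_{j-1}^{(s,t)}(\kappa^{(t)})=\kappa^{(t)}-u_{j-1}^{(s,t)}$; inverting these two factors yields $V_k^{(s,t)}=-(\kappa^{(t)}-u_{k-1}^{(s,t)})(\kappa^{(t)}-u_k^{(s,t)})$, which is \eqref{Vu}.

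Finally, for \eqref{odLV}, I would start from \eqref{vV}, $v_k^{(s,t+1)}V_k^{(s,t)}=v_k^{(s,t)}V_{k+1}^{(s,t)}$, and substitute the expressions \eqref{vu} and \eqref{Vu} for every factor. The left side becomes $u_k^{(s,t+1)}(\kappa^{(t+1)}-u_{k-1}^{(s,t+1)})\cdot\bigl(-(\kappa^{(t)}-u_{k-1}^{(s,t)})(\kappa^{(t)}-u_k^{(s,t)})\bigr)$ and the right side becomes $u_k^{(s,t)}(\kappa^{(t)}-u_{k-1}^{(s,t)})\cdot\bigl(-(\kappa^{(t)}-u_{k}^{(s,t)})(\kappa^{(t)}-u_{k+1}^{(s,t)})\bigr)$. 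A common factor $-(\kappa^{(t)}-u_{k-1}^{(s,t)})(\kappa^{(t)}-u_k^{(s,t)})$ appears on both sides; cancelling it (again invoking the nonvanishing of these quantities) leaves precisely $u_k^{(s,t+1)}(\kappa^{(t+1)}-u_{k-1}^{(s,t+1)})=u_k^{(s,t)}(\kappa^{(t)}-u_{k+1}^{(s,t)})$, which is \eqref{odLV}.

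The only genuinely delicate point, rather than a routine computation, is the justification that all the quantities being divided by or cancelled — the values $\tilde{\cal H}_{k}^{(s,t)}(\kappa^{(t)})$ and hence the factors $\kappa^{(t)}-u_k^{(s,t)}$ — are nonzero, so that the rearrangements and the final cancellation are legitimate; I expect this to follow from the standing assumption that the Hankel determinants are nonzero together with the determinantal structure of $\tilde{\cal H}_{k}^{(s,t)}$, and I would state it explicitly at the start of the proof. Everything else is bookkeeping with the even/odd index convention $v_{2k}=e_k$, $v_{2k-1}=q_k$, $V_{2k-1}=Q_k$, $V_{2k}=Q_k^{(s+1,\cdot)}$, which I would carry through carefully but without belaboring.
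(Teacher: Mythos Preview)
Your proposal is correct and follows essentially the same route as the paper: evaluate Lemma~\ref{lem11} at $z=\kappa^{(t)}$ to obtain \eqref{vu}, evaluate Lemma~\ref{lem12} at $z=\kappa^{(t)}$ and telescope the resulting ratio to obtain \eqref{Vu}, then substitute both into \eqref{vV} and cancel the common factor to get \eqref{odLV}. The only slip is in your first paragraph, where the ratio equivalent to \eqref{vu} should read $\tilde{\cal H}_{k}^{(s,t)}(\kappa^{(t)})/\tilde{\cal H}_{k-1}^{(s,t)}(\kappa^{(t)})=\kappa^{(t)}-u_{k-1}^{(s,t)}$ (you have the fraction inverted); with that correction your multiplication by $\tilde{\cal H}_{k-1}^{(s,t)}(\kappa^{(t)})$ does give the three-term recurrence, and indeed you use the correct orientation of the ratio yourself when deriving \eqref{Vu}.
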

%--------------------------------------------------------------------------------------
\begin{proof}
By letting $z=\kappa^{(t)}$ in Lemma~\ref{lem11}
and by replacing $\tilde{\cal H}_{k-1}^{(s,t)}(\kappa^{(t)})/\tilde{\cal H}_{k}^{(s,t)}(\kappa^{(t)})$ with $u_k^{(s,t)}/v_k^{(s,t)}$,
we obtain
\begin{align*}
\kappa^{(t)}=\frac{v_{k+1}^{(s,t)}}{u_{k+1}^{(s,t)}}+u_{k}^{(s,t)},\quad k=0,1,\dots,2m-1,
\end{align*}
which leads to \eqref{vu}.
Moreover, Lemma~\ref{lem12} with $z=\kappa^{(t)}$ yields
\begin{align*}
V_{k}^{(s,t)}=-\frac{\tilde{\cal H}_{k+1}^{(s,t)}(\kappa^{(t)})}{\tilde{\cal H}_{k-1}^{(s,t)}(\kappa^{(t)})}.
\end{align*}
Noting that
\begin{align*} 
\frac{\tilde{\cal H}_{k+1}^{(s,t)}(\kappa^{(t)})}{\tilde{\cal H}_{k-1}^{(s,t)}(\kappa^{(t)})}=\frac{\tilde{\cal H}_{k+1}^{(s,t)}(\kappa^{(t)})}{\tilde{\cal H}_{k}^{(s,t)}(\kappa^{(t)})}\frac{\tilde{\cal H}_{k}^{(s,t)}(\kappa^{(t)})}{\tilde{\cal H}_{k-1}^{(s,t)}(\kappa^{(t)})},
\end{align*}
we derive
\begin{align}
V_k^{(s,t)}=-\frac{v_{k}^{(s,t)}v_{k+1}^{(s,t)}}{u_{k}^{(s,t)}u_{k+1}^{(s,t)}}.\label{Vvvuu}
\end{align}
Combining \eqref{vu} with \eqref{Vvvuu} gives \eqref{Vu}.\par
Equation \eqref{vV} with \eqref{vu} and \eqref{Vu} immediately leads to \eqref{odLV}. 
It is easy to check that $v_k^{(s,t)}$ in \eqref{vu} and $V_k^{(s,t)}$ in \eqref{Vu} also satisfy \eqref{v+V}.

\end{proof}
\par
Without loss of generality, we can fix $s=0$ for examining the dLV system \eqref{dLV} 
because the system does not give the evolution from $s$ to $s+1$.
Thus, the replacements $u_k^{(0,t)}=[1/(\kappa^{(t)})]u_k^{(t)}$
and $(\kappa^{(t)})^2=-1/\delta^{(t)}$ in \eqref{odLV} of Lemma~\ref{lem14}
simply generate the dLV system \eqref{dLV}.
From $e_0^{(s,t)}=0$ and $e_{m}^{(s,t)}=0$,
we also derive the boundary condition $u_0^{(t)}=0$ and $u_{2m}^{(t)}=0$ 
in the dLV system \eqref{dLV}.
\par
Lemma~\ref{lem14} with $v_{2k-1}^{(s,t)}=q_k^{(s,t)}$ and $v_{2k}^{(s,t)}=e_k^{(s,t)}$ suggests
that the dLV variables $u_k^{(t)}$ can be expressed using $q_k^{(s,t)}$ and $e_k^{(s,t)}$.
Recalling that $q_k^{(s,t)}$ and $e_k^{(s,t)}$ are expressed using the Hankel determinants $H_k^{(s,t)}$,
we obtain a theorem concerning the solution to the dLV system \eqref{dLV}.
%
%------------------------------------ Theorem 9 --------------------------------
%
\begin{theorem}\label{thm7}
Let us assume that for the infinite sequence $\{f_s^{(t)}\}_{s,t=0}^{\infty}$, 
 $H_k^{(s,t)}$ are all nonzero.
Then, the dLV variables $u_k^{(t)}$ can be expressed using the Hankel determinants $H_k^{(s,t)}$ as
\begin{align}
&u_{2k-1}^{(t)}=\frac{H_k^{(1,t)}H_{k-1}^{(0,t+1)}}{H_k^{(0,t)}H_{k-1}^{(1,t+1)}},\quad k=1,2,\dots,m,\label{u2k-1tau}\\
&u_{2k}^{(t)}=\frac{1}{\delta^{(t)}}
\frac{H_{k+1}^{(0,t)}H_{k-1}^{(1,t+1)}}{H_k^{(1,t)}H_{k}^{(0,t+1)}},\quad k=0,1,\dots,m.
\label{u2ktau}
\end{align}
\end{theorem}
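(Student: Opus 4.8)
The plan is to evaluate the symmetric Hadamard polynomials $\tilde{\cal H}_k^{(s,t)}(z)$ at $z=\kappa^{(t)}$ in closed form and then substitute the result into the definition $u_k^{(s,t)}=v_k^{(s,t)}\bigl(\tilde{\cal H}_{k-1}^{(s,t)}(\kappa^{(t)})/\tilde{\cal H}_k^{(s,t)}(\kappa^{(t)})\bigr)$ from Lemma~\ref{lem14}. Since the dLV system \eqref{dLV} carries no evolution in $s$, I first set $s=0$ and recall from the discussion following Lemma~\ref{lem14} that $u_k^{(t)}=\kappa^{(t)}u_k^{(0,t)}$ and $(\kappa^{(t)})^2=-1/\delta^{(t)}$. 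By the definition \eqref{psiphi} of the symmetric Hadamard polynomials together with \eqref{kappamu},
\begin{align*}
\tilde{\cal H}_{2k}^{(0,t)}(\kappa^{(t)})={\cal H}_k^{(0,t)}(\mu^{(t)}),\qquad
\tilde{\cal H}_{2k+1}^{(0,t)}(\kappa^{(t)})=\kappa^{(t)}\,{\cal H}_k^{(1,t)}(\mu^{(t)}),
\end{align*}
so the whole computation reduces to evaluating the Hadamard polynomials ${\cal H}_k^{(s,t)}(z)$ at $z=\mu^{(t)}$. Recalling also that $v_{2k-1}^{(0,t)}=q_k^{(0,t)}$ and $v_{2k}^{(0,t)}=e_k^{(0,t)}$ (Lemma~\ref{lem11}), and that $q_k^{(0,t)}$, $e_k^{(0,t)}$ are the Hankel-determinant ratios \eqref{qkst}, \eqref{ekst}, the one remaining ingredient is a Hankel-determinant formula for ${\cal H}_k^{(s,t)}(\mu^{(t)})$.

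That formula is the key step. Putting $z=\mu^{(t)}$ in the Christoffel transformation \eqref{Christoffelt} of Lemma~\ref{lem4} makes the left-hand side vanish, so ${\cal H}_k^{(s,t)}(\mu^{(t)})=-Q_k^{(s,t)}{\cal H}_{k-1}^{(s,t)}(\mu^{(t)})$; iterating down to ${\cal H}_0^{(s,t)}(\mu^{(t)})=1$ gives ${\cal H}_k^{(s,t)}(\mu^{(t)})=(-1)^k\prod_{j=1}^{k}Q_j^{(s,t)}$. Substituting the expression \eqref{Qkst} for $Q_j^{(s,t)}$ and using $H_0^{(s,t)}=1$, the product telescopes to
\begin{align*}
{\cal H}_k^{(s,t)}(\mu^{(t)})=(-1)^k\,\frac{H_k^{(s,t+1)}}{H_k^{(s,t)}},\qquad k=0,1,\dots,m.
\end{align*}

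Finally I would substitute and collect. For the odd index, $u_{2k-1}^{(0,t)}=q_k^{(0,t)}\,{\cal H}_{k-1}^{(0,t)}(\mu^{(t)})/\bigl(\kappa^{(t)}{\cal H}_{k-1}^{(1,t)}(\mu^{(t)})\bigr)$; inserting \eqref{qkst} and the displayed evaluation, the factors $(-1)^{k-1}$ cancel and the Hankel determinants $H_{k-1}^{(0,t)}$, $H_{k-1}^{(1,t)}$ drop out pairwise, leaving $u_{2k-1}^{(t)}=\kappa^{(t)}u_{2k-1}^{(0,t)}=H_k^{(1,t)}H_{k-1}^{(0,t+1)}/\bigl(H_k^{(0,t)}H_{k-1}^{(1,t+1)}\bigr)$, which is \eqref{u2k-1tau}. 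For the even index, $u_{2k}^{(0,t)}=e_k^{(0,t)}\kappa^{(t)}{\cal H}_{k-1}^{(1,t)}(\mu^{(t)})/{\cal H}_k^{(0,t)}(\mu^{(t)})$; inserting \eqref{ekst} and the evaluation leaves a single surviving sign $-1$ and, after the same pairwise cancellations, $u_{2k}^{(t)}=\kappa^{(t)}u_{2k}^{(0,t)}=-(\kappa^{(t)})^2 H_{k+1}^{(0,t)}H_{k-1}^{(1,t+1)}/\bigl(H_k^{(1,t)}H_k^{(0,t+1)}\bigr)$, which becomes \eqref{u2ktau} upon using $(\kappa^{(t)})^2=-1/\delta^{(t)}$. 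The extreme instances are automatic: $H_{-1}^{(s,t)}=0$ forces $u_0^{(t)}=0$, and $H_{m+1}^{(0,t)}=0$ (Proposition~\ref{prop1}) forces $u_{2m}^{(t)}=0$, matching the boundary conditions in \eqref{dLV}. The only genuinely non-routine point is the telescoping evaluation of ${\cal H}_k^{(s,t)}(\mu^{(t)})$; once it is in hand, everything else is bookkeeping of signs and determinant cancellations.
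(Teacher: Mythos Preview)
Your proof is correct and follows the same overall architecture as the paper's: reduce $\tilde{\cal H}_k^{(0,t)}(\kappa^{(t)})$ to ${\cal H}_k^{(s,t)}(\mu^{(t)})$ via \eqref{psiphi}, establish the closed form ${\cal H}_k^{(s,t)}(\mu^{(t)})=(-1)^k H_k^{(s,t+1)}/H_k^{(s,t)}$, and substitute together with \eqref{qkst}, \eqref{ekst}. The one point of genuine difference is how that closed form is obtained. The paper reaches back to the determinant identity \eqref{Tkzjacobitproof} buried inside the proof of Lemma~\ref{lem2}: at $z=\mu^{(t)}$ the last column of that determinant collapses to $(1,0,\dots,0)^\top$, and cofactor expansion gives $H_k^{(s,t)}(\mu^{(t)})=(-1)^k H_k^{(s,t+1)}$ in one stroke. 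You instead set $z=\mu^{(t)}$ in the Christoffel relation \eqref{Christoffelt}, obtain the recursion ${\cal H}_k^{(s,t)}(\mu^{(t)})=-Q_k^{(s,t)}{\cal H}_{k-1}^{(s,t)}(\mu^{(t)})$, and telescope the product $\prod_j Q_j^{(s,t)}$ using \eqref{Qkst}. Your route has the mild advantage of citing only the \emph{statement} of Lemma~\ref{lem4} rather than an intermediate formula from another proof; the paper's route is a bit more direct at the determinant level. Either way the remaining substitutions and sign/cancellation bookkeeping are identical to the paper's.
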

%-----------------------------------------------------------------------------------
\begin{proof}
By taking into account that 
$u_{2k-1}^{(s,t)}=v_{2k-1}^{(s,t)}(\tilde{\cal H}_{2k-2}^{(s,t)}(\kappa^{(t)})/\tilde{\cal H}_{2k-1}^{(s,t)}(\kappa^{(t)}))$
in Lemma~\ref{lem14} and by using
%$\tilde{\cal H}_{2k-2}^{(s,t)}(\kappa^{(t)})={\cal H}_{k-1}^{(s,t)}(\mu^{(t)})$  
%and $\tilde{\cal H}_{2k-1}^{(s,t)}(\kappa^{(t)})=\kappa^{(t)}{\cal H}_{k-1}^{(s+1,t)}(\mu^{(t)})$ in 
\eqref{psiphi},
we derive
\begin{align}\label{u2k-1q}
u_{2k-1}^{(s,t)}=q_k^{(s,t)}\frac{{\cal H}_{k-1}^{(s,t)}(\mu^{(t)})}{\kappa^{(t)}{\cal H}_{k-1}^{(s+1,t)}(\mu^{(t)})}.
\end{align}
Since \eqref{Tkzjacobitproof} with $z=\mu^{(t)}$ becomes $H_k^{(s,t)}(\mu^{(t)})=(-1)^kH_k^{(s,t+1)}$, 
we see that ${\cal H}_{k-1}^{(s,t)}(\mu^{(t)})=H_{k-1}^{(s,t)}(\mu^{(t)})/H_{k-1}^{(s,t)}=(-1)^kH_{k-1}^{(s,t+1)}/H_{k-1}^{(s,t)}$.
Thus, by combining it with \eqref{u2k-1q}, we obtain
\begin{align}\label{temp128}
u_{2k-1}^{(s,t)}=\frac{q_k^{(s,t)}}{\kappa^{(t)}}
\frac{H_{k-1}^{(s,t+1)}H_{k-1}^{(s+1,t)}}{H_{k-1}^{(s,t)}H_{k-1}^{(s+1,t+1)}}.
\end{align}
Considering $q_k^{(s,t)}$ in \eqref{qkst}, we can rewrite \eqref{temp128} as
\begin{align*}
&u_{2k-1}^{(s,t)}=\frac{1}{\kappa^{(t)}}
\frac{H_k^{(s+1,t)}H_{k-1}^{(s,t+1)}}{H_k^{(s,t)}H_{k-1}^{(s+1,t+1)}},\quad k=1,2,\dots,m.
\end{align*}
Noting that $u_{2k-1}^{(t)}=\kappa^{(t)}u_{2k-1}^{(0,t)}$, we have \eqref{u2k-1tau}.
\par
Similarly, it follows that
\begin{align*}
u_{2k}^{(s,t)}
&=e_k^{(s,t)}\frac{\kappa^{(t)}{\cal H}_{k-1}^{(s+1,t)}(\mu^{(t)})}{{\cal H}_{k}^{(s,t)}(\mu^{(t)})}\nonumber\\
&=-e_k^{(s,t)}\kappa^{(t)}
\frac{H_{k-1}^{(s+1,t+1)}H_k^{(s,t)}}
{H_{k-1}^{(s+1,t)}H_k^{(s,t+1)}}.
\end{align*}
By combining $e_k^{(s,t)}$ in \eqref{ekst} with this,
we obtain 
\begin{align*}
u_{2k}^{(s,t)}=-\kappa^{(t)}
\frac{H_{k+1}^{(s,t)}H_{k-1}^{(s+1,t+1)}}{H_k^{(s+1,t)}H_{k}^{(s,t+1)}},\quad k=0,1,\dots,m.
\end{align*}
Therefore, the specialization $u_{2k}^{(t)}=\kappa^{(t)}u_{2k}^{(0,t)}$ yields \eqref{u2ktau}.

\end{proof}
\par
Using the substitution $\tilde{H}_k^{(s,t)}:=(\delta^{(0)}\delta^{(1)}\cdots\delta^{(t-1)})^{k}H_k^{(s,t)}$,
the resulting solution is equivalent to that shown in \cite{Iwasaki2002,Spiridonov1997}.
Although the discrete-time variable $s$ does not appear in the dLV system \eqref{dLV},
it plays a key role in expressing the determinant solution to the dLV system \eqref{dLV}.
It is emphasized that the determinant solution to the dLV system \eqref{dLV} is given similar to the cases of the autonomous dToda equation \eqref{adToda} and the non-autonomous dToda equation \eqref{nadToda} from the viewpoint of the infinite sequence $\{f_s^{(t)}\}_{s,t=0}^{\infty}$ involving two types of discrete-time variables $s$ and $t$.
\par
Theorem~\ref{thm7} with Lemma~\ref{lem10} enables us to asymptotically analyze the dLV system \eqref{dLV}.
%
%------------------------------------ Theorem 10 --------------------------------
%
\begin{theorem}\label{thm8}
Let us assume that for the infinite sequence $\{f_s^{(t)}\}_{s,t=0}^{\infty}$,  $H_k^{(s,t)}$ are all nonzero.
Moreover, let $\lambda_1,\lambda_2,\dots,\lambda_m$ be constants such that 
$|\lambda_1-\mu^{(t)}|>|\lambda_2-\mu^{(t)}|>\dots>|\lambda_m-\mu^{(t)}|$
and the limit $\delta^{(t)}$ exists.
Then, it holds that 
\begin{align}
&\lim_{t\rightarrow\infty}
u_{2k-1}^{(t)}=\lambda_k,\quad k=1,2,\dots,m,\label{limu2k-1tau}
\\
&\lim_{t\rightarrow\infty}
u_{2k}^{(t)}=0,\quad k=1,2,\dots,m-1.\label{limu2ktau}
\end{align}
\end{theorem}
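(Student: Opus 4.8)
The plan is to follow the template already used in the proofs of Theorems~\ref{thm5} and~\ref{thm6}: substitute the determinant expressions \eqref{u2k-1tau} and \eqref{u2ktau} of Theorem~\ref{thm7} into the quantities to be estimated, and then replace each Hankel determinant by its asymptotic expansion \eqref{asympexpantaukt} of Lemma~\ref{lem10}. The decisive structural fact is that the ``product part'' $\prod_{i=0}^{t}\prod_{j=1}^{\ell}(\lambda_j-\mu^{(i)})$ of \eqref{asympexpantaukt} telescopes across the four determinants: for $u_{2k-1}^{(t)}$ it cancels completely between numerator and denominator, while for $u_{2k}^{(t)}$ it leaves the geometrically small remainder $\prod_{i=0}^{t}(\lambda_{k+1}-\mu^{(i)})/(\lambda_k-\mu^{(i)})$, exactly as in the estimate of $e_k^{(s,t)}$ in the proof of Theorem~\ref{thm5}. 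Throughout I shall use that the identifications $(\kappa^{(t)})^{2}=\mu^{(t)}$ and $(\kappa^{(t)})^{2}=-1/\delta^{(t)}$ of Section~\ref{sec:6} give $\mu^{(t)}=-1/\delta^{(t)}$, so the hypothesis that $\delta^{(t)}$ converges forces $\mu^{(t)}\to\mu^{\ast}:=-1/\delta^{\ast}$; together with the gap hypothesis this lets the exponents $\varrho_j<1$ of Lemma~\ref{lem10} be used, so every correction factor $1+O(\varrho_j^{t})$ tends to $1$ as $t\to\infty$.

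For \eqref{limu2k-1tau} I would substitute \eqref{asympexpantaukt} into \eqref{u2k-1tau}; the products cancel and the leading coefficients combine into $\check{c}_{k}^{(1)}\check{c}_{k-1}^{(0)}/(\check{c}_{k}^{(0)}\check{c}_{k-1}^{(1)})$. Because $\check{c}_{j}^{(s)}$ depends on $s$ only through the factor $(\lambda_1\lambda_2\cdots\lambda_j)^{s}$, this ratio equals $(\lambda_1\cdots\lambda_k)/(\lambda_1\cdots\lambda_{k-1})=\lambda_k$, and letting $t\to\infty$ gives \eqref{limu2k-1tau}. (Equivalently, one may start from the intermediate identity $u_{2k-1}^{(t)}=q_k^{(0,t)}H_{k-1}^{(0,t+1)}H_{k-1}^{(1,t)}/(H_{k-1}^{(0,t)}H_{k-1}^{(1,t+1)})$ produced inside the proof of Theorem~\ref{thm7}: by Lemma~\ref{lem10} the Hankel ratio tends to $1$, while $q_k^{(0,t)}\to\lambda_k$ by Theorem~\ref{thm5}.)

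For \eqref{limu2ktau} the same substitution in \eqref{u2ktau} yields, as $t\to\infty$,
\begin{align*}
u_{2k}^{(t)}=\frac{1}{\delta^{(t)}}\,
\frac{\check{c}_{k+1}^{(0)}\check{c}_{k-1}^{(1)}}{\check{c}_{k}^{(1)}\check{c}_{k}^{(0)}}\,
\frac{1}{\lambda_{k}-\mu^{(t+1)}}\,
\prod_{i=0}^{t}\frac{\lambda_{k+1}-\mu^{(i)}}{\lambda_{k}-\mu^{(i)}}\;\bigl(1+o(1)\bigr).
\end{align*}
Here $1/\delta^{(t)}\to1/\delta^{\ast}$, the $\check{c}$-ratio is a fixed constant, and $1/(\lambda_k-\mu^{(t+1)})$ stays bounded because $\lambda_k\ne\mu^{\ast}$ for $k\le m-1$; indeed if $\lambda_k=\mu^{\ast}$ the gap hypothesis would force $\mu^{(t)}\to\lambda_{k+1}$ as well, contradicting the distinctness of $\lambda_1,\dots,\lambda_m$ from Lemma~\ref{lem7}. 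Finally every factor of $\prod_{i=0}^{t}(\lambda_{k+1}-\mu^{(i)})/(\lambda_k-\mu^{(i)})$ has modulus less than $1$, eventually bounded away from $1$, so the product---and hence $u_{2k}^{(t)}$---tends to $0$, which is \eqref{limu2ktau}. Alternatively, from the proof of Theorem~\ref{thm7} one has $u_{2k}^{(t)}=-(\kappa^{(t)})^{2}e_k^{(0,t)}H_{k-1}^{(1,t+1)}H_{k}^{(0,t)}/(H_{k-1}^{(1,t)}H_{k}^{(0,t+1)})$ with $-(\kappa^{(t)})^{2}=1/\delta^{(t)}$ bounded, the Hankel ratio bounded by Lemma~\ref{lem10}, and $e_k^{(0,t)}=e_k^{(t)}\to0$ by Theorem~\ref{thm5}.

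The body of the argument is routine bookkeeping of telescoping products, essentially transcribed from the proofs of Theorems~\ref{thm5} and~\ref{thm6}. The two places that need a little extra care---and where I expect to spend most of the writing effort---are the boundedness claims invoked above: that the error exponents $\varrho_j$ of Lemma~\ref{lem10} are genuinely $<1$ along the sequence (which relies on the convergence of $\mu^{(t)}$, guaranteed here by the hypothesis that $\delta^{(t)}$ has a limit), and that $1/(\lambda_k-\mu^{(t+1)})$ does not blow up (which relies on $\lambda_k\ne\mu^{\ast}$, i.e.\ on the interaction between the gap hypothesis and the distinctness of the $\lambda_j$). Everything else then follows mechanically.
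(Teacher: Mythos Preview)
Your proposal is correct and follows essentially the same approach as the paper: substitute the determinant expressions \eqref{u2k-1tau}--\eqref{u2ktau} from Theorem~\ref{thm7} into the asymptotic expansion \eqref{asympexpantaukt} of Lemma~\ref{lem10}, obtain $\lim_{t\to\infty}u_{2k-1}^{(t)}=\check{c}_k^{(1)}\check{c}_{k-1}^{(0)}/(\check{c}_k^{(0)}\check{c}_{k-1}^{(1)})=\lambda_k$ and $\lim_{t\to\infty}u_{2k}^{(t)}=0$. The paper's proof is in fact far terser than yours---it simply asserts these two limits and the identity for the $\check{c}$-ratio---so your additional care about the convergence of $\mu^{(t)}$, the boundedness of $1/(\lambda_k-\mu^{(t+1)})$, and the uniform bound on the $\varrho_j$ fills in details the paper leaves implicit.
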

%-------------------------------------------------------------------------------------
\begin{proof}
From Lemma~\ref{lem10} and Theorem~\ref{thm7},
we derive
\begin{align*}
&\lim_{t\rightarrow\infty}
u_{2k-1}^{(t)}=
\frac{\check{c}_k^{(1)}\check{c}_{k-1}^{(0)}}{\check{c}_k^{(0)}\check{c}_{k-1}^{(1)}},\\
&\lim_{t\rightarrow\infty}
u_{2k}^{(t)}=0.
\end{align*}
Noting that 
$\check{c}_k^{(1)}\check{c}_{k-1}^{(0)}/(\check{c}_k^{(0)}\check{c}_{k-1}^{(1)})=\lambda_k$,
we immediately have \eqref{limu2k-1tau} and \eqref{limu2ktau}.
\end{proof} 
\par
%Let us recall here that ${\cal H}_{m}^{(s,t)}(z)=p(z)$.
%Then, it immediately follows
%\begin{align}
%&\tilde{\cal H}_{2m}^{(s,t)}(z)=p(z^2),\\
%&\tilde{\cal H}_{2m+1}^{(s,t)}(z)=zp(z^2).
%\end{align}
%Since the zeros  of $p(z)$ are $\lambda_1,\lambda_2,\dots,\lambda_m$,

The convergence theorem concerning the dLV system \eqref{dLV} is restricted in \cite{Iwasaki2002,Tsujimoto2001} to the case where $\delta^{(t)}$ is positive at every $t$. 
In addition, in \cite{Iwasaki2002,Tsujimoto2001}, 
the dLV system \eqref{dLV} is associated with matrix similarity transformations for analyzing the asymptotic behavior and the positivity of the parameter $\delta^{(t)}$ is shown to be a sufficient condition for convergence. 
Without discussing the associated similarity transformations,
Theorem~\ref{thm8}, however, claims that the convergence theorem similar to in \cite{Iwasaki2002,Tsujimoto2001} holds 
even if $\delta^{(t)}$ is negative at each $t$.
Suitable negative $\delta^{(t)}$ in the dLV system \eqref{dLV} realize the introduction of the effective shifts into the similarity transformations for accelerating the convergence \cite{Iwasaki2004,Yamamoto2010}.
Theorem~\ref{thm8} thus shows the convergence of a shifted algorithm for computing singular values of bidiagonal matrices, 
which is designed based on the dLV system \eqref{dLV}.

%-------------------------------------------------------------------------------------------------------------------
%---------------------------------------------- Sec.8 Conclusion --------------------------------------------
%-------------------------------------------------------------------------------------------------------------------
\section{Conclusion}\label{sec:7}
The main objective of this paper is to describe and understand the well-known integrable discrete systems, the discrete Toda equations (dToda) and the discrete Lotka-Volterra (dLV) system, in terms of their determinant solutions and asymptotic behavior. 
The key point is to introduce an infinite sequence with respect to two types of discrete-time variables.
\par
We first examined properties of the Hankel determinants and the Hadamard polynomials associated with the infinite sequence. 
Then, we showed a Hankel determinant expression of solutions with sufficient degrees of freedom for the autonomous dToda equation with no parameter and the non-autonomous dToda equation with parameters. 
Next, by using asymptotic expansions of the Hankel determinants, we presented asymptotic analysis of the autonomous and non-autonomous dToda equations as the discrete-time variables go to infinity. 
We finally clarified the determinant solution to the dLV system with parameters, and observed its asymptotic behavior as the discrete-time variable goes to infinity. 
In particular, asymptotic analysis of the non-autonomous dToda equation and the dLV system concluded the asymptotic convergence of the shifted qd and dLV algorithms for computing singular values of bidiagonal matrices providing a regularity condition on the Hankel determinants.

\end{document}